\documentclass[11pt]{article}

\usepackage[utf8]{inputenc}
\usepackage[russian,english]{babel}
\usepackage{amsmath}
\usepackage{amssymb}
\usepackage{amsthm}
\usepackage{fullpage}
\usepackage{mathpazo}

\numberwithin{equation}{section}
\usepackage[pagebackref,letterpaper=true,colorlinks=true,pdfpagemode=none,
linkcolor=blue,citecolor=blue,pdfstartview=FitH]{hyperref}
\usepackage[nameinlink,capitalize]{cleveref}

\newtheorem{theorem}{Theorem}[section]
\newtheorem{corollary}[theorem]{Corollary}
\newtheorem{lemma}[theorem]{Lemma}

\newtheorem{definition}[theorem]{Definition}

\newtheorem{remark}[theorem]{Remark}

% General Tricks
\newcommand{\prob}[2]{\mathop{\mathrm{Pr}}_{#1}[#2]}
\newcommand{\avg}[2]{\mathop{\textbf{E}}_{#1}[#2]}

% Complexity shortcuts

\newcommand{\AC}{\mbox{\rm AC}}

% General Math shortcuts

\newcommand{\integer}{\mathbb{Z}}

\newcommand{\Mod}[1]{\ (\text{mod}\ #1)}

%Technical shortcuts

\newcommand{\mc}[1]{\mathcal{#1}}

\newcommand{\maj}{\mathrm{Maj}}
\newcommand{\F}{\mathbb{F}}

\newcommand{\agr}{\mathrm{agr}}

\newcommand{\Maj}{\mathrm{Maj}}
\newcommand{\Ringm}[1]{\mathbb{Z}/#1 \mathbb{Z}}
\newcommand{\ACzeroparity}{{\AC^0[\oplus]}}
\newcommand{\MOD}{\mbox{\rm MOD}}

\newcommand{\srikanth}[1]{}
\newcommand{\ph}[1]{}

\author{Abhishek Bhrushundi\thanks{Department of Computer Science, Rutgers University, USA. {\tt
      abhishek.bhr@cs.rutgers.edu}. Work done while the author was visiting the Tata Institute of Fundamental Research. Research supported in part by UGC-ISF grant 6-2/2014(IC).}
\and 
   Prahladh Harsha\thanks{Tata Institute of Fundamental Research, India. {\tt
       prahladh@tifr.res.in}. Research supported in part by UGC-ISF grant 6-2/2014(IC).}
\and
Srikanth Srinivasan\thanks{Department of Mathematics, Indian Institute of Technology, Bombay, India.
  {\tt srikanth@math.iitb.ac.in}.}
}

\title{On polynomial approximations over $\Ringm{2^k}$\thanks{A
    preliminary version of this paper appeared in  {\em Proc.\
  $34$th Annual Symp.\ on Theoretical Aspects of Comp.\ Science (STACS)}, 2017~\cite{BhrushundiHS2017}.}}
\date{}

\begin{document}

\begin{titlepage}

\maketitle

\thispagestyle{empty}
\setcounter{page}{0}

\begin{abstract}
We study approximation of Boolean functions by low-degree polynomials over the ring $\Ringm{2^k}$. More precisely, given a Boolean function $F:\{0,1\}^n \rightarrow \{0,1\}$, define its $k$-lift to be $F_k:\{0,1\}^n \rightarrow \{0,2^{k-1}\}$ by $F_k(x) = 2^{k-F(x)} \pmod {2^k}$. We consider the fractional agreement (which we refer to as $\gamma_{d,k}(F)$) of $F_k$ with degree $d$ polynomials from $\Ringm{2^k}[x_1,\ldots,x_n]$. 

Our results are the following:
\begin{itemize}
\item Increasing $k$ can help: We observe that as $k$ increases, $\gamma_{d,k}(F)$ cannot decrease. We give two kinds of examples where $\gamma_{d,k}(F)$ actually increases. The first is an infinite family of functions $F$ such that $\gamma_{2d,2}(F) - \gamma_{3d-1,1}(F) \geq \Omega(1)$. The second is an infinite family of functions $F$ such that $\gamma_{d,1}(F)\leq\frac{1}{2}+o(1)$ --- as small as possible --- but $\gamma_{d,3}(F) \geq \frac{1}{2}+\Omega(1)$.

\item Increasing $k$ doesn't always help: Adapting a proof of Green~[{\em Comput.\ Complexity}, 9(1):16--38, 2000], we show that irrespective of the value of $k$, the Majority function $\Maj_n$ satisfies $$\gamma_{d,k}(\Maj_n) \leq \frac{1}{2}+\frac{O(d)}{\sqrt{n}}.$$ In other words, polynomials over $\Ringm{2^k}$ for large $k$ do not approximate the majority function any better than polynomials over $\Ringm{2}$.
\end{itemize}

We observe that the model we study subsumes the model of \emph{non-classical polynomials} %~\cite{TaoZ2012, BhowmickL2015}
in the sense that proving bounds in our model implies bounds on the agreement of non-classical polynomials with Boolean functions. In particular, our results answer questions raised by Bhowmick and Lovett~[{\em In Proc. 30th Computational Complexity Conf., pages 72–-87, 2015}] that ask whether non-classical polynomials approximate Boolean functions better than classical polynomials of the same degree.
\end{abstract}
\end{titlepage}

\section{Introduction}

Many lower bound results in circuit complexity are proved by showing that any small sized circuit in a given circuit class can be approximated by a function from a simple computational model (e.g., small depth circuits by low-degree polynomials) and subsequently showing that this is not possible for some suitable ``hard function''.

A classic case in point is the work of Razborov~\cite{Razborov1987} which shows lower bounds for $\ACzeroparity$, the class of constant depth circuits made up of AND, OR and $\oplus$ gates. Razborov shows that any small $\ACzeroparity$ circuit C can be well approximated by a low-degree multivariate polynomial $Q(x_1,\dots,x_n) \in \F_2[x_1,\dots,x_n]$ in the sense that $$\Pr_{x \sim \{0,1\}^n}\left[Q(x) \neq C(x) \right] = o(1).$$ The next step in the proof is to show that the hard function, on the other hand, does not have any such approximation. Razborov does this for a suitable symmetric function, Smolensky~\cite{Smolensky1987} for the $\MOD_q$ function (for constant odd $q$), and Szegedy~\cite{Szegedy1989} and Smolensky~\cite{Smolensky1993} for the Majority function $\Maj_n$ on $n$ bits.

Given the importance of the above lower bound, polynomial approximations in other domains and metrics have been intensely investigated and have resulted in interesting combinatorial constructions and error-correcting codes~\cite{Grolmusz2000,Efremenko2012}, learning algorithms~\cite{LinialMN1993,KlivansS2004} and more recently in the design of algorithms for combinatorial problems~\cite{Williams2014,AbboudWY2015} as well.

To describe the model of polynomial approximation considered in this paper, we first recall the Razborov~\cite{Razborov1987} model of polynomial approximation. Given a Boolean function $F:\{0,1\}^n \to \{0,1\}$ and degree $d \leq n$, Razborov considers the largest $\gamma$ such that there is a degree $d$ polynomial $Q \in \F_2[x_1,\dots,x_n]$ that has agreement at least $\gamma$ with $F$ (i.e., $\Pr_x[Q(x) = F(x) ] \geq \gamma$). Call this $\gamma_d(F)$. In this notation, Szegedy~\cite{Szegedy1989} and Smolensky's~\cite{Smolensky1993} results for the Majority function can be succinctly stated as $$\gamma_d(\Maj_n) \leq \frac12+\frac{O(d)}{\sqrt{n}}.$$ We consider a generalization of the above model to rings $\Ringm{2^k}$ in the following simple manner. To begin with, we consider the ring $\Ringm{4}$. Given a Boolean function $F$, let $F_2 : \{0,1\}^n \to \{0,2\} \subseteq \Ringm{4}$ be the $2$-lift of $F$ defined as $F_2(x) := 2^{2-F(x)}$ (i.e., $F_2(x) := 0 $ if $F(x) = 0$ and $F_2(x) := 2$ otherwise). Once again, we can define $\gamma_{d,2}(F)$ to be the largest $\gamma$ such that there exists a degree $d$ polynomial $Q_2 \in \Ringm{4}[x_1,\dots,x_n]$ that has agreement $\gamma $ with $F_2$. Note that $\gamma_{d,2}(F) \geq \gamma_d(F)$ since if, for instance, $Q(x) = x_1x_2+x_3 \in \F_2[x_1,\dots,x_n]$ has agreement $\gamma $ with $F$, then $Q_2: = 2(x_1x_2+x_3) \in \Ringm{4}[x_1,\dots,x_n]$ also has the same agreement $\gamma$ with $F_2$. Hence, proving upper bounds for $\gamma_{d,2}(F)$ is at least as hard as proving upper bounds for $\gamma_d(F)$.

More generally, we can extend these definitions to $\gamma_{d,k}(F)$, the agreement of $F_k$, the $k$-lift of $F$, defined as $F_k(x) = 2^{k-F(x)} \mod {2^k}$, with degree $d$ polynomials from $\Ringm{2^k}[x_1,\dots,x_n]$. It is not hard to show that $\gamma_{d,k+1}(F) \geq \gamma_{d,k}(F)$ and hence as $k$ increases, the problem of proving upper bounds on $\gamma_{d,k}(F)$ can only get harder.

Our motivation for this model comes from a recent work of Bhowmick and Lovett~\cite{BhowmickL2015}, who study the maximum agreement between \emph{non-classical polynomials} of degree $d$ and a Boolean function $F$, which is similar to $\gamma_{d,d}(F)$ (see \cref{sec:nonclass} for an exact translation between the above model and non-classical polynomials). In particular, non-classical polynomials of degree $d$ can be considered as a subset of the degree $d$ polynomials in $\Ringm{2^d}[x_1,\dots,x_n]$. With respect to correlation\footnote{The correlation between $F,G:\{0,1\}^n\rightarrow \Ringm{2^k}$ is defined to be $\avg{x}{\omega^{F(x)-G(x)}}$ where $\omega$ is the primitive $2^k$th root of unity in $\mathbb{C}$. If $F,G$ are $\{0,2^{k-1}\}$-valued, then this quantity is exactly $2\gamma-1$ where $\gamma$ is the agreement between $F$ and $G$. Otherwise, however, it does not measure agreement.}, Bhowmick and Lovett showed that there exist non-classical polynomials (and hence polynomials in $\Ringm{2^d}[x_1,
\dots,x_n]$) of logarithmic degree that have very good correlation with the $\Maj_n$ function. With respect to agreement, they show that low-degree non-classical polynomials can only have \emph{small} agreement with the Majority function. Their results stated in our language, imply that 
$$
\label{eq:BLineq}
\gamma_{d,d}(\Maj_n) \leq \frac12 + \frac{O(d\cdot2^d)}{\sqrt{n}}.
$$
In particular, if $d = \Omega(\log n)$, this result unfortunately does not give any non-trivial bound on the maximum agreement between non-classical polynomials of degree $d$ and the $\Maj_n$ function. Bhowmick and Lovett, however, conjectured that this result could be improved and left open the question of whether non-classical polynomials of degree $d$ can do any better than classical polynomials of the same degree in approximating the Majority function. More generally, they informally conjectured that although non-classical polynomials achieve better correlation with Boolean functions than their classical counterparts, they possibly do not approximate Boolean functions any better than classical polynomials. Our work stems from trying to answer these questions.

\subsection{Our results}
We prove the following results about agreement of Boolean functions with polynomials over the ring $\Ringm{2^k}$:

\begin{enumerate}
\item We explore whether there exist Boolean functions for which agreement can increase by increasing $k$. In particular, do there exist Boolean $F$ such that $\gamma_{d,k}(F) > \gamma_{d,1}(F)$?

It is not hard to show that this is impossible for $d=1$. Further, it can be shown that if $\gamma_{d,k}(F) > 1-\frac{1}{2^d}$, then $\gamma_{d,k(}F) = \gamma_{d,1}(F)$. Keeping this in mind, the first place where we can expect larger $k$ to show better agreement is $\gamma_{2,2}$ vs. $\gamma_{2,1}$.
Our first result shows that there are indeed separating examples in the regime.
\begin{enumerate}
\item[(a)] Fix  $d \in \mathbb{N}$ to be any power of $2$. For infinitely many $n$, there exists a Boolean function $F: \{0,1\}^n \to \{0,1\}$ such that $\gamma_{3d-1,1}(F) \leq 5/8 + o(1)$ but $\gamma_{2d,2}(F) \geq 3/4-o(1)$.
\end{enumerate}
Note that since $F$ is Boolean, $\gamma_{d,k}(F) \geq 1/2$ for any $d,k$. We then ask if there exist Boolean functions $F$ such that $\gamma_{d,1}(F)$ is more or less the trivial bound of $1/2$, while $\gamma_{d',k}(F)$ is significantly larger for $d' \le d$ and some $k > 1$. In this context, we show the following result.
\begin{enumerate}
\item[(b)] Fix any $\ell \ge 2$. For large enough $n$, there is a Boolean function $F:\{0,1\}^n \rightarrow \{0,1\}$ such that $\gamma_{2^{\ell}-1,1}(F) \leq 1/2 +o(1)$ but $\gamma_{d,3}(F) \geq 9/16 -o(1)$, for $d = 2^{\ell-1}+2^{\ell-2} \le 2^{\ell} - 1$. 
\end{enumerate}

\item We show that for $\Maj_n$, the majority function on $n$ bits, and any $d, k \in \integer^+$, $$\gamma_{d,k}(\Maj_n) \leq \frac12 + \frac{O(d)}{\sqrt{n}},$$\footnote{The constant in the $O(\cdot)$ is an absolute constant.} by adapting a proof due to Green~\cite{Green2000} of a result on the approximability of the parity function by low-degree polynomials over the ring $\Ringm{p^k}$ for prime $p \neq 2$. 

\end{enumerate}
Coupled with the observation that the class of polynomials over rings $\Ringm{2^k}$ subsumes the class of non-classical polynomials, part $(b)$ of the first result provides a counterexample to an informal conjecture of Bhowmick and Lovett~\cite{BhowmickL2015} that, for any Boolean function $F$, non-classical polynomials of degree $d$ do not approximate $F$ any better than classical polynomials of the same degree, and the second result confirms their conjecture that non-classical polynomials do not approximate the Majority function any better than classical polynomials.  

\subsection{Organisation}
We start with some preliminaries in \cref{sec:prelims}.  In \cref{sec:examples}, we show some separation results.  Next, in \cref{sec:smol}, we prove upper bounds for $\gamma_{d,k}(\Maj_n)$. Finally, in \cref{sec:nonclass}, we discuss how our model relates to non-classical polynomials, answering questions raised by Bhowmick and Lovett.
\section{Preliminaries}
\label{sec:prelims}

For $x \in \{0,1\}^n$, $|x|$ denotes the Hamming weight of $x$, and for $i \ge 0$, $|x|_i$ is the $(i+1)^{\mbox{th}}$ least significant bit of $|x|$ in base $2$. For $d\in \mathbb{N}$, we use $\{0,1\}^n_{\leq d}$ (resp. $\{0,1\}^n_{=d}$) to denote the set of elements in $\{0,1\}^n$ of Hamming weight at most $d$ (resp. exactly $d$). We use $\mc{F}_n$ to denote the collection of all Boolean functions defined on $\{0,1\}^n$.

\subsection{Elementary symmetric polynomials}
Recall that for $t \ge 1$, the elementary symmetric polynomial of degree $t$ over $\mathbb{F}_2$, $S_t(x_1, \ldots, x_n)$, is defined as $S_t(x_1,\ldots,x_n) = \bigoplus_{1 \le a_1<\ldots<a_t \le n} x_{a_1}\ldots x_{a_t}$.
Here $\oplus$ denotes addition modulo two. This may be interpreted as
\begin{equation}
\label{eqn:symdef2}
S_t(x_1,\ldots,x_n) = \binom{|x|}{t} \mbox{ mod } 2.
\end{equation}

A direct consequence of Lucas's theorem (see, e.g.,~\cite[Section 1.2.6, Ex. 10]{Knuth-tacopI}) and~\cref{eqn:symdef2} is the following:
\begin{lemma}
 \label{fac:elemsym1}
 For every $\ell \ge 0$,  $S_{2^\ell}(x) = |x|_\ell$. More generally, $S_t(x) = \prod_{i} |x|_i$ where  the product runs over all $i\geq 0$ such that the $(i+1)^{\mbox{th}}$ least significant bit of the binary expansion of $t$ is $1$.
\end{lemma}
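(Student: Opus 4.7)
The plan is to start from the identity $S_t(x) = \binom{|x|}{t} \bmod 2$ provided by~\cref{eqn:symdef2}, and then evaluate the binomial coefficient modulo $2$ using Lucas's theorem. Recall that Lucas's theorem, applied with $p = 2$, states that for nonnegative integers $m, n$ with binary expansions $m = \sum_i m_i 2^i$ and $n = \sum_i n_i 2^i$,
\[
\binom{m}{n} \equiv \prod_i \binom{m_i}{n_i} \pmod{2}.
\]
Instantiating with $m = |x|$ (whose binary digits are precisely $|x|_i$ by definition) and $n = t$ (with binary digits $t_i$), this gives $S_t(x) = \prod_i \binom{|x|_i}{t_i} \bmod 2$.

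Next, I would simplify the product factor-by-factor. Since each $|x|_i, t_i \in \{0,1\}$, the binomial $\binom{|x|_i}{t_i}$ equals $1$ whenever $t_i = 0$ (regardless of the value of $|x|_i$), and equals $|x|_i$ whenever $t_i = 1$. Dropping the trivial factors, the product collapses to $\prod_{i : t_i = 1} |x|_i$, which is exactly the claimed general expression for $S_t(x)$. The first statement is an immediate special case: when $t = 2^\ell$, the binary expansion of $t$ has a single $1$ in position $\ell$, so the product contains the single factor $|x|_\ell$, yielding $S_{2^\ell}(x) = |x|_\ell$.

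There is no real obstacle here; both the structural identity and Lucas's theorem do all the work. The only mild subtlety is to remember that equality is taking place in $\F_2$, so the ``product'' of the bits $|x|_i$ is really their AND, and the identity must be read as an equality of $\F_2$-valued functions on $\{0,1\}^n$.
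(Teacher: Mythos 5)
Your proof is correct and takes essentially the same approach as the paper, which states the lemma as a direct consequence of~\cref{eqn:symdef2} and Lucas's theorem; you have simply spelled out the factor-by-factor simplification of $\prod_i \binom{|x|_i}{t_i} \bmod 2$ that the paper leaves implicit.
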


The following result follows from the work of Green and Tao~\cite[Theorem 11.3]{GreenT2009}, who build upon the ideas of Alon and Beigel ~\cite{AlonB2001}.
\begin{theorem}[Green-Tao~\cite{GreenT2009}, Alon-Beigel~\cite{AlonB2001}]
\label{thm:alonbeig}
Fix $\ell \ge 0$. Then, for every multilinear polynomial $P \in \F_2[x_1,\ldots,x_n]$ of degree at most $2^\ell -1$, we have $\Pr_{x \sim \{0,1\}^n}[S_{2^\ell}(x) = P(x)] \le 1/2 + o(1)$.
\end{theorem}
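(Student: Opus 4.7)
I would prove the equivalent bias bound
\begin{equation*}
\beta(P) \;:=\; \left|\mathbb{E}_x\left[(-1)^{P(x) + S_{2^\ell}(x)}\right]\right| \;=\; o(1),
\end{equation*}
using iterated $\F_2$-derivatives, i.e., a Gowers $U^{2^\ell}$-norm analysis. The discrete derivative $\partial_h F(x) := F(x+h) + F(x)$ strictly decreases the degree of an $\F_2$-polynomial; in particular, for $P$ of degree at most $2^\ell - 1$, the $2^\ell$-fold iterated derivative $\partial_{h_1}\cdots\partial_{h_{2^\ell}} P$ vanishes identically. Applying Cauchy-Schwarz $2^\ell$ times in the usual Gowers-norm manner then gives
\begin{equation*}
\beta(P)^{2^{2^\ell}} \;\le\; \mathbb{E}_{h_1, \ldots, h_{2^\ell}}\left[(-1)^{D(h_1, \ldots, h_{2^\ell})}\right],\qquad D \;:=\; \partial_{h_1}\cdots\partial_{h_{2^\ell}} S_{2^\ell},
\end{equation*}
in which $D$ is a polynomial in the $h_i$'s alone and the right-hand side is independent of $P$.

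The second step is to write $D$ down explicitly. Unfolding $S_{2^\ell}(x) = \sum_{|T| = 2^\ell}\prod_{i \in T} x_i$ and tracking how each application of $\partial_{h_r}$ replaces one variable in a degree-$2^\ell$ monomial by one coordinate of $h_r$, one obtains
\begin{equation*}
D(h_1, \ldots, h_{2^\ell}) \;=\; \sum_{\substack{(j_1, \ldots, j_{2^\ell}) \in [n]^{2^\ell} \\ \text{all distinct}}} (h_1)_{j_1}(h_2)_{j_2}\cdots(h_{2^\ell})_{j_{2^\ell}}.
\end{equation*}
This is the ``off-diagonal'' part of the product $|h_1|\cdot|h_2|\cdots|h_{2^\ell}|$ in $\F_2$, and it is multilinear in each $h_r$ separately.

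The main obstacle is to show $\mathbb{E}_h[(-1)^D] = o(1)$ as $n \to \infty$ with $\ell$ fixed. I would freeze $h_1, \ldots, h_{2^\ell - 1}$ and view $D$ as a linear form $\langle c, h_{2^\ell}\rangle$ in the last variable, where $c_j = c_j(h_1, \ldots, h_{2^\ell - 1})$ is the corresponding distinct-tuple sum with the index $j$ omitted. The inner expectation over $h_{2^\ell}$ then collapses to $\mathbf{1}[c \equiv 0]$, so it suffices to show $\Pr_{h_1, \ldots, h_{2^\ell - 1}}[c \equiv 0] = o(1)$. The equations $c_j = 0$ for all $j \in [n]$ form a large overdetermined system of polynomial identities in the $(2^\ell - 1)n$ entries of the frozen $h_r$'s, which can be shown by an inductive counting argument to admit only an exponentially small-measure set of solutions (for $\ell = 1$, for example, an elementary check shows $c \equiv 0$ forces $h_1 \in \{0^n, 1^n\}$). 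Combining this with \cref{fac:elemsym1} identifying $S_{2^\ell}$ with $|x|_\ell$ then yields the claimed agreement bound $1/2 + o(1)$.
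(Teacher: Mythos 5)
The paper does not supply its own proof of this statement; it cites Green--Tao and Alon--Beigel. So I am judging your argument on its own terms. Your first three steps are sound: passing to the bias $\beta(P)$, the Gowers--Cauchy--Schwarz reduction $\beta(P)^{2^{2^\ell}} \le \mathbb{E}_{h_1,\ldots,h_{2^\ell}}[(-1)^{D}]$ using $\partial_{h_1}\cdots\partial_{h_{2^\ell}}P = 0$ for $\deg P \le 2^\ell-1$, and the explicit permanent-type formula $D = \sum_{\text{distinct}\ (j_1,\ldots,j_{2^\ell})}\prod_r (h_r)_{j_r}$ are all correct (I checked the formula for $m=1,2$ and the $\ell=1$ computation $c_j = |h_1| \oplus (h_1)_j$). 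This is indeed the right paradigm and is in the spirit of the Green--Tao route, as opposed to the Ramsey-theoretic route of Alon--Beigel that the paper itself uses elsewhere.

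The gap is in the final step, which is where the real content lies. You need $\Pr_{h_1,\ldots,h_{2^\ell-1}}[c\equiv 0] = o(1)$, i.e., $\|(-1)^{S_{2^\ell}}\|_{U^{2^\ell}} = o(1)$, and you assert this "can be shown by an inductive counting argument" without giving one. The heuristic you offer is also misleading: the system $c_j=0$, $j\in[n]$, has $n$ equations in $(2^\ell-1)n$ unknowns, so for $\ell\ge 2$ it is \emph{under}determined, not overdetermined, and over $\F_2$ a crude variable count gives no control anyway. Your only worked case is $\ell=1$, where the system is linear and exactly determined; for $\ell\ge 2$ the $c_j$ are degree-$(2^\ell-1)$ multilinear forms and the analysis is genuinely different. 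There is no obvious recursion either: the natural identity $c_j = D^{(n-1)}_{2^\ell-1}\bigl((h_1)_{[n]\setminus j},\ldots\bigr)$ lands you on the order-$(2^\ell-1)$ version of the problem, and $2^\ell-1$ is not a power of two, so the induction hypothesis on $\ell$ does not apply. In effect you have reduced the theorem to the statement that the Gowers $U^{2^\ell}$ norm of $S_{2^\ell}$ vanishes, which is essentially what Green--Tao's cited Theorem 11.3 establishes, so the reduction has not made the problem easier; the missing lemma is the theorem.
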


\cref{thm:alonbeig} has a nice corollary:
\begin{corollary}
\label{cor:folk}
 For every fixed $\ell \ge 0$, the functions $\{S_{2^i}(x)\}_{0\le i \le \ell}$ are almost balanced and almost uncorrelated, i.e.
 \begin{itemize}
  \item $\forall\ 0\le i \le \ell$, $|\Pr[S_{2^i}(x) = 0] - \Pr[S_{2^i}(x) = 1]| = o(1)$
  \item $\forall\ a_0, \ldots, a_\ell \in \{0,1\}$, $|\Pr\left[\bigwedge_{0\le i \le \ell} \left(S_{2^i}(x) = a_i\right)\right] - \frac{1}{2^{\ell+1}}| = o(1)$.
 \end{itemize}
\end{corollary}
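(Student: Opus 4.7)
Both parts will follow from \cref{thm:alonbeig}. For the balance condition, fix any $0 \le i \le \ell$ and invoke \cref{thm:alonbeig} with $i$ playing the role of $\ell$ there, taking the approximating polynomial $P$ to be each of the two constants $0$ and $1$ (both of degree $\le 2^i - 1$). The theorem then yields $\Pr[S_{2^i}(x) = b] \le \tfrac{1}{2} + o(1)$ for each $b \in \{0,1\}$; since these two probabilities sum to $1$, each equals $\tfrac{1}{2} \pm o(1)$, so their difference is $o(1)$.

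For the uncorrelation condition, I would use the standard Fourier-type expansion of the joint indicator. Using $\tfrac{1}{2}\bigl(1 + (-1)^{S_{2^i}(x) + a_i}\bigr)$ as the indicator of the event $\{S_{2^i}(x) = a_i\}$ and multiplying these out for $i = 0, \ldots, \ell$ yields, upon taking expectation,
\[
\Pr\left[\bigwedge_{0 \le i \le \ell} S_{2^i}(x) = a_i\right] = \frac{1}{2^{\ell+1}} \sum_{T \subseteq \{0, \ldots, \ell\}} (-1)^{\sum_{i \in T} a_i} \cdot \avg{x}{(-1)^{\bigoplus_{i \in T} S_{2^i}(x)}}.
\]
The $T = \emptyset$ term contributes exactly $1/2^{\ell+1}$, so it suffices to show that the character sum in each of the remaining $2^{\ell+1} - 1 = O(1)$ terms is $o(1)$.

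To bound a single such character sum for non-empty $T$, let $t^* = \max T$ and split $\bigoplus_{i \in T} S_{2^i}(x) = S_{2^{t^*}}(x) \oplus Q(x)$, where $Q(x) := \bigoplus_{i \in T,\, i < t^*} S_{2^i}(x)$. Since each $S_{2^i}$ is a multilinear $\F_2$-polynomial of degree exactly $2^i$, $Q$ has degree at most $\max_{i < t^*} 2^i \le 2^{t^*-1} \le 2^{t^*} - 1$ (with $Q \equiv 0$ in the degenerate case $t^* = 0$). Applying \cref{thm:alonbeig} with parameter $t^*$ to both $Q$ and $Q \oplus 1$ shows that $\Pr[S_{2^{t^*}}(x) = Q(x)] = \tfrac{1}{2} \pm o(1)$, and therefore $\avg{x}{(-1)^{S_{2^{t^*}}(x) \oplus Q(x)}} = 2\Pr[S_{2^{t^*}}(x) = Q(x)] - 1 = o(1)$. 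Summing these $O(1)$ contributions yields the claimed $o(1)$ deviation from $1/2^{\ell+1}$. The only mild subtlety is verifying the degree bound on $Q$; this step crucially uses that $\ell$ (and hence $t^*$) is fixed, so that $2^{t^*-1} \le 2^{t^*}-1$ remains within the range where \cref{thm:alonbeig} applies.
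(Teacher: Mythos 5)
Your proof is correct, and it matches what the paper intends: \cref{cor:folk} is stated as an immediate consequence of \cref{thm:alonbeig} with no proof given, and your derivation (balance from applying \cref{thm:alonbeig} to the two constant polynomials, uncorrelation via the Fourier expansion of the joint indicator followed by peeling off the top-degree $S_{2^{t^*}}$ and absorbing the lower-order terms into the approximating polynomial of degree $\le 2^{t^*}-1$) is precisely the standard argument. The degree bookkeeping, including the degenerate case $t^*=0$, is handled correctly, and since $\ell$ is fixed the $O(1)$ many character sums each contribute $o(1)$, giving the stated bound.
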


Combining~\cref{cor:folk} with~\cref{fac:elemsym1}, we get another useful fact:
\begin{lemma}
\label{fac:folk}
Let $x$ be uniformly distributed over $\{0,1\}^n$. Then, for every fixed $r \ge 1$, the random variables $\{|x|_i\}_{0\le i \le r-1}$ are almost uniform and almost $r$-wise independent i.e.
\begin{itemize}
 \item $\forall\ 0 \le i \le r-1$, $|\Pr[|x|_i = 0] - \Pr[|x|_i = 1]| = o(1)$.
 \item $\forall\ (a_0,\ldots,a_{r-1}) \in \{0,1\}^{r}$, $| \Pr[(|x|_0,\ldots,|x|_{r-1}) = (a_0,\ldots,a_{r-1})] - \frac{1}{2^{r}} | = o(1)$. 
\end{itemize} 
\end{lemma}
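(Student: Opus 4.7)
The plan is to observe that this lemma is essentially an immediate translation of Corollary~\ref{cor:folk} through the identification provided by Lemma~\ref{fac:elemsym1}. No genuinely new calculation should be required; the work has already been done in the preceding two results, and all that remains is a bookkeeping step to rewrite them in terms of the bit random variables $|x|_i$ instead of the elementary symmetric polynomials $S_{2^i}(x)$.

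Concretely, the first step is to invoke Lemma~\ref{fac:elemsym1} which gives the pointwise equality $S_{2^i}(x) = |x|_i$ for every $x \in \{0,1\}^n$ and every $0 \le i \le r-1$. This is not merely an equality in distribution but literally as functions on the cube, so the joint random variable $(|x|_0, \ldots, |x|_{r-1})$ induced by the uniform measure on $\{0,1\}^n$ coincides exactly with $(S_{2^0}(x), \ldots, S_{2^{r-1}}(x))$.

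Next, I apply Corollary~\ref{cor:folk} with $\ell := r-1$. Its first bullet states that each $S_{2^i}(x)$ is almost balanced, which under the above identification is precisely the first bullet of the present lemma, namely $|\Pr[|x|_i = 0] - \Pr[|x|_i = 1]| = o(1)$ for every $0 \le i \le r-1$. Its second bullet states that for every $(a_0, \ldots, a_{r-1}) \in \{0,1\}^r$, the probability of the joint event $\bigwedge_{i} (S_{2^i}(x) = a_i)$ is within $o(1)$ of $2^{-r}$; again substituting $S_{2^i}(x) = |x|_i$ yields the second bullet of the lemma as stated.

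There is no real obstacle; the only conceivable subtlety is to check that the constants $r$ here and $\ell+1$ in Corollary~\ref{cor:folk} match up correctly (they do, with $r = \ell+1$), and that ``almost $r$-wise independence'' is being used in the sense of the joint distribution being within $o(1)$ of uniform in total variation — which is exactly the formulation given by the second bullet of Corollary~\ref{cor:folk}. Hence the proof collapses to a two-line chain: apply Lemma~\ref{fac:elemsym1} to rewrite $|x|_i$ as $S_{2^i}(x)$, then quote Corollary~\ref{cor:folk}.
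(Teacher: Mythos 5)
Your proof is correct and is exactly the argument the paper intends: the paper introduces this lemma with the phrase ``Combining Corollary~\ref{cor:folk} with Lemma~\ref{fac:elemsym1}, we get another useful fact,'' and your two-step substitution $|x|_i = S_{2^i}(x)$ followed by an application of Corollary~\ref{cor:folk} with $\ell = r-1$ is precisely that combination.
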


\subsection{Boolean functions and polynomials over $\Ringm{2^k}$}

Given an $F\in \mc{F}_n$ and $k \ge 1$, we define the \emph{$k$-lift of $F$} to be the function $F_k:\{0,1\}^n \rightarrow \Ringm{2^{k}}$ defined as follows. For any $x\in \{0,1\}^n$,
\[
F_k(x) = \left\{
\begin{array}{cc}
0 & \text{if $F(x) = 0$,}\\
2^{k-1} & \text{otherwise.}
\end{array}\right.
\]

For $d\in \mathbb{N}$ and $k \ge 1$, $\mc{P}_{d,k}$ will denote the set of multilinear polynomials of degree at most $d$ over the ring $\mathbb{Z}/2^k\mathbb{Z}$. 

For functions $F,G:D \rightarrow R$ for some finite domain $D$ and range $R$, the \emph{agreement} between $F$ and $G$, denoted by $\agr(F,G)$, is defined to be the fraction of inputs where they agree, i.e., 
\[
\agr(F,G) = \prob{x\sim D}{F(x) = G(x)}.
\]

We will consider how well multilinear polynomials of degree $d$ can approximate Boolean functions in the above sense. More precisely, for any Boolean function $F\in \mc{F}_n$, we define 
\[
\gamma_{d,k}(F) = \max_{Q\in \mc{P}_{d,k}} \agr(F_k,Q).
\]

Following~\cite{Gopalan2008}, we call a set $I\subseteq \{0,1\}^n$ an \emph{interpolating set}\footnote{This is also called a \emph{hitting set} in the literature.} for $\mc{P}_{d,k}$ if the only polynomial $P\in \mc{P}_{d,k}$ that vanishes at all points in $I$ is zero everywhere. Formally, for any $P\in \mc{P}_{d,k}$,
\[
(\forall x\in I\ \  P(x) = 0) \Rightarrow (\forall y \in \{0,1\}^n\ \  P(y) = 0).
\]

We now state a number of standard facts regarding Boolean functions and multilinear polynomials over $\Ringm{2^k}$. The omitted proofs are either easy or well-known.

Unless mentioned otherwise, let $n,d,k$ be any integers satisfying $n\geq 1, d\geq 0, k\geq 1$.

\begin{lemma}
\label{fac:polys}
Any polynomial $Q\in \mc{P}_{d,k}$ satisfies the following:
\begin{enumerate}
\item (Schwartz-Zippel) If $Q$ is non-zero, then $\prob{x\sim \{0,1\}^n}{Q(x) \neq 0} \geq \frac{1}{2^d}$. 
\item $Q$ is the zero polynomial iff $Q(x) =0$ for all $x\in \{0,1\}^n$.
\item (M\"{o}bius Inversion) Say $Q(x) = \sum_{|S|\leq d}c_S x_S$, where $c_S\in \Ringm{2^k}$ and $x_S$ denotes $\prod_{i\in S}x_i$. Then, $c_S = \sum_{T\subseteq S}(-1)^{|S|-|T|}Q(1_T)$ where $1_T\in \{0,1\}^n$ is the characteristic vector of $T$.
\item ($\{0,1\}^n_{\leq d}$ is an interpolating set) $Q$ vanishes at all points in $\{0,1\}^n$ iff $Q$ vanishes at all points of $\{0,1\}^n_{\leq d}$. By shifting the origin to any point of $\{0,1\}^n$, the same is true of \emph{any} Hamming ball of radius $d$ in $\{0,1\}^n$.
\end{enumerate}
\end{lemma}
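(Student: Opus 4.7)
The plan is to prove parts (3), (2), and (4) first — they all follow from a single Möbius-inversion identity — and then handle part (1), the Schwartz-Zippel bound, by induction on the number of variables. For part (3), observe that for any $T \subseteq [n]$, $Q(1_T) = \sum_{|S|\leq d} c_S \prod_{i \in S} (1_T)_i = \sum_{S \subseteq T} c_S$, since $(1_T)_i = 1$ iff $i \in T$; Möbius inversion over the Boolean lattice then gives $c_S = \sum_{T \subseteq S} (-1)^{|S|-|T|} Q(1_T)$. Part (2) is immediate: the forward direction is trivial, and if $Q$ vanishes on all of $\{0,1\}^n$ then every $c_S$ is zero by (3). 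For part (4), the Möbius formula for $c_S$ uses only values $Q(1_T)$ with $|T| \leq |S| \leq d$, so vanishing on $\{0,1\}^n_{\leq d}$ already forces every coefficient to zero. To extend this to a Hamming ball of radius $d$ around an arbitrary $y \in \{0,1\}^n$, apply the affine change of variables $x_i \mapsto y_i + (1 - 2y_i)z_i$, i.e., $x = y \oplus z$; this substitution is linear in each $z_i$, so starting from a multilinear polynomial $Q$ in $x$ we obtain a multilinear polynomial $Q'$ in $z$ of the same degree that vanishes on $\{0,1\}^n_{\leq d}$ iff $Q$ vanishes on the ball.

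For part (1), I would induct on $n$, with the base case $n=0$ trivial (a non-zero constant is non-zero on the one Boolean point). For the inductive step, decompose $Q(x_1, \ldots, x_n) = A(x') + x_n B(x')$ with $x' = (x_1, \ldots, x_{n-1})$, $\deg A \leq d$, $\deg B \leq d - 1$, and split on the value of $x_n$:
\[
\prob{x \sim \{0,1\}^n}{Q(x) \neq 0} = \tfrac{1}{2}\left(\prob{x'}{A(x') \neq 0} + \prob{x'}{(A+B)(x') \neq 0}\right).
\]
If $A \equiv 0$ then $B \not\equiv 0$, and induction applied to the degree-$(d-1)$ polynomial $B$ gives $\Pr[B \neq 0] \geq 1/2^{d-1}$, so the right side is $\geq 1/2^d$. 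If $A+B \equiv 0$ then $A = -B$ has degree $\leq d-1$, and induction on $A$ gives the same bound. Otherwise both $A$ and $A+B$ are non-zero multilinear polynomials of degree $\leq d$ in $n-1$ variables, and each is non-zero with probability $\geq 1/2^d$ by induction, so their sum is $\geq 2/2^d$.

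The most delicate step is the Schwartz-Zippel bound, since $\Ringm{2^k}$ is not a field and the textbook univariate root-counting argument no longer applies. The workaround above succeeds because we only need to reason about \emph{multilinear} polynomials evaluated on the Boolean cube: the $x_n$-split reduces the problem to either a lower-dimensional instance where both branches are non-zero polynomials of degree $\leq d$, or a degenerate branch ($A \equiv 0$ or $A+B \equiv 0$) where the effective degree drops by one. In each case the drop compensates exactly for the factor of $1/2$ introduced by conditioning on $x_n$, and no univariate root-counting over the ring is ever needed.
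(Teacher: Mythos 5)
Your proof is correct, and it takes a genuinely different route for part (1). The paper writes $Q = 2^\ell \cdot Q'$ with $2^\ell$ the largest power of $2$ dividing the $\gcd$ of the coefficients, observes that reducing $Q'$ modulo $2$ yields a non-zero $\F_2$-polynomial whose nonzero set is contained in that of $Q$ (indeed $2^\ell Q'(x)\equiv 0 \pmod{2^k}$ would force $Q'(x)\equiv 0\pmod 2$), and then invokes the standard Schwartz--Zippel bound over $\F_2$ as a black box. You instead prove the bound directly over $\Ringm{2^k}$ by induction on $n$, writing $Q = A + x_n B$ and splitting into the cases $A\equiv 0$, $A+B\equiv 0$, and neither; the case analysis is sound, and the drop in degree in the two degenerate branches exactly compensates the factor of $1/2$ from conditioning on $x_n$. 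Your approach is self-contained (it reproves the $\F_2$ case en route rather than citing it) and makes manifest that the argument never requires cancellation of nonzero ring elements, so the presence of zero divisors is never an obstacle; the paper's route is shorter given the $\F_2$ lemma as background and isolates the structural reason the bound matches the field case, namely that only the mod-$2$ content of $Q$ controls its zero set. For parts (2)--(4) the paper omits proofs as standard and only records the dependencies (2 from 1, 4 from 3), whereas you derive 2 from 3; this is a harmless permutation, and your Möbius computation, the observation that only values on $\{0,1\}^n_{\le d}$ enter the inversion formula, and the affine substitution $x\mapsto y\oplus z$ (linear in each $z_i$, hence degree-preserving) are exactly the intended arguments.
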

\begin{proof}
Point 1: Write $Q$ as $Q(x) = 2^\ell \cdot Q'(x)$, where $\ell < k$ is the largest power of $2$ that divides the GCD of the coefficients of $Q$. Projecting $Q'$ to a non-zero polynomial over $\Ringm{2}$ by dropping all its coefficients modulo $2$ and applying the standard Schwartz-Zippel lemma over $\Ringm{2}$ completes the proof.

Point 2 follows from point 1, and point 4 from point 3.
\end{proof}

\begin{lemma}
\label{fac:gamma}
Fix any $F\in \mc{F}_n$.
\begin{enumerate}
\item $\gamma_{d,k}(F) \geq \frac{1}{2}$.
\item  $\gamma_{d,k+1}(F)\geq \gamma_{d,k}(F)$. 
\item $\gamma_{d,k}(F) > 1-\frac{1}{2^d} \Rightarrow \gamma_{d,k}(F) = \gamma_{d,1}(F)$. 
\item $\gamma_{1,k}(F) = \gamma_{1,1}(F)$. 
\end{enumerate}
\end{lemma}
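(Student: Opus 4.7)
The plan is to prove the four parts in order, with part 3 as the main step and part 4 falling out as a corollary of parts 1--3.

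For part 1, I would use the constant polynomials $0, 2^{k-1} \in \mc{P}_{d,k}$: whichever value of $F_k$ is more frequent witnesses agreement at least $1/2$. For part 2, I would introduce the \emph{doubling map} $\phi:\Ringm{2^k}\to\Ringm{2^{k+1}}$ defined by $\phi(a)=2a \bmod 2^{k+1}$; a quick check shows $\phi$ is a well-defined group homomorphism satisfying $\phi(0)=0$ and $\phi(2^{k-1})=2^k$, i.e., $\phi(F_k(x))=F_{k+1}(x)$ for every $x$. Given any $Q=\sum_S c_S x_S \in \mc{P}_{d,k}$ with $\agr(Q,F_k)=\gamma$, I would set $Q' := \sum_S \phi(c_S)\, x_S \in \mc{P}_{d,k+1}$; additivity of $\phi$ then gives $Q'(x)=\phi(Q(x))$, so whenever $Q(x)=F_k(x)$ one has $Q'(x)=\phi(F_k(x))=F_{k+1}(x)$, establishing $\gamma_{d,k+1}(F)\geq\gamma$.

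Part 3 is the heart of the lemma. Suppose $Q\in\mc{P}_{d,k}$ achieves $\agr(Q,F_k) > 1 - 1/2^d$ and let $S$ be its agreement set, so $|S|>(1-1/2^d)2^n$. Since $F_k$ is $\{0,2^{k-1}\}$-valued, $Q(x)\equiv 0 \pmod{2^{k-1}}$ for every $x\in S$. I would let $\bar Q\in\mc{P}_{d,k-1}$ be obtained by reducing each coefficient of $Q$ modulo $2^{k-1}$; then $\bar Q$ vanishes on more than $(1-1/2^d)2^n$ points. By \cref{fac:polys}(1), any nonzero element of $\mc{P}_{d,k-1}$ is nonzero on at least $2^{n-d}$ inputs, forcing $\bar Q$ to be identically zero. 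Consequently every coefficient of $Q$ is divisible by $2^{k-1}$, and I can write $Q=2^{k-1}R$ with $R\in\mc{P}_{d,1}$ (its coefficients being the quotients $c_S/2^{k-1}$ read modulo $2$). Then $Q$ is $\{0,2^{k-1}\}$-valued on all of $\{0,1\}^n$, $Q(x)=F_k(x)$ iff $R(x)=F(x)$, and hence $\gamma_{d,1}(F)\geq \agr(R,F) = \agr(Q,F_k)$. Combined with iterated part 2 (which gives the reverse inequality), this yields $\gamma_{d,k}(F)=\gamma_{d,1}(F)$.

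Part 4 then follows in two cases. If $\gamma_{1,k}(F) > 1/2 = 1 - 1/2^{1}$, part 3 directly gives $\gamma_{1,k}(F) = \gamma_{1,1}(F)$. Otherwise part 1 forces $\gamma_{1,k}(F) = 1/2$, and iterated part 2 together with part 1 sandwich $1/2 \leq \gamma_{1,1}(F) \leq \gamma_{1,k}(F) = 1/2$, again yielding equality. The main obstacle is part 3: the nontrivial move is upgrading the approximation hypothesis ``$Q(x)\in\{0,2^{k-1}\}$ on the large set $S$'' to the structural conclusion that $Q$ is $\{0,2^{k-1}\}$-valued on \emph{all} of $\{0,1\}^n$, which collapses $\mc{P}_{d,k}$-approximation to $\F_2$-approximation via the Schwartz--Zippel bound of \cref{fac:polys}(1).
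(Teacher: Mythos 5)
Your proposal is correct and follows essentially the same route as the paper's proof: part 1 via constant polynomials, part 2 via multiplying coefficients by $2$ (your doubling map $\phi$ just makes explicit what the paper calls ``$2\cdot P$ interpreted naturally in $\mc{P}_{d,k+1}$''), part 3 via reducing $Q$ modulo $2^{k-1}$, invoking the Schwartz--Zippel bound of \cref{fac:polys}(1) to kill $\bar Q$, and factoring $Q = 2^{k-1}R$, and part 4 by combining the others. (Incidentally, you avoid the paper's typo of writing $Q'\in\mc{P}_{d,1}$ where $\mc{P}_{d,k-1}$ is meant; your $\bar Q\in\mc{P}_{d,k-1}$ is the correct reading.)
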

\begin{proof}
Point 1 is trivial since there is a constant polynomial that has agreement at least $\frac{1}{2}$ with $F_k$. 

Point $2$: Say $P\in \mc{P}_{d,k}$ has agreement $\alpha$ with $F_k$. Then, $2\cdot P$ (interpreted naturally as a polynomial in $\mc{P}_{d,k+1}$) has agreement $\alpha$ with $F_{k+1}$. 

For point $3$, consider a polynomial $Q\in \mc{P}_{d,k}$ that achieves the maximum agreement $\alpha > 1-\frac{1}{2^d}$ with $F_k$. Let $Q'\in \mc{P}_{d,1}$ be the polynomial obtained from $Q$ by dropping all its co-efficients modulo $2^{k-1}$. Note that for any $x$, $Q(x)\in \{0,2^{k-1}\}$ implies that $Q'(x) = 0$ (in the ring $\Ringm{2^{k-1}}$). Hence, the probability that $Q'$ is zero is at least $\alpha > 1-\frac{1}{2^d}$.~\cref{fac:polys} point 1 implies that $Q'$ must be the zero polynomial. Equivalently, all of the coefficients of $Q$ are divisible by $2^{k-1}$ and hence $Q$ can be naturally identified with $2^{k-1}\cdot Q''$ for some $Q''\in \Ringm{2}[x_1,\ldots,x_n]$. It is easy to check that $\agr(Q'',F_1) = \alpha$ and hence we have $\gamma_{d,1}(F) \geq \gamma_{d,k}(F)$. On the other hand, from point $2$, we already know that $\gamma_{d,k}(F) \leq \gamma_{d,1}(F)$. Hence we are done.

Point $4$ follows from points $1$ and $3$.

\end{proof}

\section{Some separation results}
\label{sec:examples}

\subsection{Symmetric functions as separating examples}
\label{subsec:sym}
We know from \cref{thm:alonbeig} that, for every fixed $\ell \ge 2$, $\gamma_{2^\ell-1,1}(S_{2^\ell}) \le \frac{1}{2} + o(1)$. In contrast, the main result of this section shows that
\begin{theorem}
 \label{theorem:elem-agr}
 For every fixed $\ell \ge 2$, $\gamma_{d,3}(S_{2^\ell}) \ge \frac{9}{16} - o(1)$, where $d = 2^{\ell-1} + 2^{\ell-2}$.
\end{theorem}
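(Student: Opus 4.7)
My plan is to exhibit the explicit symmetric polynomial
\[
Q(x) \;=\; S_{2^{\ell-1}+2^{\ell-2}}(x) \;\in\; \mc{P}_{d,3},
\]
which has degree exactly $d$ and evaluates to $\binom{|x|}{d}$ in $\Ringm{8}$. Since $(S_{2^\ell})_3(x) = 4|x|_\ell$ by \cref{fac:elemsym1}, the goal becomes
\[
\Pr_{x \sim \{0,1\}^n}\!\left[\binom{|x|}{d} \equiv 4|x|_\ell \pmod 8\right] \;\ge\; \tfrac{9}{16}-o(1).
\]

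The first step is to reduce to a finite counting problem. By Kummer's theorem, $\binom{n}{d} \bmod 8$ is determined by $n \bmod 2^r$ for a constant $r = r(\ell)$ (taking $r = \ell+2$ suffices); the bit $n_\ell$ is trivially so determined. Combining this with \cref{fac:folk}, which says that the bits $|x|_0, \ldots, |x|_{r-1}$ are $o(1)$-close to uniform over $\{0,1\}^r$, the probability above equals $|T|/2^r + o(1)$, where
\[
T \;=\; \bigl\{\, n \in \{0, \ldots, 2^r-1\} \;:\; \tbinom{n}{d} \equiv 4 n_\ell \pmod 8 \,\bigr\}.
\]
So the theorem reduces to the purely combinatorial claim $|T|/2^r \ge 9/16$.

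For the base case $\ell = 2$ (so $d = 3$, $r = 4$), tabulating $\binom{n}{3} \bmod 8$ for $n = 0, \ldots, 15$ yields exactly
\[
T \;=\; \{0,1,2,4,6,8,10,12,14\},
\]
of size $9$, meeting the bound on the nose. For general $\ell \ge 3$, I would apply Vandermonde's identity to the decomposition $n = u + 2^{\ell-2} v$ with $0 \le u < 2^{\ell-2}$:
\[
\binom{n}{3 \cdot 2^{\ell-2}} \;=\; \sum_{i \ge 0} \binom{u}{i} \binom{2^{\ell-2} v}{3 \cdot 2^{\ell-2}-i}.
\]
The dominant term ($i=0$) is controlled via Jacobsthal-style congruences of the form $\binom{2^k a}{2^k b} \equiv \binom{a}{b} \pmod{2^{\cdot}}$, which reduce the $u = 0$ slice back to the $\ell = 2$ base case and recover density $9/16$ there. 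The remaining terms ($i \ge 1$) contribute a correction which I would then show preserves this density uniformly across the $2^{\ell-2}$ residues $u$.

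The main technical hurdle is precisely the $\ell \ge 3$ correction analysis: Jacobsthal's congruence for $p = 2$ has a smaller exponent than for odd primes, so several of the Vandermonde correction terms remain non-trivial modulo $8$ and must be tracked carefully to verify they do not disturb the count. If this self-similarity route proves unwieldy, a clean fallback is to invoke Granville's explicit formula for $\binom{n}{d} \bmod 2^3$ in terms of base-$2$ digits of $n$ and $d$ and enumerate $T$ directly via a bounded case analysis on 3-bit digit windows of $n$ around position $\ell$.
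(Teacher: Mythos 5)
Your choice of polynomial and the reduction to a finite counting problem on the low-order bits of $|x|$ both match the paper, and your $\ell=2$ tabulation of $T$ is correct. However, two issues remain. First, the claim that ``by Kummer's theorem, $\binom{n}{d}\bmod 8$ is determined by $n\bmod 2^{\ell+2}$'' is not a consequence of Kummer's theorem, which gives only the $2$-adic valuation of $\binom{n}{d}$, not the full residue (the residue is indeed periodic mod $2^{\ell+2}$, but that requires a separate periodicity argument). What you actually need -- and what does follow from Kummer -- is the weaker fact that the \emph{event} $\binom{n}{d}\equiv 4n_\ell\pmod 8$ is determined by $n\bmod 2^{\ell+2}$, since the event only asks whether the borrow count is exactly $2$ (when $n_\ell=1$) or at least $3$ (when $n_\ell=0$). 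Second, and more importantly, your argument for $\ell\ge 3$ is only a sketch: the Vandermonde/Jacobsthal correction analysis you flag as ``the main technical hurdle'' is not carried out, so as written there is no proof for general $\ell$.

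The observation that collapses the general case to a constant-size computation -- and that is missing from your write-up -- is that $d=2^{\ell-1}+2^{\ell-2}$ has exactly two nonzero binary digits, at positions $\ell-2$ and $\ell-1$. Hence when subtracting $d$ from $|x|$, no borrows occur below position $\ell-2$, and once the borrow chain leaves position $\ell-1$ it terminates at the first $1$-bit of $|x|$ and never restarts. So the agreement event $P(x)=S_{2^\ell,3}(x)$ is determined solely by $(|x|_{\ell-2},|x|_{\ell-1},|x|_\ell,|x|_{\ell+1})$. A short case analysis (the paper's \cref{lem:main}) shows it holds whenever $|x|_{\ell-2}=0$, or whenever this $4$-tuple equals $(1,0,0,0)$; by \cref{fac:folk} these disjoint events have probability $1/2-o(1)$ and $1/16-o(1)$, giving $9/16-o(1)$ uniformly in $\ell$. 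Your ``fallback'' via Granville's formula on digit windows is close in spirit, but a $3$-bit window is one bit too small, and in any case the case analysis is not performed; the Vandermonde route is considerably more machinery than the problem needs.
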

Notice that $2^{\ell-1} + 2^{\ell-2} \le 2^\ell -1$ for $\ell \ge 2$. This implies that, for $\ell \ge 2$, $S_{2^\ell}(x)$ is an example of a function $F$ for which there exist $k,d \in \mathbb{N}$ such that $\gamma_{d,1}(F) \le \frac{1}{2} + o(1)$ but $\gamma_{d',k}(F) \ge \frac{1}{2} + \Omega(1)$ for some $d' \le d$.
\begin{proof}[Proof of \cref{theorem:elem-agr}]
~\cref{fac:elemsym1} from \cref{sec:prelims} tells us that $S_{2^\ell}(x) = |x|_\ell$.
Thus, $S_{2^\ell,3}(x) \in \Ringm{8}[x_1, \ldots, x_n]$, the $3$-lift of $S_{2^\ell}(x)$, is given by
\begin{equation}
\label{eqn:lift}
 S_{2^\ell,3}(x) = 
 \begin{cases} 
 4 &\mbox{if } |x|_\ell = 1\\
 0 &\mbox{otherwise}
 \end{cases}
\end{equation}

Fix $d$ to be $2^{\ell-1} + 2^{\ell-2}$. Consider the polynomial $P(x) = \sum_{T \in {[n] \choose d}} \prod_{i \in T} x_i$ in $\Ringm{8}[x_1, \ldots, x_n]$.
To prove the theorem, it suffices to show that
$$ \Pr_{x \sim \{0,1\}^n }[P(x) = S_{2^\ell,3}(x)] \ge \frac{1}{2} + \frac{1}{16} - o(1).$$

Clearly, $P(x) = \dbinom{\lvert x\rvert}{d} \mbox{ mod } 8$, and
\begin{equation}
\label{eqn:def}
 P(x) = \begin{cases}
         0  & \mbox{if } 8 \mid \dbinom{|x|}{d}\\
         4  & \mbox{if } 4 \mid \dbinom{|x|}{d} \mbox{ but } 8 \nmid \dbinom{|x|}{d}\\
        \end{cases}
\end{equation}

The following theorem due to Kummer (see, e.g.,~\cite[Section 1.2.6, Ex. 11]{Knuth-tacopI}) determines the largest power of a prime that divides a binomial coefficient.
\begin{theorem}[Kummer]
 \label{thm:kummer}
 Let $p$ be a prime and $N,M \in \mathbb{N}$ such that $N \ge M$. Suppose $r$ is the largest integer such that $p^r \mid {N \choose M}$. Then $r$ is equal to the number of borrows required when subtracting $M$ from $N$ in base $p$.
\end{theorem}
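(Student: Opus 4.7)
The plan is to derive Kummer's theorem from Legendre's formula for the $p$-adic valuation of a factorial, together with a digit-sum accounting that interprets the resulting expression as a count of carries/borrows. Denote by $v_p(n)$ the largest integer $r$ with $p^r \mid n$, and by $s_p(n)$ the sum of the base-$p$ digits of $n$. Legendre's formula states
\[
v_p(n!) \;=\; \sum_{i \ge 1} \left\lfloor \frac{n}{p^i} \right\rfloor \;=\; \frac{n - s_p(n)}{p-1},
\]
which I would prove by the standard counting argument (each integer $\le n$ contributes $v_p$ to the sum via the layers $p^i$) and by telescoping the identity $n = \sum_i d_i p^i$ through $\lfloor n/p^i \rfloor$.

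Applying this to the three factorials in $\binom{N}{M} = N!/(M!(N-M)!)$ yields
\[
v_p\!\binom{N}{M} \;=\; \frac{s_p(M) + s_p(N-M) - s_p(N)}{p-1}.
\]
Thus it remains to show that this nonnegative integer equals the number of carries that occur when one adds $M$ and $N-M$ in base $p$ using the schoolbook algorithm; since subtracting $M$ from $N$ in base $p$ requires a borrow in exactly those column positions where the addition of $M$ and $N-M$ produces a carry, this will establish the theorem in the form stated.

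To carry out the digit-sum accounting, I would set $A = M$, $B = N-M$, write $A = \sum_i a_i p^i$ and $B = \sum_i b_i p^i$, and track the carries $c_0, c_1, \ldots$ (with $c_0 = 0$) produced by the addition, so that the $i$th digit of $N = A+B$ is $n_i = a_i + b_i + c_i - p\,c_{i+1}$, with $c_{i+1} \in \{0,1\}$ being $1$ exactly when $a_i + b_i + c_i \ge p$. Summing this identity over all $i$ and collapsing the telescoping $\sum_i(c_i - p\,c_{i+1})$ gives
\[
s_p(N) \;=\; s_p(A) + s_p(B) - (p-1)\sum_{i \ge 1} c_i,
\]
so $\sum_{i \ge 1} c_i = (s_p(M) + s_p(N-M) - s_p(N))/(p-1)$, which is precisely the expression for $v_p\binom{N}{M}$ obtained above.

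The main obstacle is really just being careful with the telescoping step and the case analysis defining $c_{i+1}$; both pieces are elementary, but one must verify that $c_i \in \{0,1\}$ throughout (proved by induction on $i$, using that $a_i, b_i \le p-1$ and $c_i \le 1$ imply $a_i + b_i + c_i \le 2p-1 < 2p$). Legendre's formula itself is a minor subtask whose proof I would either cite or dispatch in one line via the identity $\lfloor n/p^i \rfloor = (n - (n \bmod p^i))/p^i$ and summing geometrically. Once the digit-sum identity is in hand, matching carries-in-addition with borrows-in-subtraction is immediate from the definition of the respective algorithms.
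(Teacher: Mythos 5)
The paper does not prove Kummer's theorem at all; it cites the statement to Knuth's \emph{The Art of Computer Programming}, vol.\ 1, Section 1.2.6, Ex.\ 11, and uses it as a black box. Your argument---Legendre's formula $v_p(n!) = \big(n - s_p(n)\big)/(p-1)$, the consequent identity $v_p\binom{N}{M} = \big(s_p(M) + s_p(N-M) - s_p(N)\big)/(p-1)$, and the telescoping digit-sum computation showing this equals the number of carries in the base-$p$ addition $M + (N-M)$, which are exactly the borrows in the base-$p$ subtraction $N - M$---is the standard derivation of Kummer's theorem and is correct as sketched; the only caveat worth noting is that one should verify $c_i \in \{0,1\}$ by the induction you describe, and check the carry/borrow correspondence by the matching recursions, both of which you have flagged.
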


Let $B(x)$ be the number of borrows required when subtracting $d$ from $|x|$. Rewriting~\cref{eqn:def} in terms of $B(x)$ using Kummer's theorem, we get
\begin{equation}
\label{eqn:poly}
 P(x) = \begin{cases}
         4  & \mbox{if } B(x) =2\\
         0  & \mbox{if } B(x) \ge 3
        \end{cases}
\end{equation}
We will need the following lemma.
\begin{lemma}\label{lem:main}
$P(x) = S_{2^\ell,3}(x)$ if either
\begin{enumerate}
\item $|x|_{\ell-2} = 0$, or
\item $(|x|_{\ell-2},|x|_{\ell-1},|x|_{\ell},|x|_{\ell+1}) = (1,0,0,0)$.
\end{enumerate}
\end{lemma}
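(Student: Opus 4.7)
The plan is to use Kummer's theorem to convert the identity $P(x) = S_{2^\ell,3}(x)$ into a simple combinatorial statement about borrows in binary subtraction. By Kummer's theorem and equation~(3.6), $P(x) = 4$ iff exactly two borrows arise when subtracting $d$ from $|x|$ in base $2$, and $P(x) = 0$ iff at least three borrows arise; meanwhile $S_{2^\ell,3}(x) = 4\cdot|x|_\ell$. Since $d = 2^{\ell-1}+2^{\ell-2}$ has $1$-bits only at positions $\ell-2$ and $\ell-1$, no borrow can originate at positions $0,\ldots,\ell-3$, so the borrow cascade begins at bit $\ell-2$ and can be read off entirely from $|x|_{\ell-2}, |x|_{\ell-1}, |x|_\ell, |x|_{\ell+1}$.

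For case (1), with $|x|_{\ell-2}=0$, I would trace the cascade bit by bit: position $\ell-2$ computes $0-1-0 = -1$, producing the first borrow; position $\ell-1$ computes $|x|_{\ell-1}-1-1$, which is negative regardless of $|x|_{\ell-1}$, producing the second borrow; position $\ell$ computes $|x|_\ell - 0 - 1$, which is $-1$ (third borrow) when $|x|_\ell = 0$ and $0$ (cascade halts) when $|x|_\ell = 1$; no further borrows are possible because from position $\ell+1$ onward the subtrahend is $0$ and, in the halting branch, so is the incoming borrow. Hence $B(x)=2$ and $P(x)=4$ exactly when $|x|_\ell=1$, while $B(x)\ge 3$ and $P(x)=0$ exactly when $|x|_\ell=0$; in either subcase $P(x) = 4|x|_\ell = S_{2^\ell,3}(x)$.

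For case (2), with $(|x|_{\ell-2},|x|_{\ell-1},|x|_\ell,|x|_{\ell+1}) = (1,0,0,0)$, position $\ell-2$ computes $1-1-0 = 0$ and emits no borrow; but then positions $\ell-1$, $\ell$, and $\ell+1$ compute $0-1-0$, $0-0-1$, and $0-0-1$ respectively, generating three consecutive borrows. So $B(x) \ge 3$, giving $P(x)=0$, which matches $S_{2^\ell,3}(x)=0$ since $|x|_\ell=0$.

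No step presents a real obstacle once Kummer's theorem is in hand; the proof is bookkeeping. The only point requiring explicit attention is ruling out extraneous borrows from positions below $\ell-2$, which is immediate because the corresponding bits of $d$ vanish and the initial incoming borrow at bit $0$ is zero. The casework is deliberately chosen so that the hypotheses force $P(x)\in\{0,4\}$: outside these two cases $B(x)$ can equal $0$ or $1$, in which case $P(x)$ takes odd or $\equiv 2\pmod 4$ values and cannot coincide with $S_{2^\ell,3}(x)$, explaining why the lemma is stated as a sufficient condition.
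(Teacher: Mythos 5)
Your proof is correct and takes essentially the same approach as the paper: apply Kummer's theorem, note that no borrows originate below bit $\ell-2$ since $d = 2^{\ell-1}+2^{\ell-2}$ vanishes there, and trace the borrow cascade through bits $\ell-2,\ldots,\ell+1$ with a case split on $|x|_\ell$. Your closing remark about the excluded cases is slightly imprecise (when $(|x|_{\ell-2},|x|_{\ell-1},|x|_\ell,|x|_{\ell+1})=(1,0,0,1)$ one has $B(x)=2$ and $P(x)=4$, not an odd or $\equiv 2\pmod 4$ value, yet still $P(x)\neq S_{2^\ell,3}(x)=0$), but this commentary is not part of the proof of the lemma and does not affect its correctness.
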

\begin{proof}
Since $d = 2^{\ell-1} + 2^{\ell-2}$, all the bits of $d$ except $d_{\ell-1}$ and $d_{\ell-2}$ are zero. Thus, when subtracting $d$ from $|x|$, no borrows are required by the bits $|x|_i$, $0 \le i \le \ell-3$.

Using the above observation, the reader can verify that when $(|x|_{\ell-2},|x|_{\ell-1},|x|_{\ell},|x|_{\ell+1}) = (1,0,0,0)$ the number of borrows required is at least $3$ i.e. $B(x) \ge 3$, which in turn implies that $P(x) = 0$. Since $|x|_\ell = 0$, $S_{2^\ell,3}(x) = 0$. This proves the second part of the lemma.

To prove the first part, suppose $|x|_{\ell-2} = 0$. Since $d_{\ell-1}=d_{\ell-2}=1$, it follows that both $|x|_{\ell-2}$ and $|x|_{\ell-1}$ will need to borrow when subtracting $d$ from $|x|$. As argued before, no borrows are required by the bits before (i.e. less significant than) $|x|_{\ell-2}$, and thus the total number of borrows required by the bits $|x|_{i}$, ${0 \le i \le \ell-1}$, is 2.\\
Note that the bit $|x|_{\ell-1}$ borrows from $|x|_\ell$. Consider the following case analysis:
\begin{itemize}
\item Case $|x|_\ell = 1$: $|x|_\ell$ will not need to borrow since $d_\ell = 0$. In fact, none of the bits after (i.e. more significant than) $|x|_\ell$ will need to borrow, and thus $B(x) = 2$. This implies that $P(x) = 4$. We also have $S_{2^\ell,3}(x) = 4$ and hence $P(x) = S_{2^\ell,3}(x)$.
\item Case $|x|_\ell = 0$: $|x|_\ell$ will require a borrow and this means $B(x) \ge 3$. This implies that $P(x) = 0$. Since $|x|_\ell = 0$, it follows that $P(x) = S_{2^\ell,3}(x)$.
\end{itemize}
This completes the proof.
\end{proof}

By~\cref{lem:main}, we have
\begin{equation}
\label{eqn:main}
\Pr[P(x) = S_{2^\ell,3}(x)] \geq \Pr[ |x|_{\ell-2} = 0] + \Pr\left[(|x|_{\ell-2},|x|_{\ell-1},|x|_{\ell},|x|_{\ell+1}) = (1,0,0,0)\right]
\end{equation}
Using~\cref{fac:folk} from \cref{sec:prelims}, we have
\begin{eqnarray*}
&&\Pr[ |x|_{\ell-2} = 0] \ge \frac{1}{2} - o(1)\\
&&\Pr[(|x|_{\ell-2},|x|_{\ell-1},|x|_{\ell},|x|_{\ell+1}) = (1,0,0,0)] \ge \frac{1}{16} - o(1)
\end{eqnarray*}
which, together with~\cref{eqn:main}, implies
\begin{equation}
 \Pr[P(x) = S_{2^\ell,3}(x)] \ge \frac{1}{2} + \frac{1}{16} - o(1). \nonumber
\end{equation}

\end{proof}

\subsection{A separation at $k=2$}
\label{sec:quadex}

Let $d\in \mathbb{N}$ be any power of $2$. In this section, we show that there are functions $F$ for which $\gamma_{2d,2}(F) > \gamma_{3d-1,1}(F)$.

\begin{theorem}
\label{thm:quadex}
For large enough $n$, there exists a function $F\in \mc{F}_{2n}$ such that $\gamma_{2d,2}(F) \geq \frac{3}{4}-o(1)$ but $\gamma_{3d-1,1}(F) \leq \frac{5}{8}+o(1)$.
\end{theorem}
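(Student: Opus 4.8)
The plan is to build $F$ on $2n$ variables, splitting the variables into two blocks $y,z \in \{0,1\}^n$, and to define $F$ via the low-degree symmetric functions $S_d$ (where $d$ is a power of $2$, so $S_d(y) = |y|_{\log d}$ by \cref{fac:elemsym1}). The guiding intuition is the one already exploited in \cref{theorem:elem-agr}: over $\Ringm{4}$, the binomial coefficient $\binom{|x|}{2d} \bmod 4$ agrees with the $2$-lift of $S_d$ on a noticeably larger fraction of inputs than any classical degree-$(3d-1)$ polynomial can, because Kummer's theorem ties the $4$-divisibility of $\binom{|x|}{2d}$ to the number of borrows in base $2$, which in turn is controlled by just a couple of bits of $|x|$. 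So for the \emph{upper} side I want $F$ to ``look like'' $S_d$ on each block (restricted appropriately), forcing any classical polynomial to pay the Green–Tao/Alon–Beigel price of \cref{thm:alonbeig}, while for the \emph{lower} side I want the $\Ringm{4}$ polynomial $P(x) = \binom{|y|+|z|}{2d} \bmod 4$ — or a variant built from $\binom{\cdot}{d}$ on the two blocks separately — to match $F_2$ three-quarters of the time.

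The concrete steps I would carry out are: (i) define $F(y,z)$ explicitly, e.g. as a Boolean combination (AND/OR or XOR) of $S_d(y)$ and $S_d(z)$ chosen so that its $2$-lift is computed by a degree-$2d$ polynomial over $\Ringm 4$ on a $3/4$-fraction of inputs; the natural candidate is to set things up so that $F_2(y,z)$ equals $\binom{|y|}{d}\binom{|z|}{d} \bmod 4$ or $2\big(S_d(y)\oplus S_d(z)\big)$-type expressions, using that over $\Ringm 4$ one has access to $\binom{\cdot}{d}\bmod 4$ which is degree $d$ and whose low-order behaviour is governed by $|y|_{\log d-1},|y|_{\log d}$ via Kummer. (ii) Prove $\gamma_{2d,2}(F) \ge 3/4 - o(1)$: write down the explicit $\Ringm 4$ polynomial of degree $2d$, identify the ``bad'' events as conditions on a bounded number of bits $|y|_i,|z|_i$, and invoke \cref{fac:folk} to say these bits are almost uniform and almost independent, so the bad event has probability $1/4 + o(1)$. (iii) Prove $\gamma_{3d-1,1}(F) \le 5/8 + o(1)$: given any degree-$(3d-1)$ polynomial $Q$ over $\F_2$, restrict one block (say fix $z$ generically, or average over $z$) so that on the remaining block $Q$ becomes a degree-$(3d-1)$ polynomial that must approximate a function behaving like $S_{2d}$ or $S_d$ on that block; since $3d - 1 < 2d + d = 3d$ but we need a degree strictly below the relevant power-of-two threshold, the point is $3d-1 \le 2\cdot(2d)-1$ is too weak, so the restriction must be engineered so that the residual target is $S_{2d'}$ for a power of two $2d'$ with $2d'-1 \ge 3d-1$, i.e. $d' \ge (3d)/2$... this is exactly the delicate bookkeeping, and I'd use a hybrid/averaging argument: with probability $\ge 1/4$ over the fixing, $Q$ restricted is a degree-$(3d-1)$ polynomial trying to compute a shifted copy of $S_{2d}$ (degree threshold $2d$... wait, need $2d-1 \ge 3d-1$, false), so instead I partition according to the value of $S_d$ on one block and bound each piece by $1/2+o(1)$ via \cref{thm:alonbeig}, combining to get $\le \tfrac12\cdot 1 + \tfrac12\cdot(\tfrac12) + o(1) = \tfrac34$, then tighten to $5/8$ by a more careful three-way split.

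The main obstacle, and where I expect to spend the real effort, is step (iii): carefully choosing $F$ so that \emph{simultaneously} (a) its $2$-lift over $\Ringm 4$ is cheaply computable via a Kummer-type binomial polynomial of degree $2d$ to agreement $3/4$, and (b) \emph{every} classical $\F_2$-polynomial of degree $3d-1$ is provably bounded by $5/8$. Part (b) requires reducing, by restrictions/averaging, to the hard instances of \cref{thm:alonbeig} (approximating $S_{2^\ell}$ by degree $< 2^\ell$ polynomials) while keeping track of how much agreement can be ``for free'' on the easy part of the input distribution; getting the constant down to exactly $5/8$ rather than $3/4$ is what forces the two-block construction and a three-way case analysis on the relevant bits of $|y|$ and $|z|$. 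I would handle this by designing $F$ so that on $1/2$ of inputs it is outright determined (contributing $1/2$), on $1/4$ it reduces to an $S_{2^\ell}$-hard instance (contributing at most $1/8$), and on the last $1/4$ it again reduces to such an instance (another $1/8$), for a total of $5/8 + o(1)$, all quantified using \cref{fac:folk} and \cref{thm:alonbeig}.
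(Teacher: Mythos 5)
Your construction template and the easy direction are on the right track: two blocks, $S_d$ on each block, and the degree-$2d$ polynomial $\binom{|y|}{d}\binom{|z|}{d}$ over $\Ringm{4}$ whose only ``bad'' event is governed by a constant number of bits of $|y|,|z|$, handled by \cref{fac:folk}. But the function itself is underspecified in a way that matters, and the concrete candidates you name are wrong. The paper's $F$ is: $0$ when $S_d(y)=S_d(z)=0$; $S_{2d}(z)$ when $S_d(y)=1,S_d(z)=0$; $S_{2d}(y)$ when $S_d(y)=0,S_d(z)=1$; and a \emph{uniformly random} $H(y,z)$ on the remaining quadrant --- precisely the quadrant where $\binom{|y|}{d}\binom{|z|}{d}$ is odd and hence cannot be matched by any $2$-lift, so its value is free for the lower bound but must be made incompressible for the upper bound. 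Your proposed AND/OR/XOR combinations of $S_d(y)$ and $S_d(z)$ all have $\F_2$-degree at most $2d\le 3d-1$, so for them $\gamma_{3d-1,1}(F)=1$ and the upper bound fails outright. Also note that your final accounting $\frac12+\frac18+\frac18$ equals $\frac34$, not $\frac58$; the correct budget is $\frac14\cdot 1+3\cdot\frac14\cdot\frac12=\frac58$, which the zero polynomial already attains against the paper's $F$, and the whole difficulty is showing nothing beats it.

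The deeper gap is in step (iii). Restricting one block and invoking \cref{thm:alonbeig} cannot work, for the reason you yourself flag: the approximator has degree $3d-1\ge 2d$, which is at least the degree of the quadrant targets $S_{2d}(y)$ and $S_{2d}(z)$, so on each quadrant separately there \emph{is} a low-degree classical polynomial computing the target exactly and \cref{thm:alonbeig} gives nothing. The obstruction is global: a single $Q$ of degree $3d-1$ cannot stitch the four quadrants together, because the quadrant indicators $S_d(y),S_d(z)$ have degree $d$ and $d+2d>3d-1$. Making this precise is the bulk of the paper's proof: a Ramsey-theoretic symmetrization in the style of Alon--Beigel (\cref{lem:stronger}, using \cref{lem:ramsey-colour}, by induction on multidegree) reduces an arbitrary degree-$(3d-1)$ polynomial, after restricting to large sub-blocks, to an $(x,y)$-symmetric one; \cref{fac:elemsym1} and \cref{fac:folk} then collapse the problem to an agreement question on $5$ bits, whose exhaustive case analysis (\cref{lem:rel-lbd}) yields the constant $\frac58$; and a Chernoff/union bound over all $2^{(2n)^{3d}}$ polynomials shows a random $H$ defeats every $Q$ on the fourth quadrant. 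None of these ingredients appears in your plan, so as written the upper bound does not go through.
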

In particular, we see that $\gamma_{2,2}(F) > \gamma_{2,1}(F)$. This result is notable, since it shows that there is a separation at the first place where it is possible to have one (Recall that  $\gamma_{1,k}(F) = \gamma_{1,1}(F)$ for any $F \in \mathcal{F}_n$ by~\cref{fac:gamma}).

%We start with some basic preliminaries about $\F_2$-quadratics, i.e. polynomials from $\mc{P}_{2,1}$. Given $Q(x_1,\ldots,x_n)\in \mc{P}_{2,1}$, we define the \emph{rank of $Q$} to be the least $r$ such that we can write $Q$ in the form
%\begin{equation}
%\label{eq:rankdef}
%Q(x) = \bigoplus_{i=1}^r \ell_i \ell'_i \oplus L
%\end{equation}
%where $\ell_i,\ell'_i$ ($i\in [r]$) and $L$ are linear functions in $x_1,\ldots,x_n$. We use $\rk(Q)$ to denote the rank of $Q$.
%
%The following are standard facts about ranks of quadratic polynomials (see, e.g.,~\cite[Chapter 6]{LidlN1997}).
%\begin{fact}
%\label{fac:quads}
%\begin{enumerate}
%\item The rank is sub-additive: $\rk(Q_1\oplus Q_2) \leq \rk(Q_1) + \rk(Q_2)$.
%\item If the $\ell_i,\ell_i'$ ($i\in [r]$) in (\ref{eq:rankdef}) form a set of $2r$ linearly independent polynomials, then $\rk(Q) = r$.
%\item If $Q$ has rank $k$, then
%\[
%\prob{x}{Q(x)\neq 0} \geq \frac{1}{2}-\frac{1}{2^{k+1}}
%\]
%\end{enumerate}
%\end{fact}

Let us begin the proof of \cref{thm:quadex}. We first define a family of Boolean functions on $\{0,1\}^{2n}$. We denote the $2n$ variables by $x_1,\ldots,x_n$ and $y_1,\ldots,y_n$. We use $\binom{|x|}{d}$ to denote the $d$th elementary symmetric polynomial from the ring $\mathbb{Z}/4\mathbb{Z}[x_1,\ldots,x_n]$, i.e., $\binom{|x|}{d} = \sum_{S\in \binom{[n]}{d}}\prod_{i\in S}x_i$.\footnote{We distinguish between $\binom{|x|}{d}$ and $S_d(x)$ since the former is from $\mathbb{Z}/4\mathbb{Z}[x_1,\ldots,x_n]$ and latter a polynomial in $\F_2[x_1,\ldots,x_n]$.} 

We will need the following easy corollary of~\cref{thm:kummer}.
\begin{corollary}
\label{cor:kummer}
Let $d$ be a power of $2$. Then, for $N\geq d$, the highest power of $2$ dividing $\binom{N}{d}$ is equal to the highest power of $2$ dividing $\lfloor \frac{N}{d}\rfloor$. 
\end{corollary}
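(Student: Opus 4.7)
The plan is to reduce both sides of the claimed identity to the same quantity — the position, relative to $\ell$, of the lowest $1$-bit of $N$ at or above bit-position $\ell$ — where $d = 2^\ell$.

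First I would write $d = 2^\ell$. The binary representation of $d$ is then a single $1$ in position $\ell$, so the base-$2$ subtraction $N - d$ is very structured: if $N_\ell = 1$, no borrow is ever generated (we simply flip $N_\ell$ to $0$); if $N_\ell = 0$, then we must borrow, and the borrow cascades up through each consecutive $0$-bit of $N$ until it hits the first $1$-bit. Formally, letting $m \ge \ell$ be the smallest index with $N_m = 1$ (such an $m$ exists because $N \ge d = 2^\ell$), the number of borrows when subtracting $d$ from $N$ in base $2$ is exactly $m - \ell$. By Kummer's theorem (\cref{thm:kummer}) with $p=2$, this is the largest $r$ with $2^r \mid \binom{N}{d}$.

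Next I would compute $\lfloor N/d \rfloor$ directly: since $d = 2^\ell$, the integer $\lfloor N/2^\ell \rfloor$ is obtained by dropping the $\ell$ least significant bits of $N$, i.e.\ its binary expansion is $\dots N_{\ell+2}N_{\ell+1}N_\ell$. The $2$-adic valuation of this integer equals its number of trailing zeros in base $2$, which by the definition of $m$ is again $m - \ell$.

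Comparing the two computations shows that the largest power of $2$ dividing $\binom{N}{d}$ and the largest power of $2$ dividing $\lfloor N/d \rfloor$ both equal $2^{m-\ell}$, which is the desired equality. There is no real obstacle here — the whole argument is a direct application of Kummer's theorem combined with the fact that subtracting a power of $2$ in binary produces a borrow chain whose length matches the number of trailing zeros of the quotient; the only thing to be careful about is separating the case $N_\ell = 1$ (where both valuations are $0$) from $N_\ell = 0$, but both are handled uniformly by the definition of $m$.
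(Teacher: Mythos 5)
Your proof is correct, and it is exactly the argument the paper intends: the paper offers no explicit proof of this corollary, merely labeling it an ``easy corollary'' of Kummer's theorem (\cref{thm:kummer}), and your borrow-chain calculation is the natural way to carry out that deduction. The key observation you isolate --- that since $d = 2^\ell$ has a single $1$-bit, the number of borrows is precisely the length of the run of $0$-bits of $N$ starting at position $\ell$, which in turn equals the number of trailing zeros of $\lfloor N/2^\ell \rfloor$ --- is exactly what makes the corollary immediate from Kummer, with the case $N_\ell = 1$ handled uniformly by taking $m = \ell$.
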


Let $S = \{(x,y)\ |\ \binom{|x|}{d},\binom{|y|}{d} \equiv 1 \pmod{2}\}$. Given any function $H: \{0,1\}^{2n}\rightarrow \{0,1\}$, we define the Boolean function $F_H(x_1,\dots,x_n,y_1,\ldots,y_n)$ as follows: 
\[
F_H(x,y) = \left\{
\begin{array}{cc}
0 & \text{if $\binom{|x|}{d}\cdot\binom{|y|}{d} \equiv 0 \pmod{4}$},\\
1 & \text{if $\binom{|x|}{d}\cdot\binom{|y|}{d} \equiv 2 \pmod{4}$},\\
H(x,y) & \text{otherwise.}
\end{array}
\right.
\]

Define $P(x,y) = \binom{|x|}{d}\cdot \binom{|y|}{d}\in \mathbb{Z}/4\mathbb{Z}[x_1,\ldots,x_n,y_1,\ldots,y_n]$. Note that $F_H(x,y)$ is defined so that its $2$-lift agrees with $P(x,y)$ on points $(x,y)$ where $P(x,y)\in \{0,2\}$. Also~\cref{cor:kummer} implies that the following is an alternate equivalent definition of $F_H$ in terms of elementary symmetric polynomials modulo $2$.
\begin{equation}
\label{eq:altF}
F_H(x,y) = \left\{ 
\begin{array}{ll}
0 & \text{if $S_d(x) = S_d(y) = 0$,}\\
S_{2d}(y) & \text{if $S_d(x) = 1$ and $S_d(y) = 0$,}\\
S_{2d}(x) & \text{if $S_d(x) = 0$ and $S_d(y) = 1$,}\\
H(x,y) & \text{otherwise.}
\end{array}
\right.
\end{equation}

%Recall also that when $d$ is a power of $2$, $S_d(x)$ encodes the $(\log d + 1)$th least significant bit of $|x|$. Thus, the above can also be stated in terms of the bits of

We now begin the proof of \cref{thm:quadex}. First of all, let us note that for any choice of $H$, we have:

\begin{lemma}
\label{lem:gamma22ubd}
$\gamma_{2d,2}(F_H) \geq \frac{3}{4}-o(1)$.
\end{lemma}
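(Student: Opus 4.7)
The plan is to exhibit an explicit degree-$2d$ polynomial in $\Ringm{4}[x_1,\ldots,x_n,y_1,\ldots,y_n]$ that agrees with $(F_H)_2$ on a $(3/4-o(1))$-fraction of inputs regardless of the choice of $H$. The natural candidate, and the one already hinted at in the definition of $F_H$, is
\[
P(x,y) \;=\; \binom{|x|}{d} \cdot \binom{|y|}{d} \;=\; \Bigl(\sum_{S\in \binom{[n]}{d}}\prod_{i\in S}x_i\Bigr)\cdot\Bigl(\sum_{T\in \binom{[n]}{d}}\prod_{j\in T}y_j\Bigr)\in \Ringm{4}[x,y].
\]
This is a multilinear polynomial of degree exactly $2d$, so $P \in \mc{P}_{2d,2}$.

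The first step is to observe, directly from the definition of $F_H$, that $P(x,y) = (F_H)_2(x,y)$ on every input $(x,y)$ for which $P(x,y) \in \{0,2\} \pmod 4$: if $P(x,y)\equiv 0$ then $F_H=0$ and $(F_H)_2=0=P$, and if $P(x,y)\equiv 2$ then $F_H=1$ and $(F_H)_2=2=P$. Thus it suffices to lower-bound the probability that $P(x,y) \not\equiv 1,3 \pmod 4$, i.e.\ that $P(x,y)$ is even.

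The second step is to characterize when $P(x,y)$ is odd. Reducing mod $2$, $P(x,y)\equiv 1 \pmod 2$ iff both $\binom{|x|}{d}$ and $\binom{|y|}{d}$ are odd, i.e.\ iff $S_d(x)=S_d(y)=1$. Since $d$ is a power of $2$, \cref{fac:elemsym1} gives $S_d(x)=|x|_{\log_2 d}$, and \cref{fac:folk} tells us this bit is almost balanced: $\Pr[S_d(x)=1]=\tfrac12+o(1)$. Because $x$ and $y$ are independent uniform strings,
\[
\Pr[S_d(x)=1 \wedge S_d(y)=1] \;=\; \bigl(\tfrac12+o(1)\bigr)^2 \;=\; \tfrac14+o(1),
\]
so $\Pr[P(x,y)\text{ is even}] \geq \tfrac34-o(1)$.

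Combining these two steps yields $\agr((F_H)_2,P)\ge \tfrac34-o(1)$, which gives $\gamma_{2d,2}(F_H)\ge \tfrac34-o(1)$ as required. There is no real obstacle here; the only thing to be careful about is keeping the interpretation of the ``binomial'' polynomial straight: $\binom{|x|}{d}\in\Ringm{4}[x]$ is the degree-$d$ elementary symmetric polynomial viewed over $\Ringm{4}$ (not over $\F_2$), while the reduction mod $2$ used to identify odd values with $S_d(x)=1$ is just \cref{eqn:symdef2}.
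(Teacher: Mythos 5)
Your proof is correct and follows essentially the same route as the paper: exhibit $P(x,y) = \binom{|x|}{d}\binom{|y|}{d}\in\mc{P}_{2d,2}$, note that by construction $P$ agrees with $(F_H)_2$ whenever $P(x,y)$ is even, and bound the probability that $S_d(x)=S_d(y)=1$ by $\tfrac14+o(1)$ using the almost-balanced/independence facts. The paper works from the alternative form in \cref{eq:altF} and cites \cref{cor:folk} directly, while you argue from the original definition of $F_H$ and spell out the reduction mod $2$ and the independence of $x$ and $y$ more explicitly, but the two arguments are the same.
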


\begin{proof}
Consider the polynomial $P(x,y)\in \mc{P}_{2d,2}$ defined above. From~\cref{eq:altF}, it follows that the probability that $P(x,y) \neq F_{H,2}(x,y)$\footnote{$F_{H,2}$ denotes the $2$-lift of $F_H$.} is less than or equal to the probability that $S_d(x) = S_d(y) = 1$, which is $\frac{1}{4} + o(1)$ by~\cref{cor:folk}. This gives the claim.
\end{proof}

The main lemma is the following. 

\begin{lemma}
\label{lem:gamma21lbd}
Say $H:\{0,1\}^{2n}\rightarrow \{0,1\}$ is chosen uniformly at random. Then, $$\prob{H}{\gamma_{3d-1,1}(F_H) > \frac{5}{8} + o(1)} = o(1).$$
\end{lemma}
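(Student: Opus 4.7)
The plan is to decompose $\agr(Q, F_H)$ according to the four cases in the definition of $F_H$. Let $G(x, y) := S_d(x) S_{2d}(y) + S_d(y) S_{2d}(x)$ denote the deterministic function equal to $F_H$ on cases 1--3. Then
\begin{equation*}
\agr(Q, F_H) \;=\; \Pr[(Q = G) \wedge \text{cases 1--3}] \;+\; \Pr[(Q = H) \wedge \text{case 4}].
\end{equation*}
I would prove (i) with probability $1 - o(1)$ over $H$, the second term is at most $1/8 + o(1)$ uniformly over $Q \in \mc{P}_{3d-1, 1}$, and (ii) the first term is deterministically at most $1/2 + o(1)$ for every such $Q$. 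Together these yield $\agr(Q, F_H) \leq 5/8 + o(1)$ with probability $1-o(1)$ over $H$, as required.

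Claim (i) is standard. For fixed $Q$, the indicators $\mathbf{1}[Q(x, y) = H(x, y)]$ across $(x, y) \in \text{case 4}$ are i.i.d.\ $\text{Bernoulli}(1/2)$ over $H$, and $|\text{case 4}| = (1/4 + o(1)) \cdot 4^n$ by~\cref{fac:folk}. Hoeffding's inequality gives exponential concentration, and a union bound over the $2^{O(n^{3d-1})}$ polynomials in $\mc{P}_{3d-1, 1}$ (a sub-exponential count, since $d$ is constant) closes (i).

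The main work is the deterministic claim (ii). Since both $G$ and $\mathbf{1}[\text{cases 1--3}]$ depend only on $(|x|, |y|)$, I would replace $Q$ by its $S_n \times S_n$-symmetrization $\bar Q(w, w') := \mathbb{E}[Q \mid |x|=w, |y|=w'] \in [0, 1]$, a real-valued polynomial of degree $\leq 3d - 1$. Using~\cref{fac:folk} to treat the four high bits $(a, b, c, d) := (|x|_\ell, |x|_{\ell+1}, |y|_\ell, |y|_{\ell+1})$ as near-uniform and near-independent, and setting $\bar X(a, b, c, d) := \mathbb{E}[\bar Q \mid (a, b, c, d)]$, a direct manipulation gives
\begin{equation*}
\Pr[Q = G, \text{cases 1--3}] \;=\; \tfrac{1}{2} + \tfrac{1}{2}\,\mathbb{E}\bigl[\chi(2g - 1)\,\bar X\bigr] + o(1),
\end{equation*}
where $\chi = 1 - ac$ and $g = ad + bc$. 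Expanding $\chi(2g - 1) = -1 + ac + 2ad + 2bc - 2acd - 2abc$ over $\mathbb{Z}$ reduces (ii) to showing that a specific linear combination of six moments of $\bar X$ is bounded by $o(1)$.

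The main obstacle, as I anticipate it, is this last step: controlling $\bar X$'s moments against the ``forbidden'' monomials $ad, bc, acd, abc$ given the degree bound on $\bar Q$. The clean prototype is when $\bar X$ ranges over the six-dimensional span of $\{1, a, b, c, d, ac\}$---these being the only multilinear monomials $\prod_{v \in I} v$ on $\{0,1\}^4$ whose corresponding elementary-symmetric products $S_{w(I)}(\cdot)$ in $(x, y)$ have combined degree $\leq 3d - 1$; here a direct computation shows $\mathbb{E}[\chi(2g-1)\bar X] = -\bar X(0,0,0,0)/4 \leq 0$. Extending this to an arbitrary $\bar X$ arising from a degree-$\leq 3d-1$ polynomial requires carefully expanding $\bar Q$ in the binomial basis $\binom{w}{i}\binom{w'}{j}$ with $i + j \leq 3d - 1$, and using~\cref{fac:folk} together with Lucas's theorem to show that any components of $\bar X$ involving the forbidden monomials are $o(1)$ in magnitude---intuitively because producing the $ad$ or $bc$ monomials in the 4-bit projection would demand a basis element $\binom{w}{d}\binom{w'}{2d}$ or $\binom{w}{2d}\binom{w'}{d}$, both of combined degree $3d > 3d - 1$.
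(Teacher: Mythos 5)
Your decomposition into cases 1--3 (where $F_H = G := S_d(x)S_{2d}(y) \oplus S_d(y)S_{2d}(x)$) and case 4 (where $F_H = H$) is a clean way to set things up, and your claim (i) is fine: for each fixed $Q$ the case-4 indicators are i.i.d.\ Bernoulli$(1/2)$ over $H$, Hoeffding gives $\exp(-\Omega(4^n))$ concentration, and this beats the $2^{n^{O(1)}}$ union bound. The problem is your claim (ii). You assert that $\bar Q(w,w') := \mathbb{E}[Q \mid |x|=w,\ |y|=w']$ is a real-valued polynomial of degree $\le 3d-1$. That is false. Minsky--Papert symmetrization preserves the degree of the \emph{real} multilinear extension, but $Q$ here is a degree-$(3d-1)$ polynomial over $\F_2$, and the degree of its $\{0,1\}$-valued representation over $\mathbb{R}$ is completely uncontrolled. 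For instance $Q = x_1 \oplus \cdots \oplus x_n$ has $\F_2$-degree $1$ but $\bar Q(w) = (w \bmod 2)$, which has real degree $n$. Consequently there is no ``binomial basis $\binom{w}{i}\binom{w'}{j}$ with $i+j\le 3d-1$'' for $\bar Q$, and the mechanism you hope to use to kill the $ad, bc, acd, abc$ moments is not available. (A secondary version of the same issue appears already in your ``prototype'': if $\bar X$ arises from a $\{0,1\}$-valued $\F_2$-linear combination of $\{1,a,b,c,d,ac\}$, its real multilinear extension is generally \emph{not} in the real span of those six monomials --- e.g.\ $a \oplus b$ has real form $a + b - 2ab$.)

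This gap is precisely the reason the paper does not symmetrize over $\mathbb{R}$ but instead follows Alon--Beigel: one uses a hypergraph Ramsey argument (\cref{lem:ramsey-colour}) to find large variable sets $I,J$ and a restriction $\sigma$ under which $Q|_\sigma$ becomes a genuine $(x,y)$-symmetric polynomial \emph{over $\F_2$}. This is done one multidegree at a time by induction along the ordering $\preceq$ on $\mathcal{D}$ (\cref{lem:stronger}). Because the restriction also turns $F_{H}$ into a function of the same form $F_{H|_\sigma,\Phi'',\Psi''}$, the inductive structure closes. Only in the base case, where $Q$ is honestly $(x,y)$-symmetric over $\F_2$, does one invoke Lucas (\cref{fac:elemsym1}) to reduce to the bits $|x|_0,\dots,|x|_{\ell+1},|y|_0,\dots,|y|_{\ell+1}$, condition on the low-order bits, and arrive at an $\F_2$-polynomial $q''$ in the monomial set $\{1,a_1,a_2,b_1,b_2,a_1b_1\}$ --- this is the honest $\F_2$-analogue of your six-dimensional span, and it is handled by the direct case analysis of \cref{lem:rel-lbd}. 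Your high-level intuition (reduce to a constant-size problem over four high bits, use near-uniformity from \cref{fac:folk}) matches the paper's endgame, but the reduction to that endgame requires the Ramsey step you've skipped, and the symmetrization shortcut does not supply a substitute.
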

This will prove \cref{thm:quadex}. We will prove the above lemma in the following subsection.

\subsection{Proof of {\cref{lem:gamma21lbd}}}

The outline of the proof is as follows. Fix any polynomial $Q\in \F_2[x_1,\ldots,x_n,y_1,\ldots,y_n]$ of degree at most $3d-1$. We need to show that $\agr(F_H,Q)\leq \frac{5}{8}+o(1)$ for a random $H:\{0,1\}^{2n}\rightarrow\{0,1\}$. The fact that $H$ is random ensures that any $Q$ cannot agree with $H$ on significantly more than half the inputs in $S$. For inputs outside $S$, we need a more involved argument, following Alon and Beigel~\cite{AlonB2001}. We show that for any $Q$ we can find somewhat large sets $I$ and $J$ of $x$ and $y$ variables respectively such that when we set the variables outside $I\cup J$, we obtain a polynomial that is symmetric in the variables of $I\cup J$. This is a Ramsey theoretic argument \'{a}la Alon-Beigel~\cite{AlonB2001}. 

Following this argument, we only need to prove the agreement upper bound for $Q$ that is symmetric in $x$ and $y$ variables. This can be done by reduction to a constant-sized problem, as we show below. A careful computation to solve the constant-sized problem will finish the proof.

We begin with some notation that will be useful in the proof. Throughout, we work with disjoint sets of $x$-variables and $y$-variables of equal size and consider polynomials over these variables. Let the $x$-variables be $\{x_1,\ldots,x_n\}$ and the $y$-variables be $\{y_1,\ldots,y_n\}$. For $I,J\subseteq [n]$, the set of $\F_2$-polynomials $Q$ over the variables $\{x_i\ |\ i\in I\}$ and $\{y_j\ |\ j\in J\}$ is denoted $\F_2[x_I,y_J]$. Similarly, we use $Q\in \F_2[x_I]$ to denote the fact that $Q$ is a polynomial only over the variables $\{x_i\ |\ i\in I\}$.

We use $\mc{A}_{I,J}$ to denote Boolean assignments $\sigma:\{x_i\ |\ i\not\in I\}\cup \{y_j\ |\ j\not\in J\}\rightarrow \{0,1\}$. Given $F:\F_2^{2n}\rightarrow \F_2$ and $\sigma \in \mc{A}_{I,J}$, we use $F|_{\sigma}\in \F_2[x_I,y_J]$ to denote its natural restriction to the variables indexed by $I\cup J$.

We say that $Q\in \F_2[x_I,y_J]$ is \emph{$(x,y)$-symmetric} if it is a linear combination of the polynomials in the set $\{S_{d_1}(x_I)\cdot S_{d_2}(y_J)\ |\ d_1,d_2\in \mathbb{N}\}$. We note that being $(x,y)$-symmetric depends on the sets $I,J$ under consideration. This will be implicit when used.

Given a multilinear monomial $m$ over the $x$ and $y$-variables, its \emph{multidegree} is defined to be $(i,j)$ if $m$ multiplies $i$ $x$-variables and $j$ $y$-variables. Let $\mc{D} = \{(i,j)\ |\ i+j\leq 3d-1\}$ be the set of multidegrees of monomials of degree at most $3d-1$. We order $\mc{D}$ in ascending order according to $i+j$, i.e., fix a total ordering $\preceq$ of $\mc{D}$ such that if $i_1+j_1 < i_2 + j_2$, then $(i_1,j_1)\preceq (i_2,j_2)$ \footnote{If $(i_1,j_1) \neq (i_2, j_2)$, but $i_1 + j_1 = i_2 + j_2$, then the relation between $(i_1,j_1)$ and $(i_2, j_2)$ is fixed in an arbitrary manner.}. Let $(i_0,j_0)$ be the largest element in the ordering $\preceq$. We will define $\mathrm{multdeg}(Q)$ to be the largest (w.r.t. $\preceq$) multidegree of a monomial that has a non-zero coefficient in $Q$. 
For $(i,j)$ such that $i+j\leq 3d-1$, we say that a polynomial $Q\in \F_2[x_I,y_J]$ is $(x,y;i,j)$-symmetric if we can write $Q$ as 
\begin{equation}
\label{eq:xyijsym}
Q = Q_1 \oplus Q_2
\end{equation}
where $\mathrm{multdeg}(Q_1)\preceq (i,j)$ and $Q_2$ is $(x,y)$-symmetric. Note that if $(i,j) = (i_0,j_0)$, then any polynomial of degree at most $3d-1$ is $(x,y;i_0,j_0)$-symmetric, since we can take $Q_1 = Q$ and $Q_2=0$.

We also need the following variant of the function $F_H$ defined above. Call a function $\Phi\in \F_2[x_1,\ldots,x_n]$ (resp.\ $\Psi\in \F_2[y_1,\ldots,y_n]$) $d$-simple w.r.t.\ $x$ (resp.\ w.r.t.\ $y$) if it is a linear combination of symmetric polynomials in $x$ (resp.\ in $y$) of degree \emph{strictly} less than $d$. Equivalently, we can say that $\Phi(x)$ only depends upon $|x|_0,\ldots,|x|_{\lg d -1}$, and similarly for $\Psi(y)$ w.r.t. $y$.

Given pairs of polynomials $\Phi = (\Phi_1,\Phi_2) \in \F_2[x_1,\ldots,x_n]\times \F_2[x_1,\ldots,x_n]$, and $\Psi = (\Psi_1,\Psi_2) \in \F_2[y_1,\ldots,y_n]\times \F_2[y_1,\ldots,y_n]$, such that $\Phi_1,\Phi_2$ and $\Psi_1,\Psi_2$ are $d$-simple w.r.t. $x$ and $y$ respectively, define
\begin{equation}
\label{eq:altFPhiPsi}
F_{H,\Phi,\Psi}(x,y) = \left\{ 
\begin{array}{ll}
0 & \text{if $S_d(x) =\Phi_1(x), S_d(y) = \Psi_1(y)$,}\\
S_{2d}(y) \oplus \Psi_2(y) & \text{if $S_d(x) = 1\oplus\Phi_1(x)$ and $S_d(y) = \Psi_1(y)$,}\\
S_{2d}(x) \oplus \Phi_2(x) & \text{if $S_d(x) = \Phi_1(x)$ and $S_d(y) = 1\oplus\Psi_1(y)$,}\\
H(x,y) & \text{otherwise.}
\end{array}
\right.
\end{equation}
Also, define $S_{\Phi,\Psi} = \{(x,y)\ |\ S_d(x) = 1\oplus\Phi_1(x), S_d(y) = 1\oplus\Psi_1(y)\}$.

With the notation above, we are ready to state a claim that generalizes \cref{lem:gamma21lbd}.

\begin{lemma}
\label{lem:stronger}
Fix any $d\in \mathbb{N}$. For $(i,j)\in \mc{D}$ and $\varepsilon \in (0,1)$, there is an $n(i,j,\varepsilon)$ such that given any $n\geq n(i,j,\varepsilon)$, for uniformly random $H:\{0,1\}^{2n}\rightarrow \{0,1\}$, we have for any $\Phi=(\Phi_1,\Phi_2)\in \F_2[x_1,\ldots,x_n]^2$ and $\Psi=(\Psi_1,\Psi_2)\in \F_2[y_1,\ldots,y_n]^2$ such that $\Phi_1,\Phi_2$ are $d$-simple w.r.t.\ $x$, and $\Psi_1,\Psi_2$ are $d$-simple w.r.t.\ $y$,
\begin{equation}
\label{eq:stronger}
\prob{H}{\exists Q\text{ of degree $\leq 3d-1$ that is $(x,y;i,j)$-symmetric s.t. } \agr(F_{H,\Phi,\Psi},Q) \geq \frac{5}{8}+\varepsilon} \leq \varepsilon.
\end{equation}
\end{lemma}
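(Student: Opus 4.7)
The plan is to prove \cref{lem:stronger} by induction on $(i,j)\in\mc{D}$ with respect to the order $\preceq$. The base case is $(0,0)$ (the minimum of $\preceq$, where every $(x,y;0,0)$-symmetric $Q$ is $(x,y)$-symmetric up to an additive constant), and the inductive step reduces the multidegree by one position using a bipartite-hypergraph Ramsey argument in the spirit of Alon--Beigel~\cite{AlonB2001} together with the Vandermonde/Lucas identity.

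\smallskip
\noindent\textbf{Base case.} Set $u=S_d(x),\ v=S_d(y),\ u'=S_{2d}(x),\ v'=S_{2d}(y)$ and call $|x|_0,\dots,|x|_{\log d-1},|y|_0,\dots,|y|_{\log d-1}$ the \emph{low bits}. By \cref{fac:elemsym1} together with the degree bound $d_1+d_2\leq 3d-1$, every $(x,y)$-symmetric $Q$ of degree $\leq 3d-1$ has the form
\[
Q \;=\; A + u'B + v'C + uE + vF + uv\,G,
\]
where $A,B,C,E,F,G$ are polynomials in the low bits only; the monomials $uv'$, $vu'$, $u'v'$, $uvv'$, and $uvu'$ are all forbidden because their degrees exceed $3d-1$. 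Splitting the input into the four regions of \cref{eq:altFPhiPsi} (each of probability $\tfrac14+o(1)$ by \cref{cor:folk}) and using \cref{fac:folk} to treat $(u',v',\text{low bits})$ as independent uniform bits conditional on the region, a case analysis over the four possible values of $(B(\text{low}),C(\text{low}))\in\F_2^2$ yields
\[
\agr_A+\agr_B+\agr_C \;\leq\; 2+o(1),
\]
with equality approached by $Q\equiv 0$. Combined with $\agr_D\leq \tfrac12+o(1)$, which holds for each fixed symmetric $Q$ with probability $\geq 1-2^{-\Omega(n)}$ over random $H$, a union bound over the $2^{O(d^2)}=O_d(1)$ choices of $Q$ gives the required $\tfrac58+\varepsilon$ bound with failure probability at most $\varepsilon$.

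\smallskip
\noindent\textbf{Inductive step.} Assume the lemma at the predecessor $(i',j')\prec(i,j)$ with error $\varepsilon'$ (to be tuned). For any witness $Q$ of the $(i,j)$-bad event, the coefficients $\{c_{S,T}\}_{S\in\binom{[n]}{i},T\in\binom{[n]}{j}}$ of the multidegree-$(i,j)$ monomials in the non-symmetric part of $Q$ form a $2$-coloring of $\binom{[n]}{i}\times\binom{[n]}{j}$. A two-step hypergraph Ramsey argument (first $2^{|\binom{[n]}{j}|}$-color $\binom{[n]}{i}$ by the entire row $(c_{S,T})_T$ and extract a monochromatic $I$; then $2$-color $\binom{[n]}{j}$ by the resulting common row and extract $J$) produces $I(Q),J(Q)\subseteq [n]$ of size $m=m(n)\to\infty$ on which this coloring is constant. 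Then for any restriction $\sigma$ of variables outside $I\cup J$, the multidegree-$(i,j)$ piece of $Q|_\sigma$ collapses to $c\cdot S_i(x_I)S_j(y_J)$, which is symmetric, so $Q|_\sigma$ is $(x,y;i',j')$-symmetric on $(x_I,y_J)$. Moreover, via the Vandermonde identity and \cref{fac:elemsym1}, $S_d(x)|_\sigma = S_d(x_I)\oplus \theta_1^x(x_I)$ and $S_{2d}(x)|_\sigma = S_{2d}(x_I)\oplus \binom{r_x}{d}S_d(x_I)\oplus \theta_2^x(x_I)$ with $\theta_1^x,\theta_2^x$ being $d$-simple in $x_I$ (and analogously on the $y$-side); substituting the region constraint $S_d(x_I)=\tilde\Phi_1(x_I)$ on regions A, B, and C absorbs the stray $S_d(x_I)$ factor into a new $d$-simple polynomial, using that products of $d$-simple polynomials remain $d$-simple by Lucas. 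Hence $F_{H,\Phi,\Psi}|_\sigma = F_{H',\tilde\Phi(\sigma),\tilde\Psi(\sigma)}$ on $(x_I,y_J)$, with $H'=H|_\sigma$ still uniformly distributed.

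\smallskip
\noindent\textbf{Probabilistic closing and the main obstacle.} Averaging $\agr(Q,F_{H,\Phi,\Psi})=\avg{\sigma}{\agr(Q|_\sigma,F|_\sigma)}$ and applying Markov, if the event in \eqref{eq:stronger} has probability $>\varepsilon$ over $H$, then with joint probability $>\varepsilon^2/2$ over $(H,\sigma)$ there exists an $(x,y;i',j')$-symmetric witness on the restricted instance. Union-bounding over the $\binom{n}{m}^2$ possible Ramsey sets and applying the inductive hypothesis for each pair $(\tilde\Phi(\sigma),\tilde\Psi(\sigma))$ yields $\varepsilon^2/2<\binom{n}{m}^2\cdot\varepsilon'$, which is contradicted by choosing $\varepsilon' = \varepsilon^2/(4\binom{n}{m}^2)$ provided $m\geq n(i',j',\varepsilon')$. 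This is achievable because $m(n)\to\infty$ while the recursion has constant depth $|\mc{D}|=O(d^2)$. The main obstacle is the base-case inequality $\agr_A+\agr_B+\agr_C\leq 2+o(1)$: its proof depends crucially on the degree restriction $3d-1$ (rather than $3d$), since even a single extra degree would permit the ``isolating'' monomial $uv'$, destroying the argument and pushing the achievable agreement toward $\tfrac78$.
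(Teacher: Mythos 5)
Your base case and the Ramsey‐restriction machinery match the paper's plan, but the probabilistic closing of the inductive step has a genuine gap that I don't think can be patched in the form you propose.

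Concretely: you average $\agr(F_{H,\Phi,\Psi},Q)$ over $\sigma$, apply Markov to pass to a ``bad'' joint event over $(H,\sigma)$ of probability $\geq\varepsilon^2/2$, and then union bound over the $\binom{n}{m}^2$ possible Ramsey pairs $(I,J)$, demanding $\binom{n}{m}^2\varepsilon'<\varepsilon^2/2$ with $m\geq n(i',j',\varepsilon')$. These two requirements are circular and, I believe, unsatisfiable: you need $\varepsilon'$ to shrink at least as fast as $\binom{n}{m}^{-2}$, i.e.\ roughly $\exp(-\Theta(m\log n))$, but the induction hypothesis already forces $n(i',j',\varepsilon')\gtrsim\log(1/\varepsilon')\gtrsim m\log n\gg m$ just from the Chernoff/union step inside the base case, so the constraint $m\geq n(i',j',\varepsilon')$ cannot hold; and $m(n)$ coming out of the hypergraph Ramsey bound is an \emph{iterated-logarithm}-type function of $n$, which makes the gap even more severe. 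The recursion having constant depth does not help, because the blow-up happens within a single inductive step, not across levels. The paper avoids this entirely by exploiting the independence of the restrictions $\{H|_\sigma\}_\sigma$: it fixes $Q$ (hence fixes $(I,J)$), defines $Y_Q=\#\{\sigma:\agr(F|_\sigma,Q|_\sigma)\geq\tfrac58+\tfrac\varepsilon4\}$, applies the induction hypothesis per $\sigma$ to bound $\avg{}{Y_Q}\leq 2^{2(n-r)}\tfrac\varepsilon4$, and then uses a Chernoff bound over the $2^{2(n-r)}$ \emph{independent} summands to get failure probability $\exp(-\Omega(\varepsilon^2 2^{2(n-r)}))$ per $Q$. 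That doubly-exponentially small tail is what survives the union bound over all $\leq 2^{(2n)^{3d-1}}$ polynomials $Q$ --- and crucially, $\varepsilon'=\varepsilon/4$ stays bounded away from $0$, so there is no circularity in choosing $r=n(i',j',\varepsilon/4)$. Your Markov step discards exactly the concentration that makes this possible.

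Two smaller notes on the base case. Your decomposition $Q=A+u'B+v'C+uE+vF+uvG$ (with $A,\dots,G$ in the low bits) is the right reduction and coincides, after conditioning on the low bits $\tau$, with the paper's restriction of $Q$ to the ``relevant'' polynomial set $\mc{R}\subseteq\F_2[a_1,a_2,b_1,b_2]$; however, you should also note that $uu'$ and $vv'$ (degree $3d$) are forbidden, not only the cross monomials you list. And the claim $\agr_A+\agr_B+\agr_C\leq 2+o(1)$ is correct but it really is the content of the paper's \cref{lem:rel-lbd}; as stated, it needs the case split (``does $q$ depend on $a_2$ or $b_2$?'') to be carried out, since for a fixed low-bit setting one of the three regions can have agreement $1$, and you must argue the other two then drop to $\leq\tfrac12$.
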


The statement of the above lemma for $\Phi=\Psi = (0,0)$ and $(i,j) = (i_0,j_0)$ implies~\cref{lem:gamma21lbd} since in this case $F_{H,\Phi,\Psi} = F_H$ and as noted above, any polynomial $Q$ of degree at most $3d-1$ is $(x,y;i_0,j_0)$-symmetric.

The proof of~\cref{lem:stronger} is by induction on the order $\preceq$. The base case is the case when $(i,j) = (0,0)$, the minimal element of the ordering $\preceq$.

Throughout, the parameter $d$ is a fixed integer power of $2$.
\subsubsection{Base case of the induction: $i = j = 0$}

In this case, by \cref{eq:xyijsym}, it is clear that $Q$ is an $(x,y)$-symmetric polynomial. We show in this case that bounding $\agr(F_{H,\Phi,\Psi},Q)$ reduces (for most $H$) to bounding the correlations between functions on $5$ inputs. A simple computation solves this problem.

Fix any bits $\phi,\psi\in \{0,1\}^2$. Define the Boolean function $f$ on $5$ variables $a_1,b_1,a_2,b_2,z\in \F_2$ as follows. Notice the similarity to~\cref{eq:altF,eq:altFPhiPsi}.
\begin{equation}
\label{eq:defn-f}
f_{\phi,\psi}(a_1,a_2,b_1,b_2,z) = \left\{ 
\begin{array}{ll}
0 & \text{if $a_1=\phi_1,b_1=\psi_1$,}\\
b_2 \oplus \psi_2 & \text{if $a_1 = 1\oplus \phi_1$ and $b_1 = \psi_1$,}\\
a_2 \oplus \phi_2& \text{if $a_1=\phi_1$ and $b_1=1\oplus\psi_1$,}\\
z & \text{otherwise.}
\end{array}
\right.
\end{equation}

Call a polynomial $q\in \F_2[a_1,a_2,b_1,b_2,z]$ \emph{relevant} if $q$ is a linear combination of monomials from the set $\{1,a_1,a_2,b_1,b_2,a_1b_1\}$. Let $\mc{R}$ denote the set of relevant polynomials. Note that relevant polynomials do not involve the variable $z$.

Given $H,Q'\in \F_2[x_1,\ldots,x_n,y_1,\ldots,y_n]$ and $\Phi,\Psi$ as in the statement of~\cref{lem:stronger}, we define $\agr_{S}(H,Q')$ to be $\prob{(x,y)\in S_{\Phi,\Psi}}{H(x,y) = Q'(x,y)}.$

We say that $H:\{0,1\}^{2n}\rightarrow \{0,1\}$ is \emph{$\varepsilon$-hard} if for any $Q'\in \F_2[x_1,\ldots,x_n,y_1,\ldots,y_n]$ of degree at most $3d-1$, we have
$\left|\agr_S(H,Q')-\frac{1}{2}\right| \leq \varepsilon.$

We need the following property of a random $H:\{0,1\}^{2n}\rightarrow\{0,1\}$. 

\begin{lemma}
\label{lem:existsH}
For any $\varepsilon > 0$, there is an $n_0(\varepsilon)\in \mathbb{N}$ such that if $n\geq n_0(\varepsilon)$, then for $H:\{0,1\}^{2n}\rightarrow \{0,1\}$ chosen uniformly at random $\prob{H}{\text{$H$ not $\varepsilon$-hard}}\leq \varepsilon$.
\end{lemma}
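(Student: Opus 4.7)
The plan is to prove \cref{lem:existsH} by a direct Hoeffding + union bound argument. The crucial observation is that, because $H$ is uniform on $\{0,1\}^{2n}$, for any fixed subset $T \subseteq \{0,1\}^{2n}$ and any fixed deterministic $Q'$, the collection $\{\mathbf{1}[H(x,y)=Q'(x,y)] : (x,y)\in T\}$ consists of mutually independent $\mathrm{Bernoulli}(1/2)$ random variables. Taking $T = S_{\Phi,\Psi}$, Hoeffding's inequality then yields
\[
\prob{H}{\bigl|\agr_S(H,Q') - 1/2\bigr| > \varepsilon} \leq 2\exp\bigl(-2\varepsilon^2 |S_{\Phi,\Psi}|\bigr).
\]

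The second step is to lower bound $|S_{\Phi,\Psi}|$. Since $\Phi_1$ is $d$-simple, it depends only on $|x|_0,\ldots,|x|_{\lg d - 1}$, while $S_d(x) = |x|_{\lg d}$ by \cref{fac:elemsym1}. Applying \cref{fac:folk} with $r = \lg d + 1$ (a fixed constant, since $d$ is a fixed power of $2$), the bits $|x|_0, \ldots, |x|_{\lg d}$ are almost uniform and almost $r$-wise independent, so conditioned on any fixing of $(|x|_0,\ldots,|x|_{\lg d-1})$ the event $|x|_{\lg d} = 1 \oplus \Phi_1(x)$ occurs with probability $1/2 \pm o(1)$. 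Summing over these fixings gives $\prob{x}{S_d(x) = 1 \oplus \Phi_1(x)} = 1/2 \pm o(1)$, and the independence of the $x$- and $y$-coordinates then yields $|S_{\Phi,\Psi}| \geq (1/4 - o(1)) \cdot 2^{2n} \geq 2^{2n}/5$ for $n$ large enough. Crucially, the $o(1)$ error from \cref{fac:folk} is uniform over the choice of $d$-simple $\Phi_1, \Psi_1$, depending only on $n$ and the fixed constant $r$, so $n_0(\varepsilon)$ need not depend on $\Phi, \Psi$.

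The final step is a union bound over candidate polynomials $Q'$. The number of multilinear polynomials over $\F_2$ in $2n$ variables of degree at most $3d-1$ is at most $2^N$ with $N = \sum_{i=0}^{3d-1}\binom{2n}{i} \leq (2n)^{3d}$, which is polynomial in $n$ for fixed $d$. Hence
\[
\prob{H}{H \text{ is not $\varepsilon$-hard}} \leq 2 \cdot 2^{(2n)^{3d}} \exp\bigl(-\tfrac{2\varepsilon^2}{5}\cdot 2^{2n}\bigr),
\]
and since the doubly-exponential-in-$n$ concentration factor $2^{2n}$ in the exponent dominates the merely-polynomial $(2n)^{3d}$, this is at most $\varepsilon$ for all $n \geq n_0(\varepsilon)$, for a suitable $n_0(\varepsilon)$ depending on $d$ and $\varepsilon$. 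There is no genuine obstacle: the entire argument is a textbook Hoeffding + union bound, and the one mildly delicate step (the size estimate for $S_{\Phi,\Psi}$, uniform in $\Phi,\Psi$) is handled cleanly by the almost-independence of low-order weight bits granted by \cref{fac:folk}.
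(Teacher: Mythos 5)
Your proposal is correct and follows essentially the same route as the paper's own proof: fix $Q'$, apply a Chernoff/Hoeffding concentration bound over the random $H$ restricted to $S_{\Phi,\Psi}$, lower bound $|S_{\Phi,\Psi}| = \Omega(2^{2n})$ via the almost-uniformity of the low-order weight bits, and union bound over the at most $2^{(2n)^{3d}}$ candidate polynomials. Your explicit remark that the $o(1)$ error in the $|S_{\Phi,\Psi}|$ estimate is uniform over the choice of $d$-simple $\Phi,\Psi$ (since \cref{fac:folk} is a statement purely about the distribution of the weight bits) is a useful point of care that the paper leaves implicit.
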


\begin{proof}
  The proof is a trivial union bound. The number of polynomials $Q'$
  of degree at most $3d-1$ is at most $2^{(2n)^{3d}}$ (there are
  $\sum_{i=0}^{3d-1}\binom{2n}{i}\leq (2n)^{3d}$ many possible
  monomials each has $2$ possible coefficients). For each such $Q'$,
  the expected number of locations $x\in S_{\Phi,\Psi}$ where
  $H(x) \neq Q'(x)$ is $|S_{\Phi,\Psi}|/2$. By a Chernoff bound, the
  probability that this number is not in the range
  $[|S_{\Phi,\Psi}|/2 - \varepsilon|S_{\Phi,\Psi}|,|S_{\Phi,\Psi}|/2 +
  \varepsilon|S_{\Phi,\Psi}|]$
  is $\exp(-\Omega(\varepsilon^2 |S_{\Phi,\Psi}|)).$
  By~\cref{cor:folk}, it follows that
  $|S_{\Phi,\Psi}| = \Omega(2^{2n})$, and hence the above probability
  can be upper bounded by $\exp(-\Omega(\varepsilon^2 2^{2n}))$. A
  union bound over all the possible $Q'$ tells us that with probability
  $1-\exp((2n)^{3d}-\Omega(\varepsilon^2 2^{2n}))$ over the choice of
  $H$, every $Q'$ of degree at most $3d-1$ satisfies
  $\agr_S(H,Q') \in
  [\frac{1}{2}-\varepsilon,\frac{1}{2}+\varepsilon]$.
  In particular, for any $\varepsilon$, a large enough $n$ will ensure
  that the probability that $H$ is not $\varepsilon$-hard is at most
  $\varepsilon$.
\end{proof}

We will prove the following lemmas.

\begin{lemma}
\label{lem:rel-suff}
Fix any $\Phi,\Psi$ as in the statement of~\cref{lem:stronger}. For any $\varepsilon > 0$, there is an $n(0,0,\varepsilon)\in \mathbb{N}$ such that for any $n\geq n(0,0,\varepsilon)$
\[
\prob{H}{\exists Q\text{ of degree $\leq 3d-1$ that is $(x,y)$-symmetric s.t. } \agr(F_{H,\Phi,\Psi},Q) > \max_{q\in \mc{R},\phi,\psi}\agr(f_{\phi,\psi},q)+\varepsilon} \leq \varepsilon.
\]
\end{lemma}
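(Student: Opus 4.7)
The plan is to decompose $\agr(F_{H,\Phi,\Psi}, Q)$ by conditioning on the ``small bits'' $s = (s_x, s_y)$, where $s_x := (|x|_0, \ldots, |x|_{\log d - 1})$ and $s_y$ is analogous. For each fixed $s$, both $F_{H,\Phi,\Psi}(x,y)$ and $Q(x,y)$ depend only on the ``high bits'' $a_1 := S_d(x)$, $a_2 := S_{2d}(x)$, $b_1 := S_d(y)$, $b_2 := S_{2d}(y)$, plus (for $F$) the value $H(x,y)$; here $\Phi_1, \Phi_2, \Psi_1, \Psi_2$ are $d$-simple and so are functions of $s$ alone. I would match this conditional picture, coordinate-for-coordinate, with the $5$-variable gadget $f_{\phi(s),\psi(s)}$, where $\phi(s) := (\Phi_1(s_x), \Phi_2(s_x))$ and $\psi(s) := (\Psi_1(s_y), \Psi_2(s_y))$.

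The first step is combinatorial: identifying the restricted polynomial $q_s \in \F_2[a_1, a_2, b_1, b_2]$. Writing $d_1 = 2d\alpha_x + d\beta_x + r_x$ with $r_x < d$ and $\alpha_x, \beta_x \in \{0,1\}$, we have $S_{d_1}(x) = S_{r_x}(x)\, a_1^{\beta_x} a_2^{\alpha_x}$, and analogously for $y$. Since every term $S_{d_1}(x)\, S_{d_2}(y)$ in $Q$ satisfies $d_1 + d_2 \leq 3d-1$ with $r_x + r_y \geq 0$, the integer inequality $2(\alpha_x + \alpha_y) + (\beta_x + \beta_y) \leq 2$ must hold, and a short case analysis then shows that the only admissible high-bit monomials $a_1^{\beta_x} a_2^{\alpha_x} b_1^{\beta_y} b_2^{\alpha_y}$ are the six elements of $\mc{R} = \{1, a_1, a_2, b_1, b_2, a_1 b_1\}$. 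Hence $Q(x,y) = \sum_{m \in \mc{R}} c_m(s_x, s_y) \cdot m$ for some $d$-simple coefficient functions $c_m$, and the restriction $q_s := \sum_m c_m(s)\, m$ lies in $\mc{R}$.

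By \cref{fac:folk}, the bits $(|x|_0,\ldots,|x|_{\log d+1})$ and $(|y|_0,\ldots,|y|_{\log d+1})$ are each $o(1)$-close to uniform, so conditional on $s$ the tuple $(a_1,a_2,b_1,b_2)$ is $o(1)$-close to uniform on $\{0,1\}^4$, and $s$ is $o(1)$-close to being independent of the event $(x,y) \in S_{\Phi,\Psi}$. On the complement $(x,y) \notin S_{\Phi,\Psi}$, the value $F_{H,\Phi,\Psi}$ matches the three non-$H$ branches of $f_{\phi(s), \psi(s)}$ exactly, so the contribution of this event to $\agr(F_{H,\Phi,\Psi}, Q \mid s)$ equals, up to $o(1)$, the corresponding piece $\tfrac{3}{4}\Pr[f_{\phi(s),\psi(s)} = q_s \mid \text{not }S]$ of $\agr(f_{\phi(s),\psi(s)}, q_s)$.

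On the event $(x,y) \in S_{\Phi,\Psi}$ we have $F_{H,\Phi,\Psi} = H$, so applying the $\varepsilon$-hardness of $H$ (which holds with probability $\geq 1-\varepsilon$ by \cref{lem:existsH}) to the polynomial $Q$ itself --- of degree $\leq 3d-1$ --- gives $|\agr_S(H, Q) - \tfrac{1}{2}| \leq \varepsilon$. Combined with the near-independence above, the weighted conditional average $\sum_s \Pr[s] \cdot \Pr[H = Q \mid s, S_{\Phi,\Psi}]$ lies within $\varepsilon + o(1)$ of $\tfrac{1}{2}$, matching the $\tfrac{1}{4}\cdot \tfrac{1}{2}$ piece of $\agr(f,q)$. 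Summing the two contributions and using the identity $\agr(f_{\phi(s),\psi(s)}, q_s) = \tfrac{3}{4}\Pr[f_{\phi(s),\psi(s)} = q_s \mid \text{not }S] + \tfrac{1}{8}$, we obtain
\[
\agr(F_{H,\Phi,\Psi}, Q) \leq \sum_s \Pr[s]\,\agr(f_{\phi(s),\psi(s)}, q_s) + O(\varepsilon) \leq \max_{q\in \mc{R},\,\phi,\psi}\agr(f_{\phi,\psi}, q) + O(\varepsilon),
\]
and a suitable rescaling of $\varepsilon$ closes the lemma. The main obstacle is the combinatorial identification of $\mc{R}$: carefully verifying that the degree bound excludes every high-bit monomial outside $\mc{R}$. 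Once that is pinned down, the remaining estimates are routine applications of \cref{fac:folk} and the $\varepsilon$-hardness of $H$.
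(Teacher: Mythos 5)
Your proposal is correct and follows essentially the same strategy as the paper's proof: identify that the restriction of $Q$ to the ``high bits'' $(S_d(x),S_{2d}(x),S_d(y),S_{2d}(y))$ must lie in $\mc{R}$ via the degree constraint $d_1+d_2\le 3d-1$, invoke~\cref{fac:folk} to treat the Hamming-weight bits as nearly uniform, and use $\varepsilon$-hardness of $H$ to control the $S_{\Phi,\Psi}$ contribution. The only difference is organizational — the paper first replaces $H$ by a fresh variable $z$ and passes to exact polynomials over a constant number of fresh variables before conditioning on the low bits, whereas you condition on the low bits $s$ directly and then split into $S_{\Phi,\Psi}$ and its complement — but both arguments are sound and rely on the same three ingredients.
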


\begin{lemma}
\label{lem:rel-lbd}
$\max_{q\in \mc{R},\phi,\psi}\agr(f_{\phi,\psi},q)\leq \frac{5}{8}$.
\end{lemma}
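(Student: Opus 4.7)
My plan is a two-step reduction followed by a small finite case check. I would first argue that the substitution $a_i \mapsto a_i \oplus \phi_i$, $b_i \mapsto b_i \oplus \psi_i$ for $i \in \{1,2\}$ (with $z$ untouched) preserves the uniform distribution on $\{0,1\}^5$, sends $f_{\phi,\psi}$ to $f_{(0,0),(0,0)}$, and maps $\mc{R}$ into itself. The last property is the only slightly subtle point: expanding $(a_1 \oplus \phi_1)(b_1 \oplus \psi_1) = a_1 b_1 \oplus \psi_1 a_1 \oplus \phi_1 b_1 \oplus \phi_1 \psi_1$ shows that the monomials introduced by the substitution are all among $\{1, a_1, b_1, a_1 b_1\} \subset \mc{R}$. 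Hence it suffices to bound $\agr(f_{(0,0),(0,0)}, q)$ over $q \in \mc{R}$.

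Next, I would write a generic relevant polynomial as $q = \alpha_0 + \alpha_1 a_1 + \alpha_2 a_2 + \alpha_3 b_1 + \alpha_4 b_2 + \alpha_5 a_1 b_1$ and partition $\{0,1\}^5$ into four regions of $8$ points each, indexed by $(a_1, b_1)$. Since $q$ does not depend on $z$, the ``otherwise'' region $\{a_1 = b_1 = 1\}$, where $f = z$, contributes exactly $4$ agreements regardless of $q$. On each of the other three regions, both $f$ and the restriction of $q$ are functions of $(a_2, b_2)$ alone, so the agreement count on that region equals $2 \cdot \#\{(a_2, b_2) : q = f\}$.

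A short case analysis then shows that attaining the maximum of $8$ on one of these three regions---i.e., matching $f$ on every $(a_2, b_2)$---forces the coefficient pair $(\alpha_2, \alpha_4)$: it must equal $(0,0)$ on $\{a_1 = b_1 = 0\}$ (where $f = 0$), $(0,1)$ on $\{a_1 = 1, b_1 = 0\}$ (where $f = b_2$), and $(1,0)$ on $\{a_1 = 0, b_1 = 1\}$ (where $f = a_2$). These three conditions on $(\alpha_2, \alpha_4)$ are pairwise incompatible, so at most one region contributes $8$. Each of the remaining two regions contributes at most $4$, since $q \oplus f$ restricts there to a non-constant affine form in $(a_2, b_2)$ and is therefore balanced. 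This caps the total agreement at $4 + 8 + 4 + 4 = 20 = \tfrac{5}{8} \cdot 32$, proving the lemma.

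The only obstacle I foresee is the bookkeeping in the finite case analysis; once the symmetry reduction eliminates $\phi$ and $\psi$, the rest is a transparent check driven by the incompatibility of the three ``full match'' conditions on $(\alpha_2, \alpha_4)$.
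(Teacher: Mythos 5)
Your proof is correct, but it follows a genuinely different route from the paper's. The paper never normalizes $\phi,\psi$; instead it splits directly into two cases on $q$. If $q$ depends on $a_2$ (or symmetrically $b_2$), then on the three of the four $(a_1,b_1)$-regions where $f_{\phi,\psi}$ is constant in $a_2$, fixing everything else and varying $a_2$ shows $f$ is constant while $q$ is a non-constant affine function of $a_2$, so agreement is exactly $1/2$ on each of those three regions. If $q$ depends on neither $a_2$ nor $b_2$, then on the three regions where $f_{\phi,\psi}$ is a non-constant linear function (of $b_2$, $a_2$, or $z$), $q$ is constant (recall $q$ does not involve $z$), again giving agreement exactly $1/2$ on each. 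Either way $3/4$ of the space contributes $1/2$ and the last quarter at most $1$, which gives $5/8$ immediately. Your argument instead removes $\phi,\psi$ up front by the affine change of variables $a_i\mapsto a_i\oplus\phi_i$, $b_i\mapsto b_i\oplus\psi_i$ (which, as you correctly note, preserves the uniform measure, conjugates $f_{\phi,\psi}$ to $f_{(0,0),(0,0)}$, and maps $\mc{R}$ onto itself because $(a_1\oplus\phi_1)(b_1\oplus\psi_1)$ expands within $\mc{R}$), and then argues region-by-region that full match forces incompatible values of $(\alpha_2,\alpha_4)$, so at most one of the three non-$z$ regions contributes $8$. One small imprecision to be aware of: you justify "at most $4$" on the other regions by saying $q\oplus f$ restricts to a \emph{non-constant} affine form, but when no region achieves full match the restriction could also be the constant $1$. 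The bound you want follows in either case, since a non-zero affine function of $(a_2,b_2)$ has $0$ or $2$ zeros on $\{0,1\}^2$, so a non-full-match region contributes $0$ or $4$ agreements; this closes the gap without changing your conclusion. On balance, the paper's two-case argument on $q$'s dependence is shorter because it avoids the normalization, whereas your version makes the finite case check more transparent and identifies exactly which polynomials are tight.
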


The above lemmas clearly prove \cref{eq:stronger} in the case $i = j = 0$, which completes the base case.

\begin{proof}[Proof of~\cref{lem:rel-suff}]
We choose $n(0,0,\varepsilon)$ during the course of the proof. First of all, we will assume that $n(0,0,\varepsilon)\geq n_0(\varepsilon/2)$, so that we have
\begin{equation}
\label{eq:Hhard}
\prob{H}{\text{$H$ not $\varepsilon/2$-hard}}\leq \varepsilon/2.
\end{equation}

We now show that when $H$ is $\varepsilon/2$-hard, then for any $Q$ that is $(x,y)$-symmetric of degree at most $3d-1$, we have 
\begin{equation}
\label{eq:ifHhard}
\agr(F_{H,\Phi,\Psi},Q) \leq \max_{q\in \mc{R},\phi,\psi}\agr(f_{\phi,\psi},q)+\varepsilon.
\end{equation}
This will prove the lemma. Fix any $\varepsilon/2$-hard $H$ for the remainder of the lemma. 

Since $Q$ is $(x,y)$-symmetric, it follows that we can write
\begin{equation}
\label{eq:Qsigma}
Q = \bigoplus_{d_1,d_2: d_1 + d_2 \leq 3d-1}\gamma_{d_1,d_2} S_{d_1}(x)\cdot S_{d_2}(y)
\end{equation}
for some choice of the $\gamma_{d_1,d_2}$s from $\F_2$.
%
%Since $H$ is $d$-hard, we know that 
%\begin{equation}
%\label{eq:dhard}
%\avg{\sigma}{|\agr_S(H|_\sigma,Q|_\sigma) - \frac{1}{2}|} \leq \varepsilon
%\end{equation}
%for some $\varepsilon = \varepsilon(n) = o(1)$. By Markov's inequality, we know that
%\[
%\prob{\sigma}{\agr_S(H|_\sigma, Q_\sigma) \leq \frac{1}{2} + \sqrt{\varepsilon}} \geq 1-\sqrt{\varepsilon} = 1-o(1).
%\]
%Any $\sigma$ satisfying the above event is said to be \emph{generic}. Let $\mc{G}$ be the set of generic $\sigma$. Note that
%\begin{align}
%\agr(F_H,Q) &= \avg{\sigma}{\agr(F_H|_\sigma, Q|_\sigma)}\notag\\
%&\leq \max_{\sigma\in \mc{G}}\agr(F_H|_\sigma,Q|_\sigma)+\prob{\sigma}{\sigma\not\in \mc{G}}\notag\\
%&\leq \max_{\sigma\in \mc{G}}\agr(F_H|_\sigma,Q|_\sigma) + o(1)\label{eq:generic}
%\end{align}

Let $z$ be a new variable taking values in $\F_2$. We now define $F \in \F_2[x_1,\ldots,x_n,y_1,\ldots,y_n,z]$ as follows. We set $F(x,y) = F_{H,\Phi,\Psi}(x,y)$ for all $(x,y)\not\in S_{\Phi,\Psi}$ and $F(x,y) = z$ for all $(x,y)\in S_{\Phi,\Psi}$. Note that we have $\agr_S(F,Q) = \frac{1}{2}$ since $Q$ does not depend on the random variable $z$. Further, since $H$ is $\varepsilon/2$-hard, we know that $|\agr_S(F_{H,\Phi,\Psi},Q)- \frac{1}{2}|\leq \varepsilon/2$.

In particular, we see that $|\agr_S(F_{H,\Phi,\Psi},Q) - \agr_S(F,Q)| \leq \varepsilon/2$. Since $F$ and $F_{H,\Phi,\Psi}$ agree outside $S_{\Phi,\Psi}$, this implies that 
\begin{equation}
\label{eq:F_HvsF}
|\agr(F_{H,\Phi,\Psi},Q) - \agr(F,Q)|\leq \varepsilon/2.
\end{equation}
So to upper bound $\agr(F_{H,\Phi,\Psi},Q)$, we upper bound $\agr(F,Q)$. Assume $d = 2^\ell$.

Consider $Q$. By \cref{eq:Qsigma}, we have
\begin{equation}
\label{eq:Qexp}
Q(x,y) = \bigoplus_{d_1,d_2: d_1+d_2 \leq 3d-1} \gamma_{d_1,d_2} S_{d_1}(x)\cdot S_{d_2}(y) = \bigoplus_{\substack{A,B\subseteq \{0,\ldots,\ell+1\}:\\ \sum_{i\in A} 2^i + \sum_{j\in B}2^j \leq 3d-1}}\gamma_{A,B} \prod_{i\in A}|x|_i \cdot \prod_{j\in B}|y|_j
\end{equation}
where the $\gamma_{A,B}$s are in $\F_2$, $|x|_0,\ldots,|x|_{\ell+1}$ being the $\ell+2$ least significant bits of $|x|$ (and similarly for $y$) and we have used~\cref{fac:elemsym1} for the final equality above.  

Now consider $F$. By the definition of $F$ above, \cref{eq:altFPhiPsi}, and once again using~\cref{fac:elemsym1}, we have
\begin{equation}
\label{eq:altF'}
F(x,y,z) = \left\{ 
\begin{array}{ll}
0 & \text{if $|x|_{\ell} =\Phi_1(x), |y|_\ell = \Psi_1$(y),}\\
|y|_{\ell+1} \oplus \Psi_2(y)& \text{if $|x|_{\ell} = 1\oplus\Phi_1(x)$ and $|y|_{\ell} = \Psi_1(y)$,}\\
|x|_{\ell+1} \oplus\Phi_2(y) & \text{if $|x|_{\ell} = \Phi_1(x)$ and $|y|_\ell = 1\oplus\Psi_1(y)$,}\\
z & \text{otherwise,}
\end{array}
\right.
\end{equation}
where $\Phi_1,\Phi_2$, being $d$-simple, are functions of $|x|_0,\ldots,|x|_{\ell-1}$, and similarly, $\Psi_1,\Psi_2$ are functions of $|y|_0, \ldots |y|_{\ell -1}$.

Let $\alpha_0,\ldots,\alpha_{\ell+1},\beta_0,\ldots,\beta_{\ell+1}$ be $2(\ell+2)$ new variables. Define $q'\in \F_2[\alpha_0,\ldots,\alpha_{\ell+1},\beta_0,\ldots,\beta_{\ell+1}]$ by replacing $|x|_i$ by $\alpha_i$ and $|y|_j$ by $\beta_j$ in \cref{eq:Qexp} above. That is,
\begin{equation}
\label{eq:defnq'}
q'(\alpha_0,\ldots,\alpha_{\ell+1},\beta_0,\ldots,\beta_{\ell+1}) =  \bigoplus_{\substack{A,B\subseteq \{0,\ldots,\ell+1\}:\\ \sum_{i\in A} 2^i + \sum_{j\in B}2^j \leq 3d-1}}\gamma_{A,B} \prod_{i\in A}\alpha_i \cdot \prod_{j\in B}\beta_j
\end{equation}
and similarly define $f'\in \F_2[\alpha_0,\ldots,\alpha_{\ell+1},\beta_0,\ldots,\beta_{\ell+1},z]$ by replacing $|x|_i$ by $\alpha_i$ and $|y|_j$ by $\beta_j$ for each $i,j\in \{0,\ldots,\ell+1\}$ in the definition of $F$ above. We have
\begin{align*}
\agr(F,Q) &= \prob{x,y,z}{F(x,y,z) = Q(x,y)}\\
&= \prob{x,y,z}{f'(|x|_0,\ldots,|x|_{\ell+1},|y|_0,\ldots,|y|_{\ell+1},z) = q'(|x|_0,\ldots,|x|_{\ell+1},|y|_0,\ldots,|y|_{\ell+1})}
\end{align*}

By~\cref{fac:folk}, we know that if $n(0,0,\varepsilon)$ is large enough, then for uniformly random $x,y\in \F_2^n$, the tuples $(|x|_0,\ldots,|x|_{\ell+1})$ and $(|y|_0,\ldots,|y|_{\ell+1})$ are $\varepsilon/4$-close to the uniform distribution (in statistical distance) over $\F_2^{\ell+2}$. Note also that $x,y,z$ are mutually independent. From this, it easily follows that the final expression in the above display is $\frac{\varepsilon}{2}$-close to $\agr(f',q')$. Thus, we get
\begin{equation}
\label{eq:F'f'}
|\agr(F,Q)-\agr(f',q')| \leq \frac{\varepsilon}{2}.
\end{equation}

Therefore, we analyze $\agr(f',q')$. Conditioning on any setting $\tau$ of $\alpha_0,\ldots,\alpha_{\ell-1}, \beta_0, \ldots, \beta_{\ell-1}$, we see that the functions $\Phi_1,\Phi_2,\Psi_1,\Psi_2$ (being $d$-simple) are fixed to some constants in $\F_2$ and hence $f'$ simplifies to a polynomial $f''(\alpha_{\ell},\alpha_{\ell+1},\beta_\ell,\beta_{\ell+1})$. Similarly, $q'$ simplifies to some $q''(\alpha_{\ell},\alpha_{\ell+1},\beta_\ell,\beta_{\ell+1})$. Further, note that by the constraints on sets $A$ and $B$ in  \cref{eq:defnq'}, $q''$ must be a linear combination of monomials from the set $\{1,\alpha_\ell,\alpha_{\ell+1},\beta_\ell,\beta_{\ell+1},\alpha_\ell\beta_\ell\}$. Renaming variables $\alpha_\ell,\alpha_{\ell+1},\beta_\ell,\beta_{\ell+1}$ to $a_1,a_2,b_1,b_2$ respectively, we see that $\agr(f'',q'') \leq \max_{q\in \mc{R},\phi,\psi\in \F_2^2}\agr(f_{\phi,\psi},q)$. Since this is true of any $\tau$, the same upper bound holds for $\agr(f',q')$ as well. 

Combined with \cref{eq:F'f'} and \cref{eq:F_HvsF}, this yields $\agr(F_{H,\Phi,\Psi},Q)\leq \max_{q\in \mc{R},\phi,\psi}\agr(f_{\phi,\psi},q) + \varepsilon$. Since this is true for every $\varepsilon/2$-hard function $H$, and the probability that a random $H$ is $(\varepsilon/2)$-hard is at least $1-\varepsilon/2$, we are done.
\end{proof}

\begin{proof}[Proof of~\cref{lem:rel-lbd}]
We prove the statement by a simple case analysis. 

The first case is that the relevant polynomial $q\in \mc{R}$ depends on at least one among $\{a_2,b_2\}$. Without loss of generality, we assume that $q$ depends on $a_2$. Then, by the definition of $\mc{R}$, we can write $q = a_2 \oplus q'$ where $q'\in \F_2[a_1,b_1,b_2]$. Consider any setting of $(a_1,b_1,b_2,z)$ such that $(a_1,b_1)\in \{(\phi_1,\psi_1),(1\oplus\phi_1,\psi_1),(1\oplus\phi_1,1\oplus\psi_1)\}$. Under this restriction, $f_{\phi,\psi}$ is a constant function whereas $q$ is a non-constant linear function depending on $a_2$. Hence, when $(a_1,b_1)\in \{(\phi_1,\psi_1),(1\oplus\phi_1,\psi_1),(1\oplus\phi_1,1\oplus\psi_1)\}$, $f_{\phi,\psi}$ and $q$ can agree on at most half the inputs. Thus we get that $\agr(f_{\phi,\psi},q)\leq \frac{5}{8}$.

The second case is that $q$ depends on neither $a_2$ nor $b_2$. In this case, consider any setting of $(a_1,b_1)\in \{(\phi_1,1\oplus\psi_1),(1\oplus\phi_1,\psi_1),(1\oplus\phi_1,1\oplus\psi_1)\}$. Under each of these restrictions, $q$ computes the constant function (recall that $q$ does not depend on $z$) whereas $f_{\phi,\psi}$ is a non-constant linear function. Thus, as before, we get that $\agr(f_{\phi,\psi},q)\leq \frac{5}{8}$. This proves the lemma.
\end{proof}

\subsubsection{The induction case}

We now induct. Let $(i,j)\in \mc{D}$ be non-minimal and let $(i',j')$ be its predecessor w.r.t. $\preceq$. Assume~\cref{lem:stronger} for $(x,y;i',j')$-symmetric polynomials. We now prove it for $(x,y;i,j)$-symmetric polynomials.

We will need the following basic Ramsey-theoretic statement. It is a
straightforward generalization (to hypergraphs) of the fact that any
large enough bipartite graph contains large bipartite independent sets
or complete bipartite subgraphs. Unfortunately we could not find
exactly this statement in the literature, so we provide a proof of the
statement in \cref{sec:ramsey}.

Let $I$ and $J$ be disjoint sets of size $n$ each. A function $c:\binom{I}{i}\times \binom{J}{j}\rightarrow \{0,1\}$ is said to be an $(i,j)$-colouring of $(I,J)$. (Recall that $\binom{I}{i}$ denotes the collection of all $i$-sized subsets of $I$.)
\begin{lemma}
\label{lem:ramsey-colour}
For any $i,j\in \mathbb{N}$ and any $r\in \mathbb{N}$, there is an $n_R(i,j,r)\geq 2^r\in \mathbb{N}$ such that for any $n \geq n_R(i,j,r)$, any disjoint $n$-sets $I,J$ and any $(i,j)$-colouring $c$ of $(I,J)$, there are sets $I'\subseteq I$ and $J'\subseteq J$ with $|I'| = |J'| = r$ such that the restriction $c'$ of $c$ to $\binom{I'}{i}\times \binom{J'}{j}$ is a constant function.
\end{lemma}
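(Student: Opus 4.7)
The plan is to reduce \cref{lem:ramsey-colour} to the classical (multi-colour) hypergraph Ramsey theorem by a two-stage argument: first make the colouring $c$ independent of its $I$-coordinate on a large subfamily, then apply ordinary $2$-colour Ramsey on the $J$-side. The main ``content'' of the proof is recognising that to decouple the two coordinates of $c$ it is cleanest to pay the exponential cost of bundling the tuple $(c(\cdot,T))_T$ into a single vector-valued colouring of $\binom{I}{i}$; everything else is bookkeeping.

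In more detail, let $N^* := R_j(2; r)$ denote the usual hypergraph Ramsey number guaranteeing that every $2$-colouring of $\binom{[N^*]}{j}$ contains a monochromatic $r$-subset, and fix any $J_0 \subseteq J$ with $|J_0| = N^*$. Each $i$-subset $S \in \binom{I}{i}$ determines a vector $\chi(S) := \bigl(c(S,T)\bigr)_{T \in \binom{J_0}{j}} \in \{0,1\}^{\binom{N^*}{j}}$, which I will regard as a colouring of $\binom{I}{i}$ by $K := 2^{\binom{N^*}{j}}$ colours. The key step is to apply the classical Erd\H{o}s--Rado hypergraph Ramsey theorem in the form ``$|I| \ge R_i(K; r)$ forces a $\chi$-monochromatic $r$-subset of $I$''. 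This yields $I' \subseteq I$ of size $r$ on which $\chi$ is constant, meaning that $c(S,T)$ does not depend on the choice of $S \in \binom{I'}{i}$. Hence the function $\tilde{c} : \binom{J_0}{j} \to \{0,1\}$ given by $\tilde{c}(T) := c(S,T)$ for any (equivalently every) $S \in \binom{I'}{i}$ is well defined.

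Finally, by the choice of $N^*$, the $2$-colouring $\tilde{c}$ of $\binom{J_0}{j}$ admits a monochromatic $J' \subseteq J_0$ of size $r$, whereupon $c$ is constant on $\binom{I'}{i} \times \binom{J'}{j}$, as required. The admissible threshold is then $n_R(i,j,r) := \max\bigl(R_i(K;r),\, N^*,\, 2^r\bigr)$, where the last term is included purely to enforce the side condition $n_R(i,j,r) \ge 2^r$ stated in the lemma. The degenerate cases $i = 0$ or $j = 0$ are absorbed by the same argument since one of the two Ramsey applications becomes trivial (the domain collapses to a single point, and monochromaticity in that stage is automatic). The resulting bound on $n_R(i,j,r)$ is of tower type, but only finiteness is needed for our application.
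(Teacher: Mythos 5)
Your proof is correct, and it takes a genuinely different route from the paper's. You reduce the bipartite product-Ramsey statement to two black-box applications of the classical hypergraph Ramsey theorem: first, a multicolour application on the $I$-side, where the colour of $S\in\binom{I}{i}$ is the whole vector $\chi(S)=(c(S,T))_{T\in\binom{J_0}{j}}$ over a fixed $J_0$ of size $N^*$, giving an $I'$ on which $c$ depends only on the $J$-coordinate; and second, a $2$-colour application on the $J_0$-side to the induced $\tilde c$, giving a monochromatic $J'$. The paper instead proves a symmetrized bipartite version (prescribing separate sizes $r_0,s_0$ and $r_1,s_1$ for the two colour classes) from scratch, by a nested induction: outer on $\min\{i,j\}$, peeling off a single vertex $a\in I$ to drop $(i,j)$ to $(i-1,j)$ via the induced colouring $c_a(A,B)=c(A\cup\{a\},B)$, and inner on $r_0+s_0+r_1+s_1$, in the style of the standard Erd\H{o}s--Szekeres/Erd\H{o}s--Rado argument. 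Your version is shorter and more modular, reusing hypergraph Ramsey as a lemma, at the price of a slightly cruder (but still finite, tower-type) bound; the paper's version is self-contained and does not invoke the Ramsey numbers $R_i(K;r)$ as a black box. Both suffice, since only finiteness of $n_R(i,j,r)$ is used downstream. Your remark that the degenerate cases $i=0$ or $j=0$ collapse one stage is also correct and matches the paper's separate treatment of that base case.
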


We now prove the inductive case of~\cref{lem:stronger}. Let $Q\in \F_2[x_1,\ldots,x_n,y_1,\ldots,y_n]$ be $(x,y;i,j)$-symmetric. By \cref{eq:xyijsym}, we have
\begin{equation}
\label{eq:xyijindn}
Q = Q_1 \oplus Q_2 = Q_1' \oplus Q_{i,j} \oplus Q_2
\end{equation}
where $Q_2$ is $(x,y)$-symmetric, $Q_1$ has multidegree at most $(i,j)$, $Q_{i,j}$ is the part of $Q_1$ of multidegree \emph{exactly} $(i,j)$, and $Q_1'$ is the part of multidegree strictly less than $(i,j)$ (i.e. at most $(i',j')$).

Use $Q_{i,j}$ to define an $(i,j)$-colouring $c$ of $([n],[n])$ as follows. For $A\in \binom{[n]}{i}$ and $B\in \binom{[n]}{j}$, we define $c(A,B)$ to be the coefficient of the monomial $\prod_{s\in A}x_s\cdot \prod_{t\in B}y_t$ in $Q_{i,j}$. Applying~\cref{lem:ramsey-colour} with $r = n(i',j',\varepsilon/4)$, we see that if $n\geq n_R(i,j,r)$, then there are $I,J\in \binom{[n]}{r}$ and $\alpha\in \F_2$ such that for all $A\in \binom{I}{i}$ and $B\in \binom{J}{j}$, we have $c(A,B) = \alpha$.

Assume that $Q$ is as in \cref{eq:xyijindn} and $n\geq n_R(i,j,r)$. We find $I,J$ as above. For any setting $\sigma\in \mc{A}_{I,J}$, we can write the polynomial $Q_{i,j}|_\sigma$ as $Q_{i,j}' \oplus Q_{i,j}''$ where $Q'_{i,j}$ is the part of degree $i+j$, and $Q_{i,j}''$ has degree strictly less than $i+j$.

Observe that 
\[
Q_{i,j}' = \bigoplus_{A\in \binom{I}{i}, B\in \binom{J}{j}} c(A,B)\prod_{s\in A}x_s \prod_{t\in B}y_t = \alpha \cdot S_i(x_{I})S_j(y_{J}),
\]
and is an $(x,y)$-symmetric polynomial (on the remaining variables $x_I,y_J$). Hence, by \cref{eq:xyijindn}, we get
\[
Q|_{\sigma} = Q_1'|_\sigma \oplus Q_{i,j}'' \oplus Q_{i,j}' \oplus Q_2|_{\sigma}.
\]
As observed above, $Q_{i,j}'$ is $(x,y)$-symmetric. Further, it is easily checked that any restriction of an $(x,y)$-symmetric polynomial continues to be $(x,y)$-symmetric on the remaining variables. Hence, $Q_2|_{\sigma}$ is also $(x,y)$-symmetric. Further, note that $Q_1'|_{\sigma}$ has multidegree at most $(i',j')$. Also, by definition, the degree of $Q_{i,j}''$ is strictly less than $i+j$ and hence the multidegree of $Q_{i,j}''$ is at most $(i',j')$. Altogether, this implies that $Q|_{\sigma}$ is a sum of an $(x,y)$-symmetric polynomial (i.e. $Q'_{i,j} \oplus Q_2|_\sigma$) and a polynomial of multidegree at most $(i',j')$ (i.e. $Q_1'|_\sigma \oplus Q''_{i,j})$. Thus, $Q|_\sigma$ is an $(x,y;i',j')$-symmetric polynomial on $r$ $x$-variables and $r$ $y$-variables, where $r = n(i',j',\varepsilon/4)$.

%Since $Q' = Q|_{\sigma}$ is a polynomial of $n(i,j,m)$ $x$-variables and $n(i,j,m)$ $y$-variables, we may apply the induction hypothesis to obtain $I\subseteq I'$ and $J\subseteq J'$ such that for each further restriction $\sigma$ of the variables indexed by $I'\setminus I$ and $J'\setminus J$, the polynomial $Q'|_\sigma$ is $(x,y)$-symmetric. 

Now, we analyze $\agr(F_{H,\Phi,\Psi},Q)$. Note that choosing a random function $H:\{0,1\}^{2n}\rightarrow\{0,1\}$ is the same as choosing each of its restrictions $H|_\sigma:\{0,1\}^{2r}\rightarrow\{0,1\}$ independently and uniformly at random. %By the induction hypothesis, we know that for each fixed $\sigma$,
%\begin{align*}
%\prob{H|_\sigma}{\exists Q''\in \F_2[x_I,y_J] \text{ of degree $\leq 3d-1$ and $(x,y;i',j')$-symmetric s.t. } \agr(F_{H|_\sigma},Q'')\geq \frac{5}{8}+\frac{\varepsilon}{4}}\leq \frac{\varepsilon}{4}.
%\end{align*}
%
%In particular, we have $\prob{H|_\sigma}{\agr(F_{H|_\sigma},Q|_\sigma)\geq \frac{5}{8}+\frac{\varepsilon}{4}}\leq \frac{\varepsilon}{4}$. Note that $F_{H|_\sigma} = F_{H}|_\sigma$. 

We claim  that for each $\sigma$, by the induction hypothesis, we have
\begin{equation}
\label{eq:indnagr}
\prob{H|_\sigma}{\agr(F_{H,\Phi,\Psi}|_\sigma, Q|_\sigma)\geq \frac{5}{8}+\frac{\varepsilon}{4}}\leq \frac{\varepsilon}{4}.
\end{equation}

Assuming the above, we show how to finish the proof. Let $Y_Q$ denote the number of $\sigma$ such that $\agr(F_{H,\Phi,\Psi}|_\sigma, Q|_\sigma)\geq \frac{5}{8}+\frac{\varepsilon}{4}$. The random variable\footnote{Note that $Y_Q$ is a random variable since it depends on the random function $H$.} $Y_Q$ is a sum of $2^{2(n-r)}$ independent $0$-$1$ random variables with $\avg{}{Y_Q}\leq 2^{2(n-r)}\cdot \frac{\varepsilon}{4}$. Thus, by the Chernoff bound, we have
\begin{equation}
\label{eq:chernoff-indn}
\prob{H}{Y_Q\geq 2^{2(n-r)}\cdot \frac{\varepsilon}{2}} \leq \prob{H}{Y_Q-\avg{}{Y_Q}\geq 2^{2(n-r)}\cdot \frac{\varepsilon}{4}}\leq \exp(-\Omega(\varepsilon^2 2^{2(n-r)})).
\end{equation}

In the event that $Y_Q < 2^{2(n-r)}\cdot \frac{\varepsilon}{2}$, we have $\prob{\sigma}{\agr(F_{H,\Phi,\Psi}|_\sigma, Q|_\sigma)\geq \frac{5}{8}+\frac{\varepsilon}{4}}\leq \frac{\varepsilon}{2}$. Hence, we see that in this case
\begin{equation}
\label{eq:ifchernoff}
\agr(F_{H,\Phi,\Psi},Q) = \avg{\sigma}{\agr(F_{H,\Phi,\Psi}|_\sigma,Q|_\sigma)}\leq \left(\frac{5}{8}+\frac{\varepsilon}{4}\right) + \prob{\sigma}{\agr(F_{H,\Phi,\Psi}|_\sigma, Q|_\sigma)\geq \left(\frac{5}{8}+\frac{\varepsilon}{4}\right)}\leq \frac{5}{8} + \frac{3\varepsilon}{4}.
\end{equation}

In particular, the probability that there is any $Q\in \F_2[x_1,\ldots,x_n,y_1,\ldots,y_n]$ of degree at most $3d-1$ such that $\agr(F_{H,\Phi,\Psi},Q)\geq \frac{5}{8}+\varepsilon$ can be upper bound bounded, using \cref{eq:chernoff-indn} and a union bound over all such $Q$, by
\[
2^{(2n)^{3d-1}}\exp(-\Omega(\varepsilon^2 2^{n-r})) \leq \exp((2n)^{3d-1} - \Omega(\varepsilon^2 2^{n-r})) \leq \exp((2n)^{3d-1} - \Omega(\varepsilon^2 \frac{2^{n}}{n})) < \varepsilon.
\]
Here, we have used the fact that the number of polynomials $Q$ of degree at most $3d-1$ is equal to the number of ways of choosing the coefficients (in $\F_2$) of $\binom{2n}{0}+\cdots+\binom{2n}{3d-1}\leq (2n)^{3d-1}$ many monomials. The second inequality follows from the fact that $n\geq n_R(i,j,r)\geq 2^r$. The final inequality is true as long $n \geq n_1(\varepsilon)$ for some $n_1(\varepsilon)\in \mathbb{N}$.

Overall, we see that if we define $n(i,j,\varepsilon) = \max\{n_R(i,j,r), n_1(\varepsilon)\}$, then for any $n\geq n(i,j,\varepsilon)$, we have the statement of the lemma for $(x,y;i,j)$-symmetric polynomials. This completes the induction.

It remains to prove \cref{eq:indnagr}. Fix any $\sigma\in \mc{A}_{I,J}$. Let $F' = F_{H,\Phi,\Psi}|_\sigma$. We use $u$ and $v$ to denote assignments to the variables indexed by $I$ and $J$ respectively and $\tilde{u}$ and $\tilde{v}$ to denote their natural completions to an assignment to all the variables (i.e. the other variables are assigned by $\sigma$). Assume that $d = 2^\ell$.

By the definition of $F_{H,\Phi,\Psi}$ in \cref{eq:altFPhiPsi} and using~\cref{fac:elemsym1}, we have
\begin{equation}
\label{eq:altF'}
F'(u,v,z) = \left\{ 
\begin{array}{ll}
0 & \text{if $|\tilde{u}|_{\ell} = \Phi_1(\tilde{u}), |\tilde{v}|_\ell = \Psi_1(\tilde{v})$,}\\
|\tilde{v}|_{\ell+1} \oplus \Psi_2(\tilde{v}) & \text{if $|\tilde{u}|_{\ell} = 1\oplus\Phi_1(\tilde{u})$ and $|\tilde{v}|_{\ell} = \Psi_1(\tilde{v})$,}\\
|\tilde{u}|_{\ell+1} \oplus \Phi_2(\tilde{u}) & \text{if $|\tilde{u}|_{\ell} = \Phi_1(\tilde{u})$ and $|\tilde{v}|_\ell = 1\oplus\Psi_1(\tilde{v})$,}\\
H|_\sigma(u,v) & \text{otherwise,}
\end{array}
\right.
\end{equation}
where $\Phi_1,\Phi_2$, being $d$-simple, are functions of $|\tilde{u}|_0,\ldots,|\tilde{u}|_{\ell-1}$, and similarly, $\Psi_1,\Psi_2$ are functions of $|\tilde{v}|_0,\ldots,|\tilde{v}|_{\ell-1}$.

We would like to write the above in terms of the bits of $|u|$ and $|v|$. This is done as follows. Consider the case of $|\tilde{u}|_\ell$. Let $|\sigma_x|$ denote the number of $1$s assigned by $\sigma$ to the $x$ variables. Note that $|\tilde{u}| = |u| + |\sigma_x|$, and hence it follows that the function $\Phi_1(\tilde{u}) = \Phi_1|_\sigma(u)$ is a function of $|u|_0,\ldots,|u|_{\ell-1}$ and hence $d$-simple w.r.t.\ $u$; similarly, $\Psi_1|_\sigma(\tilde{v}) = \Psi_1|_\sigma(u)$ is $d$-simple w.r.t.\ $v$. Similarly, we can also write $|\tilde{u}|_{\ell} = |u|_\ell \oplus \Phi_1'(u)$ for some $d$-simple $\Phi_1'(u)$ depending on $\sigma$; also, $|\tilde{v}|_\ell = |v|_\ell \oplus \Psi_1'(v)$ for some $d$-simple $\Psi_1'(v)$ depending on $\sigma$.

Further elementary reasoning (left to the reader) allows us to deduce that there are $d$-simple $\Phi_2'(u),\Psi_2'(v)$ (depending on $\sigma$) such that 
\begin{align*}
|\tilde{u}|_{\ell+1} = |u|_{\ell+1} \oplus \Phi_2'(u) \text{  when $|\tilde{u}|_{\ell} = \Phi_1|_\sigma(u)$  and $|\tilde{v}|_{\ell} = 1\oplus\Psi_1|_\sigma(v)$}\\
|\tilde{v}|_{\ell+1} = |v|_{\ell+1} \oplus \Psi_2'(v) \text{  when $|\tilde{u}|_{\ell} = 1\oplus\Phi_1|_\sigma(u)$ and $|\tilde{v}|_{\ell} =\Psi_1|_\sigma(v)$}.
\end{align*}

The above along with \cref{eq:altF'} gives us
\begin{equation}
\label{eq:altaltF'}
F'(u,v,z) = \left\{ 
\begin{array}{ll}
0 & \text{if $|u|_{\ell} =\Phi_1''(u)$ and $|v|_\ell = \Psi_1''(v)$,}\\
|v|_{\ell+1} \oplus \Psi_2''(v) & \text{if $|u|_{\ell} = 1\oplus\Phi_1''(u)$ and $|v|_{\ell} = \Psi_1''(v)$,}\\
|u|_{\ell+1} \oplus \Phi_2''(u) & \text{if $|u|_{\ell} = \Phi_1''(u)$ and $|v|_\ell = 1\oplus\Psi_1''(v)$,}\\
H|_\sigma(u,v) & \text{otherwise,}
\end{array}
\right.
\end{equation}
where for each $i\in [2]$, $\Phi''_i(u)$ satisfies $\Phi''_i = \Phi_i \oplus \Phi'_i$, and is hence $d$-simple w.r.t.\ $u$, and similarly $\Psi''_i = \Psi_i \oplus \Psi'_i$ is $d$-simple w.r.t.\ $v$. Hence, we see that $F' = F_{H,\Phi,\Psi}|_\sigma = F_{H|_\sigma, \Phi'',\Psi''}$ (i.e. same as our hard function, but on $2r$ inputs). Using the fact that $r\geq n(i',j',\varepsilon/4)$, the induction hypothesis gives us 
\begin{align*}
\prob{H|_\sigma}{\exists Q''\in \F_2[x_I,y_J] \text{ of degree $\leq 3d-1$ and $(x,y;i',j')$-symmetric s.t. } \agr(F',Q'')\geq \frac{5}{8}+\frac{\varepsilon}{4}}\leq \frac{\varepsilon}{4}.
\end{align*}

In particular, since $Q|_\sigma$ is $(x,y;i',j')$-symmetric, we have $\prob{H|_\sigma}{\agr(F',Q|_\sigma)\geq \frac{5}{8}+\frac{\varepsilon}{4}}\leq \frac{\varepsilon}{4}$, which establishes \cref{eq:indnagr} and completes the proof.

\section{Upper bounds for $\gamma_{d,k}(\Maj_n)$}
\label{sec:smol}

In this section, we show an upper bound on $\gamma_{d,k}(\Maj_n)$ where $\Maj_n$ denotes the Majority function on $n$ bits\footnote{We define the majority function as $\Maj_n(x) = 1$ iff $|x| > n/2$.}

\begin{theorem}
\label{thm:smol-2powk}
For any $k\geq 1, d\in \mathbb{Z}^+$, $\gamma_{d,k}(\Maj_n) \leq \frac{1}{2} + \frac{10d}{\sqrt{n}}$.
\end{theorem}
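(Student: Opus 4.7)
The plan is to adapt the Szegedy--Smolensky dimension-counting argument for $\Maj_n$ against $\F_2$-polynomials to polynomials over the ring $R := \Ringm{2^k}$, in the spirit of Green's lifting of such bounds from $\F_p$ to $\Ringm{p^k}$ for odd primes. Fix $Q \in \mc{P}_{d,k}$ achieving $\gamma := \agr(Q,F_k)$, and let $A := \{x : Q(x) = F_k(x)\}\subseteq\{0,1\}^n$, so $|A| = \gamma\cdot 2^n$ and on $A$ we have $Q(x) = 2^{k-1}\Maj_n(x) \in \{0,2^{k-1}\}$. The aim is to show $\gamma \le \tfrac12 + \tfrac{10d}{\sqrt n}$.

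The first main step is to prove that the restriction map $\rho\colon M_{\lceil n/2\rceil + d} \to R^A$ is surjective, where $M_r$ is the free $R$-module of multilinear polynomials in $R[x_1,\ldots,x_n]$ of degree $\le r$, of $R$-rank $\binom{n}{\le r}$. Granting this, reducing $\rho$ modulo $2$ gives a surjection of $\F_2$-vector spaces with $\dim_{\F_2}$ equal to $\binom{n}{\le \lceil n/2\rceil + d}$ and $|A|$ respectively, so $|A| \le \binom{n}{\le \lceil n/2\rceil + d}$. Using the standard estimate $\binom{n}{\le\lceil n/2\rceil + d} \le 2^{n-1} + d\binom{n}{\lceil n/2\rceil}$ together with $\binom{n}{\lceil n/2\rceil} \le 2^n\sqrt{2/(\pi n)}$ yields $\gamma \le \tfrac12 + \tfrac{10d}{\sqrt n}$ after tracking constants.

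To prove surjectivity of $\rho$, I would aim to show that every multilinear $f \in R[x_1,\ldots,x_n]$ can, on $A$, be rewritten as some $g \in M_{\lceil n/2\rceil + d}$, by eliminating each monomial $x_S$ with $|S| > \lceil n/2\rceil + d$ modulo the ideal $I_A$ of polynomials vanishing on $A$. The driving identity is that for $|S| > n/2$ the monomial $x_S$ is supported on vectors of weight $\ge |S| > n/2$, on which $F_k = 2^{k-1}$; hence on $A\cap\{x_S=1\}$ we have $Q = 2^{k-1}$, giving $(Q - 2^{k-1})\cdot x_S \equiv 0 \pmod{I_A}$. A dual identity, using the complement monomial $\prod_{i\notin S}(1-x_i)$, handles the small-$|S|$ side. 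Möbius inversion (\cref{fac:polys}, point~3) then allows us to trade the high-degree $x_S$ for an $R$-linear combination of lower-degree monomials together with terms in $I_A$, iterating down to degree $\lceil n/2\rceil + d$.

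The main obstacle is precisely this degree reduction over $R = \Ringm{2^k}$, which is not a field. The naive move ``replace $\Maj_n$ by $Q/2^{k-1}$'' is illegal because $2^{k-1}$ is not invertible in $R$, so one must work with $(Q - 2^{k-1})\cdot(\cdot)$ directly and manipulate the resulting $2$-adic structure using Kummer's theorem (\cref{thm:kummer}) to control the $2$-adic valuations of the intermediate binomial coefficients. Ensuring that the rewrite shrinks $|S|$ by at least one and never inflates the degree beyond $\lceil n/2\rceil + d$, uniformly in $k$, is the technical heart of the proof and the place where Green's odd-prime argument has to be genuinely reworked for the even-characteristic case.
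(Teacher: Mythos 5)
Your high-level plan---bound $|A|$ by showing that restrictions of degree-$(\lceil n/2\rceil+d)$ polynomials exhaust all $R$-valued functions on $A$---is the Szegedy--Smolensky dimension count, but the surjectivity of $\rho$ is exactly where the argument breaks, and your sketch does not close that gap. Over $\F_2$, surjectivity comes from the global decomposition $g = P_0(1-Q)+P_1Q$ on $A$, which uses $Q$ and $1-Q$ as indicators of $A\cap\Maj_n^{-1}(1)$ and $A\cap\Maj_n^{-1}(0)$. Over $R=\Ringm{2^k}$ the polynomial $Q$ is $\{0,2^{k-1}\}$-valued on $A$, so the analogous construction $P_0(2^{k-1}-Q)+P_1Q$ restricts to $2^{k-1}g$ on $A$, not to $g$; this shows only that $\mathrm{im}(\rho)\supseteq 2^{k-1}R^A$, and counting then gives $|A|\leq kN$ rather than $|A|\leq N$---off by a factor of $k$, which is the whole point of the theorem. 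The monomial-by-monomial elimination you sketch does not repair this: the relation $(Q-2^{k-1})x_S\equiv 0\pmod{I_A}$ has the wrong shape, since $Q\cdot x_S$ has degree $|S|+d>|S|$ (so it does not shrink $|S|$) and $2^{k-1}$ is a zero divisor (so it cannot be cancelled to isolate $x_S$). Finally, the ``reduce mod $2$ and run the classical argument'' escape hatch is closed: $\pi(Q)$ vanishes identically on $A$ because both values $0$ and $2^{k-1}$ reduce to $0\pmod 2$, so there is no $\F_2$-polynomial of degree $d$ approximating $\Maj_n$ on $A$ for the Smolensky decomposition to use. You correctly flag this as the crux, but the crux is left unresolved, and I do not see how to resolve it within this framework. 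Incidentally, Green's argument is not a Smolensky-style dimension count lifted to $\Ringm{p^k}$; it is a different argument, which is what the paper adapts.

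The paper's proof sidesteps invertibility entirely. It introduces the weaker notion of a \emph{forcing set}: $S$ is forcing for $\mc{P}_{D,k}$ if any $P\in\mc{P}_{D,k}$ vanishing on $S$ must satisfy $\pi(P)=0$, i.e.\ all its coefficients are even. \cref{lem:min-forc} shows that any forcing set has size at least $\binom{n}{\leq D}$ (by lifting a rational polynomial vanishing on a too-small $S$, clearing denominators and reducing modulo $2^k$ to get a contradiction). If $\gamma_{d,k}(\Maj_n)>\tfrac12+\tfrac{10d}{\sqrt n}$, the error set $E_P$ is too small to be forcing for $\mc{P}_{D,k}$ with $D=\lfloor n/2\rfloor-2d$, so there is a $Q$ of degree at most $D$ vanishing on $E_P$ with $\pi(Q)\neq 0$. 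Then $\pi(Q)$ is nonzero at some $x_0$ with $|x_0|>n/2$, the product $R=Q\cdot P$ has degree at most $\lfloor n/2\rfloor-d$ and vanishes on all of $\{|x|\leq n/2\}$, hence is identically zero by the interpolating-set property of Hamming balls, and yet $R(x_0)=2^{k-1}Q(x_0)\neq 0$---a contradiction. Here the factor $2^{k-1}$ appears only at the very end, multiplied against the odd unit $Q(x_0)$, so nothing is ever divided by it. This is why the bound is clean and uniform in $k$, where the dimension-counting route stalls.
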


The proof of~\cref{thm:smol-2powk} presented below is an adaptation of techniques appearing in a work of Green~\cite{Green2000}, who proved a similar result on the approximability of the parity function by polynomials over the ring $\Ringm{p^k}$, for prime $p \neq 2$.

We will need some definitions and facts about $\mc{P}_{d,k}$. 

We use $\pi$ to denote the unique ring homomorphism from $\Ringm{2^k}$ to $\Ringm{2}$. Its kernel $\pi^{-1}(0) = \{a\in \Ringm{2^k}\ |\ 2^{k-1}a = 0\}$ is the set of non-invertible elements in $\Ringm{2^k}$. 

We call a set $S\subseteq \{0,1\}^n$ \emph{forcing} for $\mc{P}_{d,k}$ if any polynomial $P\in \mc{P}_{d,k}$ that vanishes over $S$ is forced to take a value in $\pi^{-1}(0)$ at all points $x\in \{0,1\}^n$. Formally,
\[
(\forall x\in S\ \  P(x) = 0) \Rightarrow (\forall y \in \{0,1\}^n\ \  \pi(P(y))=0).
\]
Define the polynomial $\pi(P)\in \Ringm{2}[x_1,\ldots,x_n]$ to be the polynomial obtained by applying the map $\pi$ to each of the coefficients of $P$. Since a multilinear polynomial in $\Ringm{2^k}[x_1,\ldots,x_n]$ is the zero polynomial  iff it vanishes at all points of $\{0,1\}^n$ (by~\cref{fac:polys}), we see that $S$ is forcing iff $(\forall x\in S\ \  P(x) = 0) \Rightarrow \pi(P) = 0$.

Note that any interpolating set for $\mc{P}_{d,k}$ (see \cref{sec:prelims} for the definition) is forcing for $\mc{P}_{d,k}$, but the converse need not be true.

We now adapt the proof of Lemma 11 in \cite{Green2000} to bound the size of forcing sets for $\mc{P}_{d,k}$.
\begin{lemma}
\label{lem:min-forc}
If $S$ is forcing for $\mc{P}_{d,k}$, then $|S|\geq |\{0,1\}^n_{\leq d}| = \binom{n}{\leq d}$.
\end{lemma}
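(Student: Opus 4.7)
The plan is to prove the contrapositive: assuming $|S| < \binom{n}{\leq d}$, I will construct a polynomial $P \in \mathcal{P}_{d,k}$ that vanishes on $S$ but whose reduction $\pi(P)$ is a non-zero polynomial in $\mathcal{P}_{d,1}$, thereby witnessing that $S$ is not forcing. The overall strategy mirrors the Green argument: the forcing property is really a constraint modulo $2$, so I will construct a witness by first working with integer coefficients and only then reducing modulo $2^k$.

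First I would consider the $\mathbb{Z}$-linear evaluation map
\[
E : \mathbb{Z}^{\binom{n}{\leq d}} \to \mathbb{Z}^S, \qquad (c_T)_{|T|\le d} \;\mapsto\; \left( \sum_{|T|\le d} c_T \prod_{i\in T} x_i \right)_{x\in S},
\]
where the coefficients are indexed by subsets $T\subseteq [n]$ of size at most $d$. Since $|S|<\binom{n}{\leq d}$, a dimension count over $\mathbb{Q}$ forces the kernel $L=\ker(E)$ to have rank at least $1$. The critical structural fact I need is that $L$ is a \emph{saturated} sub-lattice of $\mathbb{Z}^{\binom{n}{\leq d}}$, i.e.\ $L = (L \otimes \mathbb{Q}) \cap \mathbb{Z}^{\binom{n}{\leq d}}$. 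This is immediate because the kernel of an integer-entry linear map is cut out by rational equations, so $\mathbb{Z}^{\binom{n}{\leq d}}/L$ is torsion-free.

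The next step is to extract a primitive vector from $L$. Since $L$ is saturated and non-zero, $\mathbb{Z}^{\binom{n}{\leq d}}/L$ is a free abelian group, so $L$ is a direct summand of $\mathbb{Z}^{\binom{n}{\leq d}}$. Any basis vector of $L$ extending to a basis of the ambient lattice gives a $c=(c_T)\in L$ whose coordinates have $\gcd$ equal to $1$; let me fix such a $c$. Let $P(x)=\sum_T c_T \prod_{i\in T}x_i \in \mathbb{Z}[x_1,\ldots,x_n]$, and let $\bar P\in \mathcal{P}_{d,k}$ be its reduction modulo $2^k$. Because $c\in L$, the integer polynomial $P$ vanishes at every $x\in S$, so $\bar P$ vanishes on $S$ in $\Ringm{2^k}$. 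On the other hand, the $\gcd$ of the $c_T$ being $1$ means at least one $c_T$ is odd, so $\pi(\bar P)\in \mathcal{P}_{d,1}$ has a non-zero monomial and hence is non-zero as a polynomial (and therefore does not vanish identically on $\{0,1\}^n$, by point~$4$ of~\cref{fac:polys}). This contradicts $S$ being forcing.

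I expect the main obstacle to be the saturation/primitive-vector step, since the rest is essentially bookkeeping. The subtlety is that simply finding a non-zero element of the integer kernel is not enough: if all coordinates happened to be even, the reduction $\pi(\bar P)$ would vanish and the witness would fail. The argument above rules this out by lifting to $\mathbb{Z}$ and exploiting that the $\mathbb{Z}$-kernel of a matrix with integer entries is saturated; this is where working over the integers (rather than directly over $\Ringm{2^k}$) pays off and is precisely what allows the bound $\binom{n}{\leq d}$ rather than the weaker $\binom{n}{\leq d}/k$ one would get from a naïve cardinality count using only $\ker(\mathrm{ev}_S)\subseteq 2\mathcal{P}_{d,k}$.
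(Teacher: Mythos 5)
Your proof is correct and follows essentially the same route as the paper: a rational dimension count yields a non-zero polynomial vanishing on $S$, which is then rescaled to have coprime integer coefficients and reduced modulo $2^k$ to produce a witness that $S$ is not forcing. The paper phrases the rescaling concretely (clear denominators, divide by the GCD), whereas you package the same step in the saturated-lattice/primitive-vector language, but the substance is identical.
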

\begin{proof}
 Assume for the sake of contradiction that $S \subseteq \{0,1\}^n$ is forcing for $\mc{P}_{d,k}$ and $|S| < \binom{n}{\leq d}$. The latter implies the existence of a non-zero multilinear polynomial $Q(x) \in \mathbb{Q}[x_1, \ldots, x_n]$ of degree at most $d$ satisfying $Q(x) = 0$ for all $x \in S$. 

Let $Q'(x) = \sum_{T \subseteq [n]} c_T \prod_{i \in T}x_i $ be the polynomial in $\mathbb{Z}[x_1, \ldots, x_n]$ obtained by first clearing out the denominators of the coefficients of $Q(x)$, followed by dividing the resulting polynomial by the GCD of all the coefficients. Finally, let $P(x) = \sum_{T \subseteq [n]} c'_T \prod_{i \in T}x_i$ be any polynomial in $\mc{P}_{d,k}$ satisfying $c'_T \equiv c_T \Mod{2^k}$. It follows that $P$ is a non-zero polynomial of degree at most $d$ such that $\pi(P) \neq 0$, since $\pi(P) = 0$ would imply that every coefficent of $P$ (and thus every coefficent of $Q'$) is divisible by two, which is impossible since the coefficients of $Q'$ have no common divisor. 

To complete the proof, observe that $P(x) = 0$ for all $x \in S$, and since $S$ is forcing for $\mc{P}_{d,k}$, this implies that $\pi(P) = 0$, which is a contradiction. 

\end{proof}

We now use~\cref{lem:min-forc} to prove \cref{thm:smol-2powk}.

\begin{proof}[Proof of \cref{thm:smol-2powk}]
We assume throughout that $1\leq d \leq \frac{\sqrt{n}}{10}$; otherwise, there is nothing to prove. Let $\Maj_{n,k}:\{0,1\}^n \rightarrow \Ringm{2^k}$ be the $k$-lift of the $\Maj_n$ function. Let $P\in \mc{P}_{d,k}$ be arbitrary and let $S_P = \{x\in \{0,1\}^n\ |\ P(x) = \Maj_{n,k}(x)\}$. We want to show that $|S_P|\leq 2^n \cdot (\frac{1}{2} + \frac{10d }{\sqrt{n}})$. We will argue by contradiction. So assume that $|S_P| > 2^n \cdot (\frac{1}{2} + \frac{10 d}{\sqrt{n}})$.

Let $E_P$ be the complement of $S_P$, i.e. the set of points where $P$ makes an error in computing $\Maj_{n,k}$. We have $|E_P| < 2^n(\frac{1}{2} - \frac{10 d}{\sqrt{n}})$. We will try to find a degree $D$ (for suitable $D\leq \lfloor n/2 \rfloor $) polynomial $Q$ such that $Q$ vanishes at all points in $E_P$ but has the property that $Q(x)$ is a unit (i.e. $\pi(Q(x))\neq 0$) for some $x\in \{0,1\}^n$. To be able to do this, we need the fact that $E_P$ is not forcing for $\mc{P}_{D,k}$. By~\cref{lem:min-forc}, if $E_P$ is indeed forcing for $\mc{P}_{D,k}$, then 

\begin{align*}
|E_P|&\geq \sum_{i=0}^D \binom{n}{i} = \left(\sum_{i=0}^{\lfloor n/2 \rfloor} \binom{n}{i}\right) - \sum_{i = D+1}^{\lfloor n/2 \rfloor}\binom{n}{i}\\
 & \geq 2^{n-1} - \left(\lfloor n/2 \rfloor-D\right)\cdot \binom{n}{\lfloor n/2 \rfloor}\\ 
 &\geq 2^n\cdot \left(\frac{1}{2} - \frac{2(\lfloor n/2 \rfloor-D)}{\sqrt{n}}\right) = 2^n\cdot\left(\frac{1}{2} - \frac{4d}{\sqrt{n}}\right)
\end{align*}
where the last equality follows if we choose $D = \lfloor n/2 \rfloor-2d$. This contradicts our upper bound on the size of $|E_P|$. Hence, $E_P$ cannot be forcing for $\mc{P}_{D,k}$. In particular, we can find $Q$ that vanishes on $E_P$ and furthermore, $\pi(Q(x))\neq 0$ for some $x\in \{0,1\}^n$.

We now claim that $\pi(Q(x_0))\neq 0$ for some $x_0$ of Hamming weight $> n/2 $. To see this, consider the polynomial $Q_1 =\pi(Q)$. By construction of $Q$, we know that $Q_1$ is a non-zero polynomial of degree $D$. Hence, by~\cref{fac:polys}, $Q_1$ is non-zero when restricted to the Hamming ball of radius $D <  n/2$ around the all $1$s vector. In particular, this implies that there is an input $x_0$ of Hamming weight  $ > n/2$ where $Q_1(x_0)$ is non-zero and hence $\pi(Q(x_0))\neq 0$, or equivalently $2^{k-1}Q(x_0) \neq 0$. Fix this $x_0$ for the remainder of the proof. Note that $x_0\not\in E_P$ since $Q$ vanishes on $E_P$.

Now, consider the polynomial $R(x) = Q(x)\cdot P(x)$. We first show that $R(x) = 0$ for all $x$ of Hamming weight $\le n/2$. Consider any $x$ of Hamming weight $\le n/2$. If $x\in E_P$, then $R(x) = 0$ since $Q(x) = 0$. On the other hand, if $x\not\in E_P$, then $P(x) = \Maj_{n,k}(x) = 0$ since $x$ has Hamming weight $\le n/2$. Thus, $R$ vanishes at all inputs of Hamming weight $ \le n/2$. 

Since the degree of $R$ is at most $\deg(Q) + \deg(P) = D + d = (\lfloor n/2 \rfloor-2d) + d \le \lfloor n/2 \rfloor - d$ and $R$ vanishes at all inputs of $\{0,1\}^{n}_{\le n/2}$, this implies (by~\cref{fac:polys}) that $R$ must be $0$ everywhere. However, at $x_0$, $R(x_0) = Q(x_0)P(x_0) = Q(x_0)\Maj_{n,k}(x_0) = 2^{k-1}Q(x_0)\neq 0$. This yields the desired contradiction.
\end{proof}

\section{Connection to non-classical polynomials}\label{sec:nonclass}
Let $\mathbb{T} = \mathbb{R}/\mathbb{Z}$ denote the one dimensional torus. Observing that the additive structure of $\F_2$ is isomorphic to the additive subgroup $\{0,1/2\} < \mathbb{T}$, we can think of a Boolean function $F:\F_2^n \rightarrow \F_2$ as a function $F:\F_2^n \rightarrow \{0,1/2\}$, and conversely, a map $F:\F_2^n \rightarrow \{0,1/2\}$ as a Boolean function. 

Tao and Ziegler~\cite{TaoZ2012} give a characterization of non-classical polynomials as follows:
\begin{definition}[Tao and Ziegler~\cite{TaoZ2012}]\label{def:nonclass}
 A function $F:\F_2^n \rightarrow \mathbb{T}$ is a non-classical polynomial of degree $\le d$ if and only if it has the following form:
 $$F(x_1,\ldots,x_n) = \alpha + \sum_{0\le e_1, \ldots, e_n \le 1, k \ge 1: \sum_i e_i + (k-1) \le d} \frac{c_{e_1, \ldots, e_n,k} x_1^{e_1} \ldots x_n^{e_n}}{2^k} \Mod{1}$$
 Here  $\alpha \in \mathbb{T}$, and $c_{e_1,\ldots,e_n,k} \in \{0,1\}$ are uniquely determined. $\alpha$ is called the \textit{shift} of $F$, and the largest $k$ such that $c_{e_1, \ldots, e_n,k} \neq 0$ for some $(e_1, \ldots, e_n) \in \{0,1\}^n$ is called the \textit{depth} of $F$.
\end{definition}

Since we are interested in the agreement of a non-classical polynomial with Boolean ($\{0,1/2\}$-valued) functions, we will only consider polynomials with shift $\alpha = \frac{A}{2^{k}}$, where $k$ is the depth of the polynomial and $A\in \{0,\ldots,2^k-1\}$.  
\begin{remark}
\label{rem:class}
 Classical polynomials are non-classical polynomials with $\alpha \in \{0,1/2\}$ and depth $=1$. It is easy to see that every classical polynomial corresponds to a Boolean function. It is also not hard to show that every Boolean function can be represented as a classical polynomial. 
\end{remark}

The following lemma relates our model to non-classical polynomials:
\begin{lemma}
\label{lem:conn}
 Let $F$ be a Boolean function, and $d,k \in \mathbb{Z}^+$, $d \ge k$.
 \begin{enumerate}
  \item If there is a non-classical polynomial $P$ of degree $d$ and depth $k$ satisfying $\agr(F,P) = \gamma$, then there is a $P' \in \mathcal{P}_{d,k}$ satisfying $\agr(F_k,P') = \gamma$, where $F_k$ is the $k$-lift of $F$.
  \item If there is a $P \in \mathcal{P}_{d,k}$ satisfying $\agr(F_k,P) = \gamma$, then there is a non-classical polynomial $P'$ of degree $\le d + k -1$ and depth $k$ satisfying $\agr(F,P') = \gamma$.
 \end{enumerate}
 \end{lemma}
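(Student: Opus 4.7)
The plan is to push polynomials between $\Ringm{2^k}$ and $\To$ using the injective group homomorphism $\phi:\Ringm{2^k}\to\To$ defined by $\phi(a)=a/2^k\Mod 1$. Its image is the $2^k$-torsion subgroup of $\To$, and it restricts to a bijection between $\{0,2^{k-1}\}\subseteq\Ringm{2^k}$ (the range of $F_k$) and $\{0,1/2\}\subseteq\To$ (the range of $F$ under its standard identification with a $\{0,1/2\}$-valued function). In particular, $P'(x)=F_k(x)$ iff $\phi(P'(x))=F(x)$, so agreements transfer across $\phi$ on the nose. Consequently, all that needs to be proved is that the Tao--Ziegler normal form (Definition~\ref{def:nonclass}) and the $\Ringm{2^k}[x_1,\ldots,x_n]$ representation are interconvertible with the degree/depth relationships claimed.

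For part~1, start with a non-classical polynomial $P(x)=A/2^k+\sum_{e,j}c_{e,j}\,x^e/2^j$ of degree $d$ and depth $k$, where $1\le j\le k$, $c_{e,j}\in\{0,1\}$, and $|e|+(j-1)\le d$. Define
\[
P'(x)\defeq A+\sum_{e,j}2^{k-j}c_{e,j}\,x^e\in\Ringm{2^k}[x_1,\ldots,x_n].
\]
Then $\phi(P'(x))=P(x)$ by inspection, and the degree of $P'$ is at most $\max\{|e|:c_{e,j}\neq 0\}\le d$, so $P'\in\mathcal{P}_{d,k}$ and $\agr(F_k,P')=\agr(F,P)$.

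For part~2, given $P'(x)=\sum_S c_S x_S\in\mathcal{P}_{d,k}$, expand each coefficient in binary as $c_S=\sum_{j=1}^{k}b_{S,j}\,2^{k-j}$ with $b_{S,j}\in\{0,1\}$. Then
\[
\phi(P'(x))=\sum_{S}\sum_{j=1}^{k}\frac{b_{S,j}\,x_S}{2^j}\Mod 1,
\]
which is already in Tao--Ziegler normal form with shift $0$ and coefficients $c_{e,j}=b_{S,j}$ where $S$ is the support of $e$. Every nonzero term satisfies $|S|\le d$ and $j\le k$, so $|S|+(j-1)\le d+k-1$; the degree is therefore at most $d+k-1$ and the depth is at most $k$. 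Agreement is again preserved by $\phi$.

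The one minor nuisance is that the lemma states ``depth $k$'' rather than ``depth at most $k$''; if all coefficients $c_S$ happen to be even, my construction yields a polynomial of depth strictly less than $k$. This is cosmetic and can be patched by, e.g., adding $1/2^k$ to the coefficient of the empty monomial at depth $k$ and simultaneously subtracting $1/2^k$ from the shift, which leaves the function (and the agreement) unchanged while forcing the depth to equal $k$. Beyond this small bookkeeping point the argument is routine: the whole lemma is just expanding coefficients in base $2$ and matching the two normal forms across $\phi$.
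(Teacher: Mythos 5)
Your argument is essentially the same as the paper's proof: both convert between the Tao--Ziegler normal form and $\Ringm{2^k}[x_1,\ldots,x_n]$ by passing through an integer-coefficient representation and dividing or multiplying by $2^k$, and both use the fact that the bijection $\{0,2^{k-1}\}\leftrightarrow\{0,1/2\}$ makes agreement transfer exactly. The one point worth flagging is your proposed patch for forcing the depth to be exactly $k$ in part~2: trading $1/2^k$ between the shift $\alpha$ and the constant monomial is precisely the move the uniqueness clause in \cref{def:nonclass} rules out --- the constant part of a non-classical polynomial is absorbed into $\alpha$ so that the $c_{e,k}$'s can be ``uniquely determined'' at all --- so the patched expression is no longer in normal form and the depth is unchanged. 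The paper's own proof also does not verify that the depth is exactly (rather than at most) $k$; the clean resolution is to state part~2 with ``depth $\le k$'', which is all the downstream corollaries require.
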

 \begin{proof}
  Fix $F$, $d$, and $k$ for the rest of the proof.

\textit{Proof of $1$}: Let $P$ be a non-classical polynomial of degree $d$ and depth $k$ with $\agr(F,P) = \gamma$. It is not hard to verify that $P$ can be written in the following form (See, e.g., proof of Lemma $2.2$ in \cite{BhowmickL2015}):
$$P(x) = \frac{P''(x)}{2^k} \Mod{1}$$
where $P''(x) \in \mathbb{Z}[x_1,\ldots,x_n]$ is of degree $d$. 

Suppose $P''(x) = \sum_{S \subseteq [n]} c_S \prod_{i \in S} x_i$. Choose $P' \in \mathcal{P}_{d,k}$, $P'(x) = \sum_{S \subseteq [n]} c'_S \prod_{i \in S} x_i$, satisfying
$$ \forall S \subseteq [n],\ c'_S \equiv c_S \Mod{2^k}.$$
By our choice of $P'$, we have that, for every $x \in \{0,1\}^n$ and $a \in \{0, \ldots, 2^k -1 \}$,
$$P(x) = \frac{a}{2^k} \Leftrightarrow P'(x) = a.$$
It follows that $\agr(F_k,P') = \gamma$.

\textit{Proof of $2$}: Let $P \in \mathcal{P}_{d,k}$ such that $\agr(F_k,P) = \gamma$. Using arguments similar to above, we can find a $P'' \in\mathbb{Z}[x_1,\ldots,x_n]$ of degree $d$ such that $P''(x) \equiv P(x) \Mod{2^k}$, for all $x \in \{0,1\}^n$. 

Define $P'(x)$ as
$$P'(x) = \frac{P''(x)}{2^k} \Mod{1}.$$
By comparing to the form in \cref{def:nonclass}, it is easy to see that $P'(x)$ is a non-classical polynomial of degree at most $d + k -1$ and depth $k$. Furthermore, we have that, for all $x \in \{0,1\}^n$ and $a \in \{0,\ldots,2^k -1\}$,
$$P(x) = a \Leftrightarrow P'(x) = \frac{a}{2^k}.$$
This completes the proof.

 \end{proof}

The first part of~\cref{lem:conn} implies the following corollary of \cref{thm:smol-2powk}:
\begin{corollary}
 Let $F: \F_2^n \rightarrow \mathbb{T}$ be a non-classical polynomial of degree $d$. Then,
 $$\Pr_{x \sim \F_2^n}[\maj_n(x) = F(x)] \le \frac{1}{2} + O\left(\frac{d}{\sqrt{n}}\right).$$
\end{corollary}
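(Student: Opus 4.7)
The plan is an essentially immediate composition of the reduction in \cref{lem:conn}(1) with the upper bound in \cref{thm:smol-2powk}: the former translates the question about non-classical approximations of $\Maj_n$ into a question about multilinear polynomials over $\Ringm{2^k}$ approximating the $k$-lift $\Maj_{n,k}$, and the latter bounds exactly that.

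Concretely, let $F:\F_2^n\to\mathbb{T}$ be a non-classical polynomial of degree $d$ and depth $k$. Inspecting \cref{def:nonclass}, the constraint $\sum_i e_i+(k-1)\le d$ forces $k\le d+1$, and in the boundary case $k=d+1$ the only admissible term of depth $d+1$ is a constant $c/2^{d+1}$; absorbing it into the shift $\alpha$, we may assume $k\le d$, so that the hypothesis $d\ge k$ of \cref{lem:conn} is satisfied. I would then view $\Maj_n$ as the Boolean function in the statement of \cref{lem:conn} (identifying $\F_2$ with $\{0,1/2\}\subset\mathbb{T}$) and $F$ as the approximating non-classical polynomial. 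Writing $\gamma=\Pr_x[\Maj_n(x)=F(x)]=\agr(\Maj_n,F)$, part (1) of \cref{lem:conn} produces a polynomial $P'\in\mathcal{P}_{d,k}$ with $\agr(\Maj_{n,k},P')=\gamma$, and in particular $\gamma\le \gamma_{d,k}(\Maj_n)$.

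Finally, plugging in \cref{thm:smol-2powk} gives
\[
\gamma \;\le\; \gamma_{d,k}(\Maj_n) \;\le\; \frac{1}{2}+\frac{10d}{\sqrt{n}} \;=\; \frac12+O\!\left(\frac{d}{\sqrt n}\right),
\]
which is the claimed bound.

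There is no real obstacle here, since both ingredients have already been established; the proof is essentially a one-line invocation. The only minor piece of bookkeeping is the degree-versus-depth convention in \cref{def:nonclass}, handled above by folding the possible depth-$(d+1)$ constant into the shift so that \cref{lem:conn}(1) applies with $k\le d$.
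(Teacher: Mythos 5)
Your proof is correct and follows exactly the route the paper intends: combine \cref{lem:conn}(1) with \cref{thm:smol-2powk}, noting that the resulting bound $\frac12+\frac{10d}{\sqrt n}$ is uniform in the depth $k$. The paper states the corollary without spelling out the argument, and your care with the boundary case (depth $d+1$ can arise only from a constant term, which folds into the shift $\alpha$, making $d\ge k$ available) is a reasonable and correct piece of bookkeeping that the paper leaves implicit.
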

This proves a conjecture of Bhowmick and Lovett \cite{BhowmickL2015} that non-classical polynomials of degree $d$ do not approximate the Majority function any better than classical polynomials of the same degree.

The following is a consequence of \cref{thm:alonbeig} and the first part of~\cref{lem:conn}:
\begin{corollary}
\label{thm:alonbeig2}
 Let $\ell \ge 2$. Then, for every classical polynomial $P:\F_2^n \rightarrow \mathbb{T}$ of degree $\le 2^\ell - 1$,
 $$ \Pr_{x \sim \F_2^n} [P(x) = S_{2^\ell}(x)] \le \frac{1}{2} + o(1).$$
\end{corollary}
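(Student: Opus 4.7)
The plan is to reduce this to the classical Alon--Beigel/Green--Tao bound (\cref{thm:alonbeig}) via the first part of \cref{lem:conn}. The statement is essentially a repackaging of \cref{thm:alonbeig} under the dictionary between polynomials in $\F_2[x_1,\dots,x_n]$ and classical non-classical polynomials (in the sense of \cref{def:nonclass}), so the proof should be very short.

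First, I would unpack the assumption on $P$. By \cref{rem:class}, a classical polynomial $P:\F_2^n \to \To$ is precisely a non-classical polynomial of depth $k=1$ with shift $\alpha\in\{0,1/2\}$; in particular it is $\{0,1/2\}$-valued and so corresponds (via the identification $\F_2 \cong \{0,1/2\}<\To$) to an honest Boolean function, and similarly $S_{2^\ell}$ may be viewed as a $\{0,1/2\}$-valued function. Thus $\Pr_x[P(x) = S_{2^\ell}(x)]$ equals $\agr(S_{2^\ell},P)$ in the usual Boolean sense.

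Next, I would apply \cref{lem:conn}(1) with $F = S_{2^\ell}$, $d = 2^\ell-1$, and $k=1$. This yields a polynomial $P' \in \mc{P}_{2^\ell-1,1} = \F_2[x_1,\dots,x_n]$ (multilinear of degree at most $2^\ell-1$) such that $\agr((S_{2^\ell})_1, P') = \agr(S_{2^\ell}, P)$. Since the $1$-lift of a Boolean function is the function itself (the range $\{0,2^{k-1}\}$ becomes $\{0,1\}$ when $k=1$), we have $(S_{2^\ell})_1 = S_{2^\ell}$, so $\agr(S_{2^\ell}, P') = \agr(S_{2^\ell}, P)$.

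Finally, I would invoke \cref{thm:alonbeig} on the multilinear polynomial $P' \in \F_2[x_1,\dots,x_n]$ of degree at most $2^\ell-1$ to conclude that $\Pr_{x}[S_{2^\ell}(x) = P'(x)] \le \tfrac12 + o(1)$, which is exactly the desired bound on $\Pr_x[P(x) = S_{2^\ell}(x)]$. There is no real obstacle here beyond carefully tracking the identifications between $\F_2$, $\{0,1/2\}\subset\To$, and the $1$-lift of a Boolean function; all the analytic content lives in \cref{thm:alonbeig}, and \cref{lem:conn}(1) is purely a syntactic translation.
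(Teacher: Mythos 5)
Your proof is correct and follows exactly the route the paper intends: the paper states the corollary as a consequence of \cref{thm:alonbeig} together with part (1) of \cref{lem:conn}, and your argument simply spells out that translation (identify the classical polynomial $P$ as a depth-$1$ non-classical polynomial, apply \cref{lem:conn}(1) with $k=1$ to get an $\F_2$-polynomial $P'$ of the same degree with the same agreement, and invoke \cref{thm:alonbeig}). The bookkeeping about $(S_{2^\ell})_1 = S_{2^\ell}$ and the $\F_2 \cong \{0,1/2\}$ identification is handled correctly.
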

On the other hand, the second part of~\cref{lem:conn} and \cref{theorem:elem-agr} imply
\begin{corollary}
\label{cor:symm}
For every $\ell \ge 2$, there is a non-classical polynomial $F:\F_2^n \rightarrow \mathbb{T}$ of degree $\le 2^{\ell-1} + 2^{\ell-2} + 2$ and depth $3$ such that
$$ \Pr_{x \sim \F_2^n}[F(x) = S_{2^\ell}(x)] \ge \frac{9}{16} - o(1) $$
\end{corollary}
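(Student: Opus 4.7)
The plan is to obtain the non-classical polynomial by applying the second part of Lemma \ref{lem:conn} directly to the $\Ringm{8}$-polynomial constructed in Theorem \ref{theorem:elem-agr}. That theorem (with $k=3$ and $d = 2^{\ell-1} + 2^{\ell-2}$) exhibits a polynomial $P \in \mathcal{P}_{d,3}$ such that the agreement of $P$ with the $3$-lift $(S_{2^\ell})_3$ is at least $\frac{9}{16} - o(1)$. Since $d \ge k$ (for $\ell \ge 2$ we have $d \ge 3$), the hypothesis of the second part of Lemma \ref{lem:conn} is satisfied with $F = S_{2^\ell}$.

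Applying that lemma converts $P$ into a non-classical polynomial $F'$ of depth $3$ and degree at most $d + k - 1 = (2^{\ell-1} + 2^{\ell-2}) + 3 - 1 = 2^{\ell-1} + 2^{\ell-2} + 2$, preserving the agreement exactly. Concretely, one lifts $P$ to an integer polynomial $P'' \in \mathbb{Z}[x_1,\ldots,x_n]$ of degree at most $d$ that agrees with $P$ modulo $2^k$ on $\{0,1\}^n$, and then sets $F'(x) = P''(x)/2^k \bmod 1$; the Tao--Ziegler normal form in Definition \ref{def:nonclass} absorbs an additive degree increase of exactly $k-1$ into the bookkeeping, giving the claimed degree bound. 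The pointwise correspondence $P(x) = a \Leftrightarrow F'(x) = a/2^k$ immediately yields $\agr(S_{2^\ell}, F') \ge \frac{9}{16} - o(1)$.

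There is essentially no real obstacle to verify here — the content of the corollary is already contained in the two earlier results, and the only thing to check is parameter tracking (degree $d + k - 1$ and depth $k$) together with the trivial observation that $S_{2^\ell}$ is Boolean, so that the translation of Lemma \ref{lem:conn} between agreement of $F$ with a non-classical polynomial and agreement of its $k$-lift with an element of $\mathcal{P}_{d,k}$ applies.
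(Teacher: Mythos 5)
Your proof is correct and takes exactly the route the paper intends: the paper derives \cref{cor:symm} directly from the second part of \cref{lem:conn} and \cref{theorem:elem-agr}, and your parameter tracking ($d = 2^{\ell-1}+2^{\ell-2} \ge 3 = k$ for $\ell \ge 2$, degree bound $d + k - 1 = d + 2$, depth $k = 3$, agreement preserved) matches. No gap.
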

Noting that $2^{\ell -1 } + 2^{\ell -2} + 2 < 2^{\ell}$ for $\ell \ge 4$,~\cref{thm:alonbeig2} and~\cref{cor:symm} imply the following:
\begin{theorem}
 There is a Boolean function $F:\F_2^n \rightarrow \{0,1/2\}$ and $d \ge 1$, such that for every classical polynomial $P$ of degree at most $d$, we have
 $$\Pr_{x \sim \F_2^n}[F(x) = P(x)] \le \frac{1}{2} + o(1),$$
 but there is a non-classical polynomial $P'$ of degree $d' \le d$ satisfying
 $$\Pr_{x \sim \F_2^n}[F(x) = P'(x)] \ge \frac{1}{2} + \Omega(1).$$
\end{theorem}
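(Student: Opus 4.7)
The plan is to take $F$ to be the elementary symmetric polynomial $S_{2^\ell}$ on $n$ variables, for some fixed $\ell \geq 4$, viewed as a $\{0,1/2\}$-valued function via the identification of $\F_2$ with $\{0,1/2\} < \mathbb{T}$. I would set the degree bound $d = 2^\ell - 1$. Upper and lower bounds for the two sides of the separation will then come, respectively, from \cref{thm:alonbeig2} and \cref{cor:symm}, which are already in place.

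Concretely, first I would invoke \cref{thm:alonbeig2}: for every classical polynomial $P : \F_2^n \to \mathbb{T}$ of degree at most $d = 2^\ell - 1$, we have $\Pr_x[P(x) = S_{2^\ell}(x)] \le 1/2 + o(1)$. This handles the upper bound for classical polynomials. Next, I would apply \cref{cor:symm}, which gives a non-classical polynomial $P'$ of degree at most $d' = 2^{\ell-1} + 2^{\ell-2} + 2$ and depth $3$ with $\Pr_x[P'(x) = S_{2^\ell}(x)] \ge 9/16 - o(1) \ge 1/2 + \Omega(1)$.

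The only thing left is the elementary arithmetic check that $d' \le d$, i.e. $2^{\ell-1} + 2^{\ell-2} + 2 \le 2^\ell - 1$. Rearranging, this is equivalent to $2^{\ell-2} \ge 3$, which holds precisely when $\ell \ge 4$. So fixing any $\ell \ge 4$ and taking $F = S_{2^\ell}$, $d = 2^\ell - 1$, $d' = 2^{\ell-1} + 2^{\ell-2} + 2$ yields the claimed separation between classical and non-classical polynomial approximations.

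There is really no obstacle here: all the genuine work has been done in \cref{theorem:elem-agr} (the non-classical upper construction from the $3$-lift) and in the Green–Tao/Alon–Beigel-style lower bound \cref{thm:alonbeig}, together with the translation lemma \cref{lem:conn}. The final theorem is essentially a bookkeeping step that chooses $\ell$ large enough so that the degree increase $d' - d$ coming from the depth-$3$ shift (which is at most $k - 1 = 2$ in the second part of \cref{lem:conn}) is absorbed by the slack between $2^{\ell-1} + 2^{\ell-2}$ and $2^\ell - 1$.
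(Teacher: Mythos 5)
Your proposal is correct and is exactly the paper's argument: take $F = S_{2^\ell}$ for $\ell \geq 4$ with $d = 2^\ell - 1$, apply \cref{thm:alonbeig2} for the classical side and \cref{cor:symm} for the non-classical side, and observe that $2^{\ell-1} + 2^{\ell-2} + 2 < 2^\ell$ once $\ell \geq 4$. The arithmetic reduction to $2^{\ell-2} \geq 3$ is right, and the paper states the identical observation inline before the theorem.
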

This provides a counterexample to an informal conjecture of Bhowmick and Lovett~\cite{BhowmickL2015} that, for any Boolean function $F$, non-classical polynomials of degree $d$ do not approximate $F$ any better than classical polynomials of the same degree.

\section{Acknowledgements}
We would like to thank David Barrington for taking the time to explain Szegedy's~\cite{Szegedy1989} result to us, Arkadev Chattopadhyay for referring us to Green's result~\cite{Green2000}, and Swagato Sanyal for helpful discussions. We are also grateful to the anonymous reviewers for their detailed and helpful comments. In particular, we thank an anonymous reviewer for STACS 2017 who pointed out an error in the induction case of a previous proof of \cref{lem:stronger}. 

{\small
\bibliographystyle{prahladhurl}
%\bibliography{../../jrnl-names-abb,../../prahladhbib,../../crossref}
\bibliography{BHS-bib}

\begin{thebibliography}{AWY15}

\bibitem[AB01]{AlonB2001}
\textsc{Noga Alon} and \textsc{Richard Beigel}.
\newblock \emph{Lower bounds for approximations by low degree polynomials over
  $\mathbb{Z}_{m}$}.
\newblock In \emph{Proc.\ $16$th IEEE Conf.\ on Computational Complexity},
  pages 184--187. 2001.
\newblock
  \href{http://dx.doi.org/10.1109/CCC.2001.933885}{\path{doi:10.1109/CCC.2001.933885}}.

\bibitem[AWY15]{AbboudWY2015}
\textsc{Amir Abboud}, \textsc{Richard~Ryan Williams}, and \textsc{Huacheng Yu}.
\newblock \emph{More applications of the polynomial method to algorithm
  design}.
\newblock In \emph{Proc.\ $26$th Annual {ACM}-{SIAM} Symp.\ on Discrete
  Algorithms (SODA)}, pages 218--230. 2015.
\newblock
  \href{http://dx.doi.org/10.1137/1.9781611973730.17}{\path{doi:10.1137/1.9781611973730.17}}.

\bibitem[BHS17]{BhrushundiHS2017}
\textsc{Abhishek Bhrushundi}, \textsc{Prahladh Harsha}, and \textsc{Srikanth
  Srinivasan}.
\newblock \emph{On polynomial approximations over ${Z/2^k Z}$}.
\newblock In \textsc{Brigitte Vall\'{e}e} and \textsc{Heribert Vollmer}, eds.,
  \emph{Proc.\ $34$th Annual Symp.\ on Theoretical Aspects of Comp.\ Science
  (STACS)}, volume~66 of \emph{LIPIcs}. Schloss Dagstuhl, 2017.

\bibitem[BL15]{BhowmickL2015}
\textsc{Abhishek Bhowmick} and \textsc{Shachar Lovett}.
\newblock \emph{Nonclassical polynomials as a barrier to polynomial lower
  bounds}.
\newblock In \emph{Proc.\ $30$th Computational Complexity Conf.}, pages 72--87.
  2015.
\newblock \href{http://arxiv.org/abs/1412.4719}{\path{arXiv:1412.4719}},
  \href{http://dx.doi.org/10.4230/LIPIcs.CCC.2015.72}{\path{doi:10.4230/LIPIcs.CCC.2015.72}}.

\bibitem[Efr12]{Efremenko2012}
\textsc{Klim Efremenko}.
\newblock \emph{3-query locally decodable codes of subexponential length}.
\newblock SIAM J. Comput., 41(6):1694--1703, 2012.
\newblock (Preliminary version in {\em 41st STOC}, 2009).
\newblock \href{http://eccc.hpi-web.de/report/2008/069}{\path{eccc:TR08-069}},
  \href{http://dx.doi.org/10.1137/090772721}{\path{doi:10.1137/090772721}}.

\bibitem[Gop08]{Gopalan2008}
\textsc{Parikshit Gopalan}.
\newblock \emph{Query-efficient algorithms for polynomial interpolation over
  composites}.
\newblock SIAM J. Comput., 38(3):1033--1057, 2008.
\newblock (Preliminary version in {\em 17th SODA}, 2006).
\newblock
  \href{http://dx.doi.org/10.1137/060661259}{\path{doi:10.1137/060661259}}.

\bibitem[Gre00]{Green2000}
\textsc{Frederic Green}.
\newblock \emph{A complex-number {F}ourier technique for lower bounds on the
  mod-m degree}.
\newblock Comput.\ Complexity, 9(1):16--38, 2000.
\newblock (Preliminary version in {\em 12th STACS}, 1995).
\newblock
  \href{http://dx.doi.org/10.1007/PL00001599}{\path{doi:10.1007/PL00001599}}.

\bibitem[Gro00]{Grolmusz2000}
\textsc{Vince Grolmusz}.
\newblock \emph{Superpolynomial size set-systems with restricted intersections
  mod 6 and explicit {R}amsey graphs}.
\newblock Combinatorica, 20(1):71--86, 2000.
\newblock
  \href{http://dx.doi.org/10.1007/s004930070032}{\path{doi:10.1007/s004930070032}}.

\bibitem[GT09]{GreenT2009}
\textsc{Ben~Joseph Green} and \textsc{Terence Tao}.
\newblock \emph{The distribution of polynomials over finite fields, with applications
  to the {G}owers norms}.
\newblock Contributions to Discrete Mathematics, 4(2), 2009.
\newblock \href{http://arxiv.org/abs/0711.3191}{\path{arXiv:0711.3191}}.

\bibitem[Knu97]{Knuth-tacopI}
\textsc{Donald~Ervin Knuth}.
\newblock \emph{Fundamental Algorithms}, volume~I of \emph{The Art of Computer
  Programming}.
\newblock Addison-Wesley, 3rd edition, 1997.

\bibitem[KS04]{KlivansS2004}
\textsc{Adam~R. Klivans} and \textsc{Rocco~A. Servedio}.
\newblock \emph{Learning {DNF} in time $2^{O(n^{1/3})}$}.
\newblock J. Comput.\ Syst.\ Sci., 68(2):303--318, 2004.
\newblock (Preliminary version in {\em 33rd STOC}, 2001).
\newblock
  \href{http://dx.doi.org/10.1016/j.jcss.2003.07.007}{\path{doi:10.1016/j.jcss.2003.07.007}}.

\bibitem[LMN93]{LinialMN1993}
\textsc{Nathan Linial}, \textsc{Yishay Mansour}, and \textsc{Noam Nisan}.
\newblock \emph{Constant depth circuits, {F}ourier transform, and
  learnability}.
\newblock J. ACM, 40(3):607--620, 1993.
\newblock (Preliminary version in {\em 30th FOCS}, 1989).
\newblock
  \href{http://dx.doi.org/10.1145/174130.174138}{\path{doi:10.1145/174130.174138}}.

\bibitem[Raz87]{Razborov1987}
\textsc{Alexander~A. Razborov}.
\newblock \emph{\foreignlanguage{russian}{Нжние оценки размера
  схем ограниченной глубины в полном
  базисе, содержащем функцию логического
  сложения} ({R}ussian) [{L}ower bounds on the size of bounded depth
  circuits over a complete basis with logical addition]}.
\newblock Mathematicheskie Zametki, 41(4):598--607, 1987.
\newblock (English translation in {\em Mathematical Notes of the Academy of
  Sciences of the USSR}, 41(4):333--338, 1987).
\newblock
  \href{http://dx.doi.org/10.1007/BF01137685}{\path{doi:10.1007/BF01137685}}.

\bibitem[Smo87]{Smolensky1987}
\textsc{Roman Smolensky}.
\newblock \emph{Algebraic methods in the theory of lower bounds for {Boolean}
  circuit complexity}.
\newblock In \emph{Proc.\ $19$th ACM Symp.\ on Theory of Computing (STOC)},
  pages 77--82. 1987.
\newblock
  \href{http://dx.doi.org/10.1145/28395.28404}{\path{doi:10.1145/28395.28404}}.

\bibitem[Smo93]{Smolensky1993}
---{}---{}---.
\newblock \emph{On representations by low-degree polynomials}.
\newblock In \emph{Proc.\ $34$th IEEE Symp.\ on Foundations of Comp.\ Science
  (FOCS)}, pages 130--138. 1993.
\newblock
  \href{http://dx.doi.org/10.1109/SFCS.1993.366874}{\path{doi:10.1109/SFCS.1993.366874}}.

\bibitem[Sze89]{Szegedy1989}
\textsc{Mario Szegedy}.
\newblock \emph{Algebraic Methods in Lower Bounds for Computational Models with
  Limited Communication}.
\newblock Ph.D. thesis, University of Chicago, 1989.

\bibitem[TZ12]{TaoZ2012}
\textsc{Terence Tao} and \textsc{Tamar Ziegler}.
\newblock \emph{The inverse conjecture for the {G}owers norm over finite fields
  in low characteristic}.
\newblock Ann.\ Comb., 16(1):121--188, 2012.
\newblock \href{http://arxiv.org/abs/1101.1469}{\path{arXiv:1101.1469}},
  \href{http://dx.doi.org/10.1007/s00026-011-0124-3}{\path{doi:10.1007/s00026-011-0124-3}}.

\bibitem[Wil14]{Williams2014}
\textsc{Ryan Williams}.
\newblock \emph{New algorithms and lower bounds for circuits with linear
  threshold gates}.
\newblock In \emph{Proc.\ $46$th ACM Symp.\ on Theory of Computing (STOC)},
  pages 194--202. 2014.
\newblock \href{http://arxiv.org/abs/1401.2444}{\path{arXiv:1401.2444}},
  \href{http://dx.doi.org/10.1145/2591796.2591858}{\path{doi:10.1145/2591796.2591858}}.

\end{thebibliography}
}

\appendix

\section{Proof of {\cref{lem:ramsey-colour}}}\label{sec:ramsey}

\begin{proof}
Note that the constraint $n_R(i,j,r) \geq 2^r$ is easy to satisfy since if the latter part of the lemma holds for some $n_R(i,j,r) < 2^r$, then it continues to be the case for $n_R(i,j,r) = 2^r$. So we ignore the constraint $n_R(i,j,r) \geq 2^r$ for the rest of the proof. 

We prove by induction the following stronger statement. For any $i,j\in \mathbb{N}$ and any $r_0,s_0,r_1,s_1\in \mathbb{N}$, there is an $m_R(i,j;r_0,s_0,r_1,s_1)\in \mathbb{N}$ such that for any $n \geq m_R(i,j;r_0,s_0,r_1,s_1)$, any disjoint $n$-sets $I,J$ and any $(i,j)$-colouring $c$ of $(I,J)$, one of the following holds.
\begin{itemize}
\item There are sets $I'\subseteq I$ and $J'\subseteq J$ with $|I'| = r_0$ and $|J'| = s_0$ such that the restriction $c_1$ of $c$ to $\binom{I'}{i}\times \binom{J'}{j}$ is the constant $0$ function.
\item There are sets $I'\subseteq I$ and $J'\subseteq J$ with $|I'| = r_1$ and $|J'|= s_1$ such that the restriction $c_1$ of $c$ to $\binom{I'}{i}\times \binom{J'}{j}$ is the constant $1$ function.
\end{itemize}

Setting $r_0=r_1 = s_0=s_1=r$ above clearly yields the lemma.

The proof is by induction on $\min\{i,j\}$. Note that the statement is trivial when $i = j = 0$, since a $(0,0)$-colouring is by definition a constant function. So we can take $m_R(0,0;r_0, s_0, r_1, s_1) = \max\{r_0,s_0,r_1,s_1\}$ for any $r_0,r_1,s_0,s_1\in \mathbb{N}$. 

Now consider the case when $\min\{i,j\} = 0$ and $\max\{i,j\}\geq 1$; w.l.o.g. assume $j = \max\{i,j\}$. In this case, the function $c$ is essentially a colouring of $\binom{J}{j}$ and hence the statement  of the lemma reduces to the case of the standard Ramsey theorem for $j$-uniform hypergraphs. Thus, we know that $m_R(i,j;r_0,s_0,r_1,s_1)$ exists in this case. This completes the base case.

For the induction, assume the statement for any $r_0,s_0,r_1,s_1\in \mathbb{N}$ and any $(i',j')$ with $\min\{i',j'\} < k$ for some $k\geq 1$. Consider the case of $(i,j)$ such that $\min\{i,j\} = k$. Assume w.l.o.g. that $i = \min\{i,j\} \geq 1$. We now proceed by induction on $t = r_0+s_0+r_1+s_1$. 

The base case of the induction is when $\min\{r_0,s_0,r_1,s_1\} =0$, which is trivial as $i,j\geq 1$. For the induction case, assume that $\min\{r_0,s_0,r_1,s_1\} \geq 1$ and we have the statement for smaller values of $t$. W.l.o.g. assume that $r_0 = \min\{r_0,s_0,r_1,s_1\} \geq 1$.

By the induction hypotheses, we know the existence of $$m_1 = \max\{m_R(i,j;r_0-1,s_0,r_1,s_1),m_R(i,j;r_0,s_0,r_1-1,s_1)\}$$ and $m_2 = m_R(i-1,j;m_1,m_1,m_1,m_1)$. We claim that $m_R(i,j;r_0,s_0,r_1,s_1) = m_2 + 1$ has the required properties. 

To see this, consider any $(i,j)$-colouring $c$ of $(I,J)$ with $|I| = |J| \geq m_2 + 1$. Fix an arbitrary $a\in I$ and $b\in J$. Note that for $I_1 = I\setminus \{a\}$ and $J_1 = J\setminus \{b\}$, we obtain a $(i-1,j)$ colouring $c_a$ of $(I_1,J_1)$ by setting $c_a(A,B) = c(A\cup \{a\},B)$. Since $|I_1| = |J_1|\geq m_2$, we know that there exist $I_2\subseteq I_1$ and $J_2\subseteq J_1$ of size $m_1$ each such that the restriction of $c_a$ to $\binom{I_2}{i-1}\times \binom{J_2}{j}$ is a constant. Equivalently, there is an $\alpha \in \F_2$ such that for each $A\in \binom{I_2}{i-1}$ and $B\in \binom{J_2}{j}$, we have 
$c(A\cup \{a\},B) = \alpha.$

Assume $\alpha = 0$. Now, consider the restriction $c_2$ of $c$ to $\binom{I_2}{i}\times \binom{J_2}{j}$. Since $|I_2| = |J_2| = m_1 \geq m_R(i,j;r_0-1,s_0,r_1,s_1)$, we see that there exist $I''\subseteq I_2$ and $J''\subseteq J_2$ satisfying one of the following.
\begin{itemize}
\item $|I''| = r_0-1$ and $|J''| = s_0$, and the restriction $c''$ of $c$ to $\binom{I''}{i}\times \binom{J'}{j}$ is the constant $0$ function.
\item $|I''| = r_1$ and $|J''| = s_1$, and the restriction $c''$ of $c$ to $\binom{I''}{i}\times \binom{J'}{j}$ is the constant $1$ function.
\end{itemize}

In the former case, we can take $I' = I''\cup \{a\}$ and $J' = J''$ to prove the inductive case. Note that since $c_a$ is the constant $0$ function on $\binom{I''}{i-1}\times \binom{J'}{j}$, the restriction $c'$ of $c$ to $\binom{I'}{i}\times \binom{J'}{j}$ is also the constant $0$ function.

In the latter case, we just take $I' = I''$ and $J' = J''$, since we are guaranteed that the restriction of $c$ to $\binom{I'}{i}\times \binom{J'}{j}$ is the constant $1$ function.

In the case that $\alpha = 1$, we repeat the same proof except that we use the fact that $m_1 \geq m_R(i,j;r_0,s_0,r_1-1,s_1)$ to prove that there exist $I''\subseteq I_2$ and $J''\subseteq J_2$ satisfying one of the following.
\begin{itemize}
\item $|I''| = r_0$ and $|J''| = s_0$, and the restriction $c''$ of $c$ to $\binom{I''}{i}\times \binom{J'}{j}$ is the constant $0$ function.
\item $|I''| = r_1-1$ and $|J''| = s_1$, and the restriction $c''$ of $c$ to $\binom{I''}{i}\times \binom{J'}{j}$ is the constant $1$ function.
\end{itemize}

This proves the inductive case, and hence completes the proof.
\end{proof}

\end{document}